\documentclass[11pt,english]{article}
\usepackage{pst-node}
\usepackage{pst-coil}
\usepackage[english]{babel}
\usepackage{multicol}
\usepackage[margin=1in]{geometry}
\usepackage{hyperref}
\usepackage[maxbibnames=99]{biblatex}
\usepackage{blindtext}
\hypersetup{
     colorlinks=true,
     linkcolor=blue,
     citecolor=blue,
 }
\usepackage{bm,xfrac,amsmath,amsthm,amssymb,soul, comment, xspace, mathtools}
\allowdisplaybreaks

\usepackage{cleveref}
\usepackage{tcolorbox}
\usepackage{enumitem}
\usepackage{csquotes}

\usepackage{booktabs}

\setlength{\parskip}{0.75ex}
\usepackage[T1]{fontenc} 
\usepackage{footnote,dsfont}
\usepackage{nicefrac,bm}
\usepackage[ruled,vlined,linesnumbered]{algorithm2e}
\usepackage{caption}
\usepackage{thmtools, float}
\usepackage{thm-restate}
\usepackage{graphicx}
\usepackage{pifont,multirow}
\usepackage{textcomp}
\usepackage{lmodern}
\usepackage{bold-extra}
\usepackage{multicol}

%% MACRO FOR CONF V. FULL VERSION PREAMBLE
 % change this to 0 if full
\usepackage{ifthen}

% \newcommand{\ignore}[1]{}

% % \full is for things in full version, \conf otherwise
% \ifthenelse{\equal{\confversion}{1}}
% {
% 	\newcommand{\conf}[1]{#1}
% }
% {
% 	\newcommand{\conf}[1]{\ignore{#1}}
% }
% \ifthenelse{\equal{\confversion}{0}}
% {
% 	\newcommand{\full}[1]{#1}
% }
% {
% 	\newcommand{\full}[1]{\ignore{#1}}
% }

\bibliography{references}

\title{Covering a Few Submodular Constraints and Applications \thanks{Dept.\ of Computer Science, University of Illinois at Urbana-Champaign, Urbana, IL 61801, USA. Email: {\tt \{tbajpai, chekuri, poojark2\}@illinois.edu}. Supported in part by NSF grant CCF-2402667.}
}

\author{
Tanvi Bajpai
%\thanks{University of Illinois at Urbana-Champaign}\\ \texttt{tbajpai2@illinois.edu} 
\and
Chandra Chekuri
%\thanks{University of Illinois at Urbana-Champaign.}\\\texttt{chekuri@illinois.edu} 
\and
Pooja Kulkarni
%\thanks{University of Illinois at Urbana-Champaign.}\\ \texttt{poojark2@illinois.edu}
}

\theoremstyle{plain}
\newtheorem{remark}{Remark}[section]

\newtheorem{lemma}{Lemma}[section]
\newtheorem*{lemma*}{Lemma}

\newtheorem{theorem}{Theorem}[section]

\newtheorem{claim}{Claim}[section]
\newtheorem*{claim*}{Claim}

\newtheorem{definition}{Definition}[section]

\newcommand{\A}{\mathcal{A}}
\newcommand{\F}{\mathcal{F}}

\newcommand{\Sets}{{\mathcal S}}
\newcommand{\Universe}{{\mathcal U}}

\newcommand{\I}{{\mathcal I}}
\newcommand{\J}{{\mathcal{J}}}

\newcommand{\lipc}{\ell}

\newcommand{\vecx}{\mathbf{x}}
\newcommand{\vecz}{\mathbf{z}}

\newcommand{\OPT}{{\sf OPT}}
\newcommand{\LP}{{\sf LP}}

\newcommand{\CCF}{{\sf CCF}}
\newcommand{\IP}{{\sf IP}}

\newcommand{\MSC}{{\sf MSC}}
\newcommand{\MBC}{{\sf MBC}}
\newcommand{\singleMSC}{\textsc{SingleRoundMSubCC}}

\newcommand{\Rset}{\mathbb{R}}
\newcommand{\Nset}{\mathbb{N}}
\newcommand{\scaleccf}{\left(\frac{e}{e-1}\right)\left( \frac{1}{1-\epsilon} \right)}

\newcommand{\E}{\mathbb{E}}

\newcommand{\FLMO}{{\sf FLMO}}
\newcommand{\FLRMO}{{\sf FLRMO}}

\newcommand{\keylemma}{\text{Rounding Lemma}} % macro for key lemma name; to be changed later
\newcommand{\h}{{high}}
\renewcommand{\l}{{low}}

\newcommand{\G}{\mathcal{G}}

\newcommand{\eps}{{\epsilon}}

%\newcommand{\RK}[1]{} {\color{red} {Rucha: #1}}
%\newcommand{\Ruta}[1]{} {\color{blue} {Ruta: #1}}
%\newcommand{\PK}[1]{} {\color{cyan}{Pooja: #1}}
%\newcommand{\Chandra}[1]{} {\color{magenta}{Chandra: #1}}

%\SetAlgoCaptionSeparator{.}

\let\oldnl\nl% Store \nl in \oldnl
\newcommand{\nonl}{\renewcommand{\nl}{\let\nl\oldnl}}% Remove line number for one line

\newcommand{\note}[1]{{\color{red}\bf [#1]}}
\renewcommand{\nonl}{\renewcommand{\nl}{\let\nl\oldnl}}% Remove line number for one line
\long\def\symbolfootnote[#1]#2{\begingroup%
\def\thefootnote{\fnsymbol{footnote}}\footnote[#1]{#2}\endgroup}

 % Adjust the value to increase/decrease the spacing
% Did this to make the array equations better spaced out bc thats what chatGPT told me to do

 % yes i'm ripping off chandra

\newcommand{\mypara}[1]{\medskip \noindent {\bf #1}}

\usepackage{titlesec}
\titlespacing{\section}{0pt}{*0.8}{*0.8}  % Adjusts spacing for \section
\titlespacing{\subsection}{0pt}{*0.7}{*0.7} % Adjusts spacing for \subsection
\titlespacing{\subsubsection}{0pt}{*0.6}{*0.4} % Adjusts spacing for \subsubsection

\begin{document}

\Crefname{algorithm}{Algorithm}{Algorithms}
\crefname{algorithm}{algorithm}{algorithms}

\date{}
\maketitle

\begin{abstract}
  We consider the problem of covering multiple submodular constraints.
  Given a finite ground set $N$, a cost function
  $c: N \rightarrow \mathbb{R}_+$, $r$ monotone submodular functions
  $f_1,f_2,\ldots,f_r$ over $N$ and requirements $b_1,b_2,\ldots,b_r$
  the goal is to find a minimum cost subset $S \subseteq N$ such
  that $f_i(S) \ge b_i$ for $1 \le i \le r$. When $r=1$ this is the
  well-known Submodular Set Cover problem. Previous work \cite{chekuri2022covering} considered
  the setting when $r$ is large and developed bi-criteria approximation algorithms, and approximation algorithms for the important special case when
  each $f_i$ is a weighted coverage function. These are fairly general models and
  capture several concrete and interesting problems as special cases. The approximation ratios
  for these problem are at least $\Omega(\log r)$ which is unavoidable when $r$ is
  part of the input.  In this paper, motivated
  by some recent applications, we consider the problem when $r$ is a
  \emph{fixed constant} and obtain two main results. 
  When the $f_i$ are weighted coverage functions from a deletion-closed set system we obtain a
  $(1+\eps)(\frac{e}{e-1})(1+\beta)$-approximation where $\beta$ is the approximation ratio for the underlying set cover instances via the natural LP. Second, for covering multiple submodular constraints we obtain a randomized bi-criteria approximation
  algorithm that for any given integer $\alpha \ge 1$ outputs a set $S$ such that $f_i(S) \ge (1-1/e^\alpha -\eps)b_i$ for each $i \in [r]$
  and $\E[c(S)] \le (1+\eps)\alpha \cdot \OPT$.
  These results show that one can obtain nearly as good an approximation for any fixed $r$ as what one would achieve for $r=1$.
  We also demonstrate applications of our results to implicit covering problems such as fair facility location.
\end{abstract}

\pagenumbering{arabic}

% \full{
% \input{ICALP/1_introduction}
% \input{ICALP/2_preliminaries}
% \input{ICALP/3_framework}
% \input{ICALP/4_MSC}
% \input{ICALP/5_CCF}
% %\input{ICALP/6_applications}
% }

% \conf{
\section{Introduction}
\label{sec:intro}
Covering problems are ubiquitous in algorithms and combinatorial optimization, forming the basis for a wide range of applications and connections.
Among the most well-known covering problems are {Vertex Cover} ({\sf VC}) and {Set Cover} ({\sf SC}). In {\sf SC} the input consists of a universe $\Universe$ of $n$ elements and a family $\Sets$ of subsets of $\Universe$. The objective is to find a minimum cardinality sub-collection $\Sets' \subset \Sets$ such that the union of the subsets in $\Sets'$ is (i.e. \emph{covers}) $\Universe$. {\sf VC} is a special case of {\sf SC} where $\Universe$ corresponds to the edges of a given graph $G=(V,E)$, and the sets correspond to vertices of $G$. In the cost versions of these problems, each set or vertex is assigned a non-negative cost, and the objective is to find a covering sub-collection with the minimum total cost. A significant generalization of {\sf SC} is the {Submodular Set Cover} ({SubmodSC}) problem where the input includes a normalized monotone submodular function $f:2^N \rightarrow \mathbb{Z}_+$, and the goal is to find a min-cost subset $S$ of $N$ such that $f(S) = f(N)$. Recall that a real-valued set function $f$ is submodular iff
$f(A) + f(B) \ge f(A \cup B) + f(A \cap B)$.

The aforementioned covering problems are known to be NP-Hard, but classical approximation algorithms offer strong guarantees. For {\sf SC}, the well-known Greedy algorithm proposed by Johnson~\cite{johnson1973approximation} achieves a $(1+\ln n)$-approximation and has also been shown to achieve a  $(1+ \ln (\max_i f(i))$-approximation for {\sf SubmodSC} by Wolsey \cite{wolsey1982analysis}. These results are essentially tight, assuming $P\neq NP$ \cite{feige1998threshold}. Nevertheless, {\sf VC} admits a $2$-approximation, and many other special cases of {\sf SC} and {\sf SubmodSC} also admit constant-factor approximations. Moreover, settling for \emph{bi-criteria} approximations allows for additional flexibility and improvements. For {\sf SubmodSC}, a slight modification to the standard Greedy algorithm yields the following tradeoff: for any integer $\alpha \ge 1$, we may obtain a set $S \subseteq N$ such that $f(S) \ge (1-\frac{1}{e^\alpha})f(N)$ and $c(S) \le \alpha \OPT$. % Notably, the modified algorithm retains a polynomial runtime that is independent of $\alpha$ \note{added; check/cite}.

Another related covering problem of importance is the {Partial Set Cover} ({\sf PartialSC}) problem, in which the input includes an additional integer $b$, and the goal is to find a minimum-cost collection of sets that cover \emph{at least} $b$ elements. A similar formulation extends to the submodular setting where the goal is to find a set $S$ such that $f(S) \ge b$. Partial covering problems are particularly useful as they naturally model outliers for various settings, and approximation algorithms for these problems have been well studied. It is straightforward to show that {\sf PartialSC} is equivalent to {\sf SC} in terms of approximation. In the submodular case, this equivalence is even clearer: we can consider the truncated submodular function $f_b$ where $f_b(S) = \min\{f(S), b\}$. In contrast, understanding the approximability of {Partial Vertex Cover} ({\sf PartialVC}) is more nuanced. While a $2$-approximation for {\sf PartialVC} is known \cite{bshouty1998massaging}, it is not straight forward to see this.

%\mypara{Recent applications in colorful and fair covering:} 
In recent years, emerging applications and connections, particularly to fairness, have sparked interest in covering and clustering problems involving multiple groups or colors for elements. The goal in these problems is to ensure that some specified number of elements from each group or color are covered. The partial covering problems discussed above are a special case where the number of groups is one. %\note{Keep one of the following examples. Chandra - skip the search example, it reads as a proforma to me.} 
In the {Colorful Vertex Cover} problem ({\sf ColorVC}), multiple partial covering constraints are imposed. The input consists of a graph $G=(V,E)$ and $r$ subsets of edges $E_1,E_2,\ldots,E_r$ where $E_i \subseteq E$. Each subset represents the edges of color $i$, and an edge can have multiple colors. Each color $i$ has requirement $b_i \le \vert E_i \vert$ and the objective is to find a minimum-cost set of vertices that covers at least $b_i$ edges from each $E_i$. 
%This problem can be viewed as having multiple partial vertex cover constraints.  For instance, consider the problem of diversity in search. Suppose we search for ``computer scientists'' in a set of images. We want the output results to show computer scientists with diversity across race, religion, geography, gender and other demographic factors. Typically these groups are much smaller in comparison to the total number of images in the database and can be considered a constant. This problem can be generalized as a partial covering problem with constraints across multiple groups. 
Bera et al.~\cite{bera2014approximation} were the first to consider (a generalization of) {\sf ColorVC}, obtaining an  $O(\log r)$-approximation. This result is essentially tight when $r$ is large and part of the input, and one can show that {\sf SC} reduces to {\sf ColorVC} (see \cite{bera2014approximation}). More recently,
Bandyapadhyay et al.~\cite{bandyapadhyay2021fair} considered {\sf ColorVC} when $r$ is a \emph{fixed constant} and achieved a $(2+\eps)$-approximation for any fixed $\eps > 0$. 

Chekuri et al.~\cite{chekuri2022covering} explored a broader class of problems involving the covering of multiple (submodular) functions and various special cases. Their general framework not only recovers the $O(\log r)$-approximation from  \cite{bera2014approximation} but also provides results for a variety of geometric set cover problems, and combinatorial optimization problems such as facility location and clustering. However, while powerful and general, their framework is designed for cases where $r$ (i.e., the number of constraints) is part of the input. As a result, it is inherently limited to $\Omega(\log r)$-approximation due to above mentioned reduction from {\sf SC} to {\sf ColorVC}. In a recent work,  Suriyanarayana et al.~\cite{suriyanarayana2024improved} considered the Joint Replenishment Problem ({\sf JRP}), a fundamental problem in supply chain optimization. {\sf JRP} and its variants can be viewed as a covering problems where the objective is to satisfy demands over a time horizon for multiple items using orders that incur joint costs; readers can refer to \cite{suriyanarayana2024improved} for the formal details of the {\sf JRP} problem. While constant-factor approximations for {\sf JRP} are well-established \cite{bienkowski2015approximation}, Suriyanarayana et al.~\cite{suriyanarayana2024improved} introduced a colorful version of {\sf JRP} ({\sf CJRP}), similar in spirit to {\sf ColorVC}. In {\sf CJRP}, demands are grouped by color, and the goal is to satisfy at least $b_i$ demands from each color class. Chekuri et al.~\cite{chekuri2022covering}'s framework can yield an $O(\log r)$-approximation for {\sf CJRP}; however, \cite{suriyanarayana2024improved}
focused on the setting where $r$ is a fixed constant. They developed an intricate algorithm which achieves a $(2.86+\eps)$-approximation for any fixed $\eps > 0$. Note that the approximation ratio does not depend on $r$ but the running time is exponential in $r$. There are several settings in approximation for covering and packing problems where one has a fixed number of constraints or objectives and one obtains similar trade offs. 

%\paragraph{Covering a constant number of submodular constraints:} 
These recent developments inspire us to explore whether the general framework in \cite{chekuri2022covering}, which applies to a wide variety of covering problems, can be adapted to the setting of a fixed number of submodular covering constraints. The goal is to removed the dependence of the approximation ratio on $r$. Specifically, we consider two such settings: (1) The Multiple Submodular Covering ($\MSC$) problem and (2) Colorful Set Cover and a generalization.

\paragraph{Multiple Submodular Covering ($\MSC$).} The input to the $\MSC$ problem consists of $r$ monotone submodular functions $f_i:2^N \rightarrow \mathbb{Z}_+$ over a finite ground set $N$ (provided as value oracles), a non-negative cost function $c: N \rightarrow \mathbb{R}_+$, and non-negative integer requirements $b_1,b_2,\ldots,b_r$. The objective is to find a minimum-cost set $S \subseteq N$ such that $f_i(S) \ge b_i$ for each $i \in [r]$. Note that when $r=1$, this corresponds to well-known Submodular Covering problem (SubmodCover) \cite{wolsey1982analysis}. The Greedy algorithm (modified slightly) yields the following trade off for SubmodCover: for any integer $\alpha \ge 1$ the algorithm outputs a set $S$ such that $f(S) \ge (1-1/e^\alpha)b$ and $c(S) \le \alpha \OPT$. We note that this trade off is essentially tight for any fixed $\alpha$ via the hardness result for Set Cover \cite{feige1998threshold}.
It is easy to reduce multiple submodular covering constraints to a single submodular covering constraint \cite{Har-PeledJ23,chekuri2022covering}, but in doing so one is limited to a logarithmic approximation for the cost to cover all constraints --- it is not feasible to distinguish the individual constraints via this approach.
In contrast,
\textcite{chekuri2022covering} obtained the following bi-criteria approximation result for MSC: there is an efficient randomized algorithm that outputs a set $S \subseteq N$ such that $f(S) \ge (1-1/e-\eps)b_i$ for each $i \in [r]$ and $E[c(S)] \le O(\frac{1}{\eps} \log r)$. In this paper we ask whether it is possible to achieve bi-criteria bounds for MSC instances with a fixed number of constraints that match those obtainable for single submodular constraint. We prove the following theorem and its corollary which together affirmatively answers this (modulo a small dependence on $\eps$).

\begin{restatable}{theorem}{mainresultmsc}
%\begin{theorem}
\label{thm:main-result-msc}
There is a randomized polynomial-time algorithm that given an instance of the Multiple Submodular Covering Constraints problem with fixed $r$ and $\eps > 0$ outputs a set $S \subseteq N$ such that (i) $f(S) \ge (1-1/e-\eps)b_i$ for all $i \in [r]$ and (ii) $E[c(S)] \le (1+\eps)\OPT$ (where $\OPT$ is the cost of the optimal solution to the instance). 

%For an instance of the Multiple Submodular Covering Constraints problem with a constant number of submodular constraints (i.e. $r \in O(1)$) and for $\eps > 0$ there is a randomized polynomial-time algorithm that outputs a set 

%\end{theorem}
\end{restatable}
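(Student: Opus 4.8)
The plan is to reuse the fractional relaxation underlying the $O(\tfrac1\eps\log r)$ bound of \textcite{chekuri2022covering} and to replace their rounding step by one that exploits $r$ being a constant. At a high level: (1) enumerate a constant-size ``core'' $H\subseteq S^{*}$ of an optimal solution $S^{*}$, and argue that after adding $H$ and deleting a few bad elements, the residual instance consists of monotone submodular functions all of whose marginals are tiny compared to the (residual) requirements; (2) find a fractional point $x$ of cost $\le\OPT-c(H)$ meeting the residual requirements up to the $(1-1/e)$ factor; (3) round $x$ by independent sampling, and use a lower-tail concentration inequality for submodular functions together with a union bound over the $r$ constraints to show that a single sample is good with constant probability.

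For the core, first replace each $f_i$ by its truncation $\bar f_i=\min\{f_i,b_i\}$, which is still monotone submodular. We seek $H$ such that, for every $i$ and every $e\in S^{*}\setminus H$, the marginal $\bar f_i(e\mid H)$ is at most $\eps'' b_i$ for a suitable constant $\eps''=\eps''(\eps,r)$. Such an $H$ of size $O(r/\eps'')$ exists and can be built greedily: for each $i$ in turn, repeatedly add to $H$ any $S^{*}$-element whose current marginal w.r.t.\ $\bar f_i$ exceeds $\eps'' b_i$; since $\bar f_i\le b_i$ this happens at most $1/\eps''$ times per constraint, and marginals only shrink when later elements are added. We do not know $S^{*}$, but $|H|=O(r/\eps'')$ is a constant for fixed $r,\eps$, so we can enumerate all $n^{O(r/\eps'')}$ candidate subsets in polynomial time; one of them is the true core, and we also enumerate $\OPT$ up to a $(1+\eps)$ factor. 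Given the correct guess, add $H$ to the solution and \emph{delete} every element $e$ with $\bar f_i(e\mid H)>\eps'' b_i$ for some $i$ (or with $c(e)>\eps''\OPT$); this is safe since $S^{*}\setminus H$ survives. Put $g_i(T):=\bar f_i(H\cup T)-\bar f_i(H)$ on the surviving ground set; each $g_i$ is monotone submodular, bounded by $b_i'':=b_i-f_i(H)$, with all marginals at most $\eps'' b_i$. The key dichotomy: if $f_i(H)\ge(1-1/e-\eps)b_i$ then constraint $i$ is already met by $H$ and can be dropped; otherwise $b_i''>(1/e+\eps)b_i$, so the marginals of $g_i$ are small even \emph{relative to $b_i''$}, namely at most $\tfrac{\eps''}{1/e+\eps}\,b_i''$.

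On the residual instance we obtain a fractional $x\in[0,1]^{N'}$ with $c\cdot x\le\OPT-c(H)$ and $G_i(x)\ge(1-1/e)b_i''$ for every surviving $i$ (with $G_i$ the multilinear extension of $g_i$); this is exactly the continuous-greedy/relaxation step already present in \textcite{chekuri2022covering}, and feasibility is witnessed by $\mathbf{1}_{S^{*}\setminus H}$, for which $g_i(S^{*}\setminus H)\ge b_i''$ and $c(S^{*}\setminus H)=\OPT-c(H)$. Now sample $S_0$ by including each $e$ independently with probability $x_e$ and output $S=H\cup S_0$. Then $\E[g_i(S_0)]=G_i(x)\ge(1-1/e)b_i''$, and since every marginal of $g_i$ is at most $\eps_2 b_i''$ with $\eps_2=\tfrac{\eps''}{1/e+\eps}$, the submodular lower-tail bound under independent rounding gives $\Pr[g_i(S_0)<(1-\eps)(1-1/e)b_i'']\le\exp\!\big(-\Omega(\eps^{2}/\eps_2)\big)$, which is $\le\tfrac1{3r}$ once $\eps_2$ (hence $\eps''$) is a small enough constant; a Chernoff bound likewise controls $\Pr[c(S_0)>(1+\eps)(\OPT-c(H))]$ using that surviving costs are at most $\eps''\OPT$. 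A union bound over the $\le r$ coverage events and the one cost event shows that with probability at least $\tfrac12$ the sampled $S$ simultaneously satisfies $f_i(S)\ge f_i(H)+(1-\eps)(1-1/e)b_i''\ge(1-1/e-\eps)b_i$ for all $i$ and $c(S)\le c(H)+(1+\eps)(\OPT-c(H))\le(1+\eps)\OPT$. Repeating the sampling until such an $S$ appears (expected $O(1)$ trials) and outputting the cheapest valid set over all guesses proves the theorem, up to rescaling $\eps$ to absorb the accumulated factors.

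The main obstacle is engineering $H$ to be simultaneously (a) of size bounded by a constant independent of the input and (b) strong enough to make \emph{all} marginals of \emph{all} residual functions small \emph{relative to the residual requirements}. The truncation $\bar f_i=\min\{f_i,b_i\}$ is what caps $|H|$, and the dichotomy ``either $H$ already covers constraint $i$, or $b_i''$ is a constant fraction of $b_i$'' is exactly what prevents a requirement from shrinking so much that the absolute marginal bound $\eps'' b_i$ becomes useless. Invoking the correct submodular lower-tail concentration inequality in the residual instance, and folding the cost threshold into the ``success'' event so that the output has cost $(1+\eps)\OPT$ rather than a constant factor more, are the remaining technical points.
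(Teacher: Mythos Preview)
Your approach is close to the paper's and would work with one fix, but as written there is a gap. You build $H$ by greedily picking $S^{*}$-elements with large marginals, then delete every $e$ with $c(e)>\eps''\OPT$, asserting ``this is safe since $S^{*}\setminus H$ survives.'' It is not: $H$ was chosen for marginals, not costs, so $S^{*}\setminus H$ may still contain elements of cost exceeding $\eps''\OPT$, which your deletion removes, destroying feasibility of the residual relaxation. The repair is easy---there are at most $1/\eps''$ elements of $S^{*}$ with cost above $\eps''\OPT$, so fold those into $H$ as well; marginals only shrink when $H$ grows, and $|H|$ remains $O(r/\eps'')$---but the step as stated is unjustified. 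A smaller point: your Chernoff target ``$c(S_0)>(1+\eps)(\OPT-c(H))$'' is not controllable when $\OPT-c(H)$ is tiny (the multiplicative bound degenerates as $\mu\to 0$); what you actually need, and what Chernoff with $c_{\max}\le\eps''\OPT$ does give, is the additive deviation $\Pr[c(S_0)>\mu+\eps\,\OPT]\le\exp(-\Omega(\eps^{2}/\eps''))$, which still yields $c(S)\le(1+\eps)\OPT$. Finally, ``repeat sampling until such an $S$ appears'' must be capped (say at $O(\log n)$ trials) since wrong guesses of $H$ may never succeed.

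The paper takes a somewhat different route that sidesteps your dichotomy. It guesses only the high-\emph{cost} elements of $S^{*}$ in Stage~1 and obtains the Lipschitz property later, inside the Rounding Lemma, by greedily adding \emph{any} element (not necessarily from $S^{*}$) whose marginal exceeds $\ell\cdot(b_i-f_i(S))$; because this threshold is relative to the current residual requirement, no ``either $H$ already covers $i$, or $b_i''$ is a constant fraction of $b_i$'' case split is needed. Those greedy picks cost an additive $O_{r,\eps}(1)\cdot c_{\max}$, which is absorbed because $c_{\max}$ is already small after the cost-based pruning. Failures are handled not by resampling but by a one-shot knapsack-greedy fix per violated constraint (Stage~4), and the guarantee is on expected cost rather than a high-probability event. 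Your preprocessing-based route is arguably more elementary once patched, while the paper's adaptive greedy is more modular and reusable for the $\CCF$ setting.
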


\begin{restatable}{corollary}{mainresultmscgen}
%\begin{corollary}
\label{coll:main-result-msc}
    For any fixed integer $\alpha > 1$ and $\eps$, Algorithm \ref{alg:msc-single} (from \Cref{thm:main-result-msc}) can be used to construct a set $S \subseteq N$ such that $f(S) \ge (1-1/e^\alpha-\eps)b_i$ for all $i \in [r]$ and $E[c(S)] \le \alpha(1+\eps)\OPT$. 
\end{restatable}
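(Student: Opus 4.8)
The plan is to \emph{iterate} the algorithm of \Cref{thm:main-result-msc} a total of $\alpha$ times on successively ``contracted'' (residual) instances, in the same spirit as boosting a single submodular cover constraint by repeated greedy rounds. Concretely, set $\eps' := \eps/2^{\alpha}$ (which is $\Theta(\eps)$ since $\alpha$ is a fixed constant), start with $S_0 = \emptyset$, and for $j = 1, \ldots, \alpha$ define the residual instance on ground set $N \setminus S_{j-1}$ with monotone submodular functions $f_i^{(j)}(T) := \min\{\, f_i(S_{j-1} \cup T) - f_i(S_{j-1}),\ b_i - f_i(S_{j-1}) \,\}$ and integer requirements $b_i^{(j)} := b_i - f_i(S_{j-1})$ (dropping constraint $i$ if this is already $\le 0$). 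Each $f_i^{(j)}$ is a nonnegative, monotone, integer-valued submodular function (contraction and truncation preserve these properties), so the residual instance is a legitimate $\MSC$ input; run the algorithm of \Cref{thm:main-result-msc} with parameter $\eps'$ on it, let $T_j$ be its output, and set $S_j := S_{j-1} \cup T_j$. Output $S := S_\alpha$.

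The first step is to bound the cost. The key observation is that a fixed optimal solution $O$ to the original instance remains feasible for every residual instance: by monotonicity $f_i^{(j)}(O \setminus S_{j-1}) = \min\{f_i(S_{j-1} \cup O) - f_i(S_{j-1}),\ b_i^{(j)}\} \ge \min\{b_i - f_i(S_{j-1}),\ b_i^{(j)}\} = b_i^{(j)}$. Hence the optimum cost of the $j$-th residual instance is at most $\OPT$, regardless of the (random) choice of $S_{j-1}$. Conditioning on $S_{j-1}$ and applying guarantee (ii) of \Cref{thm:main-result-msc}, $\E[c(T_j) \mid S_{j-1}] \le (1+\eps')\OPT$; summing over $j$ and using linearity of expectation gives $\E[c(S)] \le \alpha(1+\eps')\OPT \le \alpha(1+\eps)\OPT$.

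The second step is the coverage bound, which we get from a deterministic contraction of the residual gap. Let $d^{(j)}_i := b_i - f_i(S_j)$ be the gap for constraint $i$ after round $j$, so $d^{(0)}_i = b_i$. Guarantee (i) of \Cref{thm:main-result-msc} applied to round $j$ gives $f_i^{(j)}(T_j) \ge (1 - 1/e - \eps')\, b_i^{(j)}$, i.e. $f_i(S_j) - f_i(S_{j-1}) \ge (1 - 1/e - \eps')(b_i - f_i(S_{j-1}))$ (the truncation only helps here). Rearranging, $d^{(j)}_i \le (1/e + \eps')\, d^{(j-1)}_i$, so after $\alpha$ rounds $d^{(\alpha)}_i \le (1/e + \eps')^{\alpha} b_i$. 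Expanding the binomial, $(1/e + \eps')^{\alpha} \le 1/e^{\alpha} + \eps' (2^{\alpha}-1) \le 1/e^{\alpha} + \eps$ by our choice of $\eps'$, which yields $f_i(S) = f_i(S_\alpha) \ge (1 - 1/e^{\alpha} - \eps)\, b_i$ for every $i \in [r]$. Since $\alpha$ is constant and $\eps' = \Theta(\eps)$, the total running time is polynomial.

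The routine points to check are that contracting and truncating a monotone integer-valued submodular function preserves those properties (standard) and that the $\eps$-bookkeeping compounds as claimed; the only mild subtlety is that the coverage guarantee of \Cref{thm:main-result-msc} must hold \emph{deterministically} (only the cost is in expectation), so that the gap recursion $d^{(j)}_i \le (1/e+\eps') d^{(j-1)}_i$ can be chained across all $\alpha$ rounds without incurring probabilistic loss — this is exactly the form in which \Cref{thm:main-result-msc} is stated, so no additional work is needed.
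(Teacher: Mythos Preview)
Your proposal is correct and follows essentially the same approach as the paper: iterate Algorithm~\ref{alg:msc-single} on residual instances $\alpha$ times, use that the original optimum remains feasible for each residual to bound cost, and contract the coverage gap geometrically. The only cosmetic differences are that the paper argues coverage by induction on $\alpha$ with $\eps' := \min(1-2/e,\eps)$, whereas you unroll the gap recursion directly and take $\eps' := \eps/2^\alpha$ with a binomial bound; both choices work and the underlying mechanism is identical.
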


\paragraph{Colorful Set Cover and generalizations.} MSC provides a general framework for covering, however the limitation of the generality is that we are limited to a bi-criteria trade off even for $r=1$. On the other hand a special cases such as ColorVC with fixed $r$ admits a constant factor approximation. Can we identify when this is possible? Indeed, prior work of Inamdar and Varadarajan \cite{InamdarV18} showed a general setting when positive results for SC can be extended to PartialSC ($r=1$).
We describe the setting first.   
A class of SC instances (i.e. set systems), $\mathcal{F}$, is called \emph{deletion-closed} if for any given instance in the class, removing a set or a point 
results in an instance that also belongs to $\mathcal{F}$. Many special cases of SC are naturally deletion-closed. For example, VC instances are deletion-closed as are many geometric SC instance (say covering points by disks in the plane).  Suppose one can obtain a $\beta$-approximation to SC on instances from $\mathcal{F}$
via the natural LP relaxation. For example, VC has $\beta = 2$ and there are many geometric settings where $\beta = O(1)$  or $o(\log n)$ \cite{chan2012weighted,InamdarV18}.
\textcite{InamdarV18} formalized the concept of deletion-closed set systems and obtained a $2(\beta+1)$-approximation for PartialSC on a set system from deletion-closed $\mathcal{F}$. Their framework gave a straightforward way to approximate geometric partial covering problems by building on existing results for SC. \textcite{chekuri2022covering} extended this framework to address deletion-closed instances of ColorSC problem (which is analogous to the previously discussed ColorVC). Formally, the input for ColorSC consists of a set system from $\mathcal{F}$, costs on the sets, and $r$ sets $\Universe_1,\ldots,\Universe_r$ where each $\mathcal{U}_i$ is a subset of the universe $\Universe$, and non-negative integer requirements $b_1,\ldots,b_r$. The objective is to find a minimum-cost collection of sets such that for each $i \in [r]$, at least $b_i$ elements from $\Universe_i$ are covered. In this setting, \textcite{chekuri2022covering} achieved two key results. First, when $r=1$, they improved the approximation bound from \cite{InamdarV18} to $\frac{e}{e-1}(\beta+1)$. 
Second, for general $r$ they obtained an $O(\beta + \log r)$-approximation which generalized \cite{bera2014approximation} for ColorVC.  Here we consider this same 
setting but when $r$ is a fixed constant and obtain the following result.

\begin{theorem}\label{thm:intro-ccf}
For any instance of ColorSC from a $\beta$-approximable deletion-closed set system and any fixed $\eps > 0$
there is a randomized polynomial-time algorithm that yields a $\left(\frac{e}{e-1}\right)(1 + \beta)(1 + \epsilon)$-approximate feasible solution.
\end{theorem}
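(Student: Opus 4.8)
The plan is to treat this ColorSC instance as the special case of covering multiple constraints in which each $f_i$ is the coverage function $f_i(\A)=\bigl|\,\Universe_i\cap\bigcup_{S\in\A}S\,\bigr|$ (the weighted case is identical), and to lift the $r=1$ arguments of \textcite{InamdarV18} and \textcite{chekuri2022covering} to a constant number of colors. The algorithm has four parts. \emph{(i)} An enumeration step that guesses and fixes the most expensive sets of the optimum, after which the residual instance is still deletion-closed and every set it can use is cheap. \emph{(ii)} Solving the natural coverage LP of the residual instance. \emph{(iii)} A heavy/light split of each $\Universe_i$ about a threshold $\theta$ slightly below $1-\tfrac1e$: the ``heavy'' elements (large fractional coverage) are covered by an honest Set Cover, handled within a factor $\beta$ because a scaled copy of the LP solution is a feasible fractional cover of them; the ``light'' elements are covered by randomized rounding of the LP, analysed through $1-e^{-z}\ge(1-\tfrac1e)z$ on $[0,1]$, which produces the $\tfrac{e}{e-1}$. \emph{(iv)} A boosting step ensuring that all $r$ requirements are met \emph{exactly and simultaneously} while the cost stays near $\tfrac{e}{e-1}(1+\beta)\,\LP$; this is the delicate part.

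\emph{Part (i).} Fix an optimal solution $\OPT^\star$ of cost $\OPT$ and let $t=\lceil r/\eps\rceil$. Guess the $t$ most expensive sets of $\OPT^\star$; since $r$ and $\eps$ are fixed there are only $|\Sets|^{O(r/\eps)}$ guesses, which is polynomial. For the correct guess, call the guessed collection $H$: add $H$ to the solution; delete from $\Sets$ every set of cost exceeding $\min_{S\in H}c(S)$ (the family stays deletion-closed, hence still $\beta$-approximable via its LP); and replace each $b_i$ by $b_i-f_i(H)\ge 0$. No deleted set belongs to $\OPT^\star$, so $\OPT^\star\setminus H$ certifies a residual optimum of cost $\OPT_{\mathrm{res}}\le\OPT-c(H)$, and (when $H\neq\OPT^\star$) every surviving set has cost at most $\OPT/t\le\tfrac\eps r\OPT$. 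Guessing this many sets ensures that in the residual instance no single set can contribute more than an $\tfrac\eps r$ fraction of $\OPT$ to the cost, which will be needed to keep the randomized cost concentrated in Part (iv).

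\emph{Parts (ii)--(iii).} For the residual instance solve the LP with variables $x_S\in[0,1]$, $y_{i,e}\in[0,1]$, objective $\min\sum_S c(S)x_S$, and constraints $y_{i,e}\le\sum_{S\ni e}x_S$, $\sum_{e\in\Universe_i}y_{i,e}\ge b_i$; let $(x^\star,y^\star)$ be optimal, of value $\LP\le\OPT_{\mathrm{res}}$. Set $\theta=\tfrac{e-1}{(1+\eps)e}$ and, for each color $i$, let $H_i=\{e\in\Universe_i:y^\star_{i,e}\ge\theta\}$. Each $e\in H_i$ has $\sum_{S\ni e}x^\star_S\ge\theta$, so $\tfrac1\theta x^\star$ is a feasible fractional Set Cover of $\bigcup_i H_i$ of cost $\tfrac1\theta\LP$; the $\beta$-approximation on the deletion-closed family returns an integral $\A_1\subseteq\Sets$ covering $\bigcup_i H_i$ with $c(\A_1)\le\tfrac\beta\theta\LP=(1+\eps)\tfrac{e}{e-1}\beta\,\LP$. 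For the light elements, sample each $S$ independently with probability $\min\{1,\tfrac1\theta x^\star_S\}$ to obtain $\A_2$, so $\E[c(\A_2)]\le\tfrac1\theta\LP=(1+\eps)\tfrac{e}{e-1}\LP$; a light $e$ has $\tfrac1\theta y^\star_{i,e}<1$, so it is covered with probability at least $1-e^{-\frac1\theta\sum_{S\ni e}x^\star_S}\ge 1-e^{-\frac1\theta y^\star_{i,e}}\ge(1-\tfrac1e)\tfrac1\theta\,y^\star_{i,e}=(1+\eps)\,y^\star_{i,e}$, using $(1-\tfrac1e)/\theta=1+\eps$. Since $\sum_{e\in\Universe_i}y^\star_{i,e}\ge b_i$ and the $y^\star$ variables are at most $1$, the light LP-mass of color $i$ is at least $b_i-|H_i|$, which upper-bounds the requirement of color $i$ still unmet after $\A_1$; hence the expected number of still-needed elements of $\Universe_i$ covered by $\A_2$ is at least a $(1+\eps)$ multiple of that shortfall, and in particular $\E[f_i(\A_1\cup\A_2)]\ge b_i$ for every $i$.

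\emph{Part (iv), and the main obstacle.} What remains --- and where I expect the genuine difficulty --- is to turn the in-expectation bound of the previous paragraph into a solution that meets all $r$ requirements \emph{exactly and at once}, without the cost exceeding $(1+\eps)\tfrac{e}{e-1}(1+\beta)\,\LP$. Two issues must be handled together. First, the coverage of a single color may fall below its requirement with only $1/\mathrm{poly}$ probability, and conditioning on the joint success event could then inflate $\E[c(\A_2)]$; this is controlled by replacing $\A_2$ with the union of $k$ independent down-scaled copies (each set sampled with probability $\min\{1,\tfrac1{k\theta}x^\star_S\}$ in each copy, which preserves the per-element coverage lower bounds and the expected cost since $1-(1-q)^k\le kq$), so that the coverage of each color becomes a sum of many small independent contributions and concentrates, while the per-set cost bound $\tfrac\eps r\OPT$ from Part (i) lets a Bernstein bound concentrate $c(\A_2)$. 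Second, colors whose residual requirement is too small for concentration to bite must be finished off separately --- e.g.\ by an additional small Set Cover patch, which is cheap precisely because after Part (i) every set costs at most $\tfrac\eps r\OPT$ --- and one must verify that the at most $r$ such patches together cost $O(\eps\OPT)$. Granting these (this is the crux, and it is where $r=O(1)$ and deletion-closedness are both essential), the union of $H$, $\A_1$, $\A_2$, and the patches is feasible for every color, with total cost at most $c(H)+(1+\eps)\tfrac{e}{e-1}(1+\beta)\LP+O(\eps\OPT)\le(1+O(\eps))\tfrac{e}{e-1}(1+\beta)\,\OPT$ (using $c(H)+\LP\le c(H)+\OPT_{\mathrm{res}}\le\OPT$); rescaling $\eps$ gives the stated ratio, and the algorithm outputs the cheapest feasible solution found over all guesses. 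The delicate points to get right are thus the two concentration arguments surviving the conditioning on a low-probability joint-feasibility event, and tracking constants so that $\theta$ and the sampling scale collapse everything to exactly $\tfrac{e}{e-1}(1+\beta)(1+\eps)$.
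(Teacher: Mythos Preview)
Your architecture---guess expensive sets, solve the LP, split heavy/light at a $(1-\tfrac1e)$-type threshold, cover heavy via the $\beta$-approximate Set Cover, round for the light part, then patch---matches the paper. The gap is in Part~(iv), and it is not the conditioning issue you flag but something more basic: your coverage concentration does not go through. The light-coverage function $f_i:2^{\Sets}\to\mathbb{Z}_{\ge0}$ is submodular in the \emph{sets}, and its Lipschitz constant is $\max_S|S\cap L_i|$, which can be as large as the residual requirement itself (a single cheap set may already cover every light element of color $i$). In that regime the lower-tail bound $e^{-\delta^2 F_i(x)/(2\ell)}$ from \Cref{thm:vondrak} is vacuous. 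The $k$-copies trick does not help: on the enlarged ground set $\Sets\times[k]$ the function is still the coverage of the union, and flipping the bit of a single $(S,j)$ can still change it by $|S\cap L_i|$; splitting the sampling probability does not shrink per-set influence on coverage. Your guess in Part~(i) bounds each residual set's \emph{cost} by $\tfrac\eps r\,\OPT$, but says nothing about its \emph{coverage contribution}, which is what governs concentration here.

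The paper closes exactly this hole with the \keylemma\ (\Cref{lem:additive-bound}, via \Cref{clm:lipschitz}): for each color, greedily add sets whose marginal exceeds an $\ell$-fraction of the current shortfall; after $\lceil\tfrac1\ell\ln\tfrac1\eps\rceil$ additions either the constraint is $(1-\eps)$-met or the residual function is $\ell$-Lipschitz \emph{relative to the residual requirement}, and now Vondr\'ak's bound yields failure probability at most $\eps/r$ per color. This Lipschitzification costs $r\lceil\tfrac1\ell\ln\tfrac1\eps\rceil$ extra sets with $\ell=\tfrac{\eps^2}{2\ln(r/\eps)}$, which is why Stage~1 guesses that many (plus $r$ more for the greedy fix's overshoot)---your $\lceil r/\eps\rceil$ is not enough. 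The paper also sidesteps the conditioning worry you raise: it never conditions on joint success, but always outputs the rounded collection and then applies a per-color greedy fix (\Cref{lem:greedy-ccf-single}) for any unmet constraint; that fix costs at most $\scaleccf\,\OPT\text{-}\LP+c_{\max}$ deterministically and is invoked with probability at most $\eps/r$, so linearity of expectation already bounds the total cost.
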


\textcite{chekuri2022covering} prove a more general result that applies to covering integer programs induced by a deletion-closed set systems (this is similar to the context considered by \textcite{bera2014approximation} for ColorVC). 
This generality is significant in applications such as facility location and clustering. We obtain the preceding result
also in this more general setting; we defer the formal description of the problem setting to Section~\ref{sec:prelim} and discuss the theorem in Sec~\ref{sec:ccf}.

\paragraph{Implications and Applications.}

At a high level, our two main theorems show that approximation results for covering problems with a single submodular constraint, i.e. where $r = 1$, can be extended 
to the case with a fixed number of constraints.  We believe that these results, in addition to being independently interesting, are broadly applicable and valuable due to the modeling power of submodularity and set covering. However, one should expect additional technical details in applying these results to certain applications such as facility location, clustering, {\sf JRP}, and others. In this paper, we prove our main technical results and lay down the general framework. We also demonstrate how the framework can be applied and adapted. We discuss these applications next.
%In particular, consider first the following immediate application.

\noindent
{\em Colorful covering problems:}
Recall that {\sf ColorVC} admits a $(2+\eps)$-approximation for fixed $r$ \cite{bandyapadhyay2021fair}. Now consider the problem of covering points in plane by disks with costs.
This well-studied geometric set cover problem admits an $O(1)$ approximation via the natural LP \cite{chan2012weighted,varadarajan2010weighted}. 
\cite{InamdarV18,chekuri2022covering} obtained a $\left(\frac{e}{e-1}\right)(1 + \beta)$ approximation for {\sf PartialSC} version of this problem ($r=1$)
and 
\cite{chekuri2022covering} obtained an $O(\beta + \log r)$-approximation for the colorful version of this problem (arbitrary $r$). In this paper we obtain a $\left(\frac{e}{e-1}\right)(1 + \beta)(1 + \epsilon)$ approximation for any fixed $r$. In effect all the {\sf PartialSC} problems considered in \cite{InamdarV18,chekuri2022covering} for various geometric settings admit essentially the same approximation for the colorful versions with fixed $r$. The colorful problems capture outliers and fairness constraints, and thus, as long as the number of such constraints is small, the approximation ratio is independent of $r$. Our bi-criteria approximation for MSC, in a similar vein, allows tight trade offs in cost vs coverage for any fixed number of constraints. See \cite{Har-PeledJ23,chekuri2022covering} for an application to splitting point sets by lines where a bi-criteria approximation is adequate.

\noindent
{\em Facility location with multiple outliers:} Facility location and clustering problems have long been central in both theory and practice, with a wide variety of objective functions studied in the literature. At a high level, the goal in facility-location problems is to select a subset of facilities and assign clients to them in a way that balances assignment cost with facility opening cost.
Several objectives and variants are studied in the literature.
Motivated by robustness considerations there is also extensive work on these problems when there are outliers. The goal in such problems is to solve the underlying clustering or facility location problem where certain number of points/clients can be omitted; equivalently the goal is to connect/cluster at least a given number of points.
These problems do not fall neatly in the framework of (partial) covering problems. Nevertheless it is possible to view them as \emph{implicit} covering problems; in some settings one needs to consider an exponential sized set system. Thus, although the problems cannot be directly cast in as $\MSC$ or $\CCF$, some of the ideas can be transferred with additional technical work. 
\cite{chekuri2022covering} used this perspective and additional ideas to obtain $O(\log r)$-approximation for two facility location problems with multiple outlier classes. In this paper, we
consider these two problems when $r$ is a fixed constant and obtain constant factor approximations. The first one involves facility location to minimize the objective of sum of radii; it is easier to adapt the ideas for $\CCF$ to this problem and we discuss this in \Cref{sec:radii}.

Our main new technical contribution is the second problem related to the outlier version of the well-known uncapacitated facility location problem ({\sf UCFL}). In {\sf UCFL} the input consists of a facility set $F$ and a client set $C$ that are in a metric space $(F \cup C,d)$. Each facility $i \in F$ has an opening cost $f_i \ge 0$ and connecting a client $j$ to a facility $i$ costs $d(i,j)$. The goal is to open a subset of the facilities $S \subseteq F$ to minimize $\sum_{i \in S}f_i + \sum_{j \in C} d(j,S)$ where $d(j,S)$ is the distance of $j$ to the closest open facility. This classical problem has seen extensive work in approximation algorithms; the current best approximation ratio is $1.488$ \cite{li20131}. Charikar et al. \cite{charikar2001algorithms} studied the outlier version of this problem (under the name robust facility location)
and obtained a $3$-approximation; in this version the goal is to connect at least some specified number $b \le |C|$ of clients, and this corresponds to having a single color class to cover. In \cite{chekuri2022covering}, the generalization of this to $r$ outlier classes was studied under the name Facility Location with Multiple Outliers ($\FLMO)$. In this paper we prove the following result.
\begin{restatable}{theorem}{mainresultflmo}
%\begin{theorem}

Let $\beta$ be the approximation ratio  for {\sf UCFL} via the natural LP relaxation. Then there is a randomized polynomial-time algorithm that given an instance of $\FLMO$ with fixed $r$ and $\eps > 0$ yields a $\frac{e}{e-1}(\beta + 1 + \eps)$-approximate solution.
\end{restatable}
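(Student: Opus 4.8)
The plan is to follow the implicit-covering strategy: view $\FLMO$ as a colorful set cover instance over an exponential-sized set system and then apply (the proof of) \Cref{thm:intro-ccf}. For each facility $i \in F$ and each radius value $\rho$, create a "star" set $S_{i,\rho}$ consisting of all clients within distance $\rho$ of $i$, with cost $f_i + \sum_{j : d(i,j) \le \rho} d(i,j)$ (so choosing the set means opening $i$ and paying to connect exactly the clients it covers at radius $\rho$). The ground set to be covered is $C$, partitioned (or rather, marked) by the $r$ outlier color classes $C_1,\ldots,C_r$, and we must cover at least $b_\ell$ clients of color $\ell$. This is exactly a ColorSC instance, except the set system is exponential; but {\sf UCFL} restricted to a subset of facilities and clients is still a {\sf UCFL} instance, so the relevant "set cover" is $\beta$-approximable via its natural LP, and crucially the class is \emph{deletion-closed} (deleting a facility or a client leaves a {\sf UCFL} instance). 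So the three ingredients the ColorSC machinery needs — deletion-closedness, an LP-relative $\beta$-approximation for the $r=1$ (single outlier class) base problem, and a way to solve the LP — are all present; one just has to check they can be realized without explicitly enumerating the stars.

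The key steps, in order: (1) Write the natural covering LP for $\FLMO$. This is essentially the standard {\sf UCFL} LP with connection variables $x_{ij}$, opening variables $y_i$, and for each color $\ell$ a "slack"/outlier accounting constraint $\sum_{j \in C_\ell} \sum_i x_{ij} \ge b_\ell$; it has polynomially many variables and constraints and is solvable directly (no separation oracle needed), so the exponential star formulation is only conceptual. (2) Solve the LP and use it, following the ColorSC argument of \cite{chekuri2022covering} adapted to fixed $r$, to "guess" for each color $\ell$ the contribution of the optimal solution and to set up a residual single-outlier-class instance; the fixed-$r$ enumeration (exponential in $r$, hence polynomial) replaces the $\log r$ loss. (3) Invoke the $r=1$ result: the robust/partial {\sf UCFL} problem has an LP-relative $\beta$-approximation (indeed via \cite{charikar2001algorithms} one gets a concrete constant, and via the natural LP the parameter is $\beta$), and plug it into the $\frac{e}{e-1}(1+\beta)$ template from \Cref{thm:intro-ccf}, absorbing the rounding slack into $(1+\eps)$ to get $\frac{e}{e-1}(\beta+1+\eps)$. (4) Verify feasibility: the union of selected stars connects at least $b_\ell$ clients of each color, and charge the opening-plus-connection cost against $\frac{e}{e-1}(\beta+1+\eps)\,\OPT$ using the LP bound and the randomized rounding expectation, exactly as in the proof of \Cref{thm:main-result-msc}/\Cref{thm:intro-ccf}.

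The main obstacle — and the reason this needs "additional technical work" beyond a black-box citation — is step (2)–(3): the reduction of $\FLMO$ to a \emph{clean} ColorSC instance is not literal, because the natural LP for $\FLMO$ is not identical to the LP of the abstract ColorSC problem over stars (the star costs bundle opening and connection costs in a radius-dependent way, and a fractional solution to the {\sf UCFL} LP does not decompose canonically into fractional stars). So one must re-run the ColorSC rounding argument in the {\sf UCFL}-LP language: show that an optimal fractional {\sf UCFL}-with-outliers solution can be massaged (filtering/clustering of fractionally-open facilities in the style of the classical {\sf UCFL} LP-rounding analyses, combined with the colorful partial-covering reduction) so that the $\beta$-approximate integral {\sf UCFL} rounding can be applied on the appropriate residual sub-instance while the outlier requirements for all $r$ colors are simultaneously met up to the $(1+\eps)$ slack. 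I expect the bookkeeping here — guaranteeing that connection costs of clients counted toward different color requirements are not double-charged, and that the per-color guessing only costs a $(1+\eps)$ factor — to be where the real effort lies; the final cost accounting and the $\frac{e}{e-1}$ factor then follow the same exponential-clock / randomized-rounding template used for the general ColorSC theorem.
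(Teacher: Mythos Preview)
Your high-level intuition---cast $\FLMO$ as an implicit $\CCF$ instance over stars and run the fixed-$r$ ColorSC machinery---is correct, but the concrete plan has real gaps and diverges from the paper's argument. First, your star definition is the wrong one for this problem: you take $S_{i,\rho}$ to be \emph{all} clients within distance $\rho$ of $i$, which is the $\FLRMO$ (sum-of-radii) formulation of \Cref{sec:radii}, not $\FLMO$. In {\sf UCFL} a facility may serve an arbitrary subset $S\subseteq C$, with star cost $f_i+\sum_{j\in S}d(i,j)$; the optimal set served by $i$ need not be a ball (a closer client may be served by a different open facility), so your radius-based system does not model the problem. The correct star system is genuinely exponential. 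Second, your Steps (2)--(3) propose guessing per-color contributions and reducing to the $r=1$ robust {\sf UCFL} problem. The paper does neither: it keeps all $r$ constraints throughout and runs the full four-stage $\CCF$ framework (heavy/shallow split, rounding via \Cref{lem:additive-bound}, greedy fix) directly on the star instance. Your suggestion to use the polynomial {\sf UCFL} LP with per-color slack constraints is attractive, but you give no mechanism for applying the Rounding Lemma to it: that lemma needs discrete ``sets'' whose greedy selection buys Lipschitzness, and neither individual facilities nor $(i,j)$ edges play this role cleanly.

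What the paper actually does for the two obstacles you correctly flag: (i) the exponential primal LP over stars is solved via the ellipsoid method on its dual, using a knapsack-based separation oracle that is polynomial once all costs are discretized in a preprocessing Stage~0; this yields a polynomial-support fractional solution on which the $\CCF$ rounding can be run verbatim. (ii) Since whole stars cannot be guessed in Stage~1, the paper guesses \emph{partial} stars: for each of the $T=\lceil L/\eps\rceil$ most expensive optimal stars it guesses the facility, its $L$ farthest assigned clients, and the star's total (discretized) cost. This partial information is exactly enough to bound the cost of the $L$ extra stars picked by the Rounding Lemma and the greedy fix---singleton stars at guessed facilities are charged against the guessed far clients, and full stars at non-guessed facilities each cost at most $\eps\,\OPT/L$. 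The $\beta$-approximation for {\sf UCFL} is invoked only in Stage~$2'$ to integrally cover the heavy clients, not to solve an $r=1$ outlier subproblem.
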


We remark that we use LP-based algorithms for {\sf UCFL} as a black-box and have not attempted to optimize the precise ratio. Using $\beta=1.488$ from \cite{li20131} we get an approximation factor of $3.936(1+\epsilon)$. We also note that the case of $r$ being fixed requires new technical ideas from those in \cite{chekuri2022covering}. We recall the non-trivial work of \cite{suriyanarayana2024improved} who consider {\sf CJRP} for fixed $r$ by adapting the ideas from \cite{chekuri2022covering} for an implicit problem. One of our motivations for this paper is to obtain a general framework for $\CCF$ for fixed $r$, and to showcase its applicability to concrete problems. 
%Overall, we believe that the framework we present here is flexible and powerful and can be used in a variety of settings with some problem-specific adaptations and optimizations.

\paragraph{Summary of Technical Contributions.} Our algorithms follow a streamlined framework (outlined in \Cref{sec:rf-framework}) that builds upon the ideas in \textcite{chekuri2022covering}, while introducing additional new ingredients to obtain improved approximations when $r$ is fixed. The framework consists of four stages; at a high level, these stages can are: guess high-cost elements, solve a continuous relaxation, randomly round the fractional relaxation, and greedy fix in a post-processing step to ensure that constraints are appropriately satisfied.  The main new ingredient is the rounding procedure employed in the third stage, which is centered around a  rounding lemma (\Cref{lem:additive-bound}). The lemma is based on the concentration bound for independent rounding  of a fractional solution to the multilinear relaxation of a submodular function. The improved approximation for fixed $r$ is based on obtaining a strong concentration bound. In order to obtain this we need to ensure that the underlying submodular function has a Lipschitz property. Our new insight is that one can use a greedy procedure to select a small number of elements to ensure the desired Lipschitz property. However this means that the algorithms will incur an additive approximation for selecting a constant number (that depends only on $r$ and $1/\eps$) of elements. We are able to convert this additive approximation into a relative approximation by guessing a sufficiently large number of elements in the first stage. 
This is a high-level overview and there are several non-trivial technical details in the algorithms, some of which are inherited from \cite{chekuri2022covering}.
We also note that although both algorithms adhere to the high-level four-stage framework, there are several important differences in the details for the two problems which are explained in the respective sections.

\paragraph{Other Related Work.}
This paper is inspired by recent work
on covering with a fixed number of constraints \cite{bandyapadhyay2021fair,suriyanarayana2024improved}
and previous work on covering submodular constraints for large $r$ \cite{bera2014approximation,chekuri2022covering}. We refer to
those papers for extensive discussion of motivating applications, related problems, and technical challenges.
In the context of Set Cover and {SubmodSC}, an important problem is Covering Integer Program (CIPs) that lies in between.
CIPs have been studied extensively for several years starting with \cite{Dobson1982}. The introduction of KC inequalities \cite{CFLP} led
to the first $O(\log m)$-approximation by Kolliopoulos and Young
\cite{KY}. Recent work \cite{CHS,CQ19} has obtained sharp bounds.
We observe that CIPs with fixed number of constraints admit an easy PTAS due to the underlying LP where
one can use properties of a basic feasible solution. In some sense that is also the way \cite{bandyapadhyay2021fair}
obtain their algorithm for {ColorVC} with fixed $r$. However, this approach does not generalize
to our setting since our submodular constraints are more complex (even for say covering points by disks).
Multiple outliers and related covering constraints have been recently also been studied in
clustering, partly motivated by fairness, and there are several different models. We only mention one problem, namely Colorful $k$-Center \cite{bandyapadhyay2019constant,jia2022fair,anegg2022technique,ceccarello2024fast}. 
This problem can be also be viewed as a covering problem with multiple constraints.
The difference is that in $k$-Center the cost is not allowed to be violated (can use only $k$ centers) while
the sets corresponding to the balls of optimum radius which are allowed to expand since the objective is to 
minimize the radius. Our problems
allow cost to be approximated while the sets are not fungible and the constraints are much more complex. 
Optimization problems with submodular \emph{objectives} have been been an active
area of research both in maximization and minimization setting,
and there is extensive literature. Some recent successes, especially in the approximation setting,
have relied on sophisticated mathematical programming techniques and we too rely
on these techniques even though our problems involve submodular \emph{constraints}. 

\mypara{Organization:} \Cref{sec:prelim} provides the formal definitions for $\MSC$ and $\CCF$, as well as some relevant background on submodularity. In \Cref{sec:rf} we describe the $\keylemma$ (\Cref{lem:additive-bound}), which is our key ingredient. Then we provide an overview of our algorithmic framework. \Cref{sec:msc} describes our algorithm for \Cref{thm:main-result-msc}. \Cref{sec:ccf} describes our algorithm for $\CCF$ which generalizes the result given in \Cref{thm:intro-ccf}. In \Cref{sec:ccf-apps}, we outline two new applications of the $\CCF$ framework to facility location problems. The appendix contains the proof of \Cref{coll:main-result-msc}.
\section{Notation and Preliminaries}\label{sec:prelim}
\subsection{Problem Definitions}

Below we provide the formal problem statements for Multiple Submodular Covering Constraints ($\MSC$) and Covering Coverage Functions ($\CCF$) problems \cite{chekuri2022covering}.

\begin{definition}[$\MSC$]\label{def:msc}
    Let $N$ be a finite ground set, and we are given $r$ monotone submodular functions $f_i: 2^N \to \mathbb{R}_+$ for $i \in [r]$ each with corresponding demands $b_i \in \mathbb{R}_+$. Additionally, we are given a non-negative cost function $c : N \to \mathbb{R}_+$. The goal is to find a subset $S \subseteq N$ that solves the following optimization problem:
    \begin{align*}
        \min_{S \subseteq N} ~~& c(S) \\
        \text{s.t.}~~ & f_i(S) \geq b_i \quad \forall i \in [r].
    \end{align*}
\end{definition}

%\paragraph{Multiple Submodular Constrained Covering ($\MSC$):} Our most general problem, which we call $\MSC$ consists of (1) A ground set, $N$ with a cost function $c : N \rightarrow \mathbb{R}^+$, (2) $r$ polymatroids $f_1, \ldots, f_r$ over $N$ and $r$ integers $b_1, \ldots, b_r$. We want to select a minimum cost set of elements, $S \subseteq N$ so that we satisfy all $f_i(S) \geq b_i$ for all $i \in [r]$. Throughout, we assume $r$ is a fixed constant. % \in O(1)$.

\begin{definition}[$\CCF$]\label{def:ccf}
    The $\CCF$ problem consists of a set system $(\Universe, \Sets)$ where $\Universe$ is the universe of $n$ elements, $\Sets = \{S_1, \ldots, S_m\}$ is a collection of $m$ subsets of $\Universe$. Each set $S \in \Sets$ has a cost associated with it and we are given a set of inequalities $A\vecz \geq \bf{b}$ where $A \in \mathbb{R}_{\geq 0}^{r \times n}$. The goal is to optimize the integer program given in \ref{ip:fair-covering}.
    %The $\CCF$ problem is a special case of the $\MSC$ problem where each element of the ground set is a subset of elements of some universe $\Universe$. We use $\Sets$ to denote this set system and assume $|\Sets| = m$, $|\Universe| = n$. The collection of sets $\Sets$ has a cost function $c: \Sets \rightarrow \mathbb{R}^+$ associated with it. The input also consists of $r$ weighted coverage functions $f_1, \ldots, f_r$ and $r$ integers $b_1, \ldots, b_r$. Our problem is to pick a minimum cost collection of sets $\mathcal{C} \subseteq \Sets$ such that $f_i(\mathcal{C}) \geq b_i$ for all $i \in [r]$. Again, here $r$ is a constant. These constraints can be expressed using an $r \times n$ matrix $A$.
\end{definition} 
 \begin{minipage}[h!]{0.40\textwidth}
 \begin{equation}\label{ip:fair-covering}\tag{$\IP$-$\CCF$}
 \large \begin{array}{cccc}
     & \min \sum_{i \in [m]} c_ix_i \\ 
 \text{s.t.}     & \sum_{i : j \in S_i} x_i \geq z_j \\ 
   &  A \vecz  \geq   b & \\ 
   & z_j \leq 1 & \text{for all } j \in [n] \\ 
    & x_i \in \{0, 1\}  & \text{for all } i \in [m]
 \end{array}
 \end{equation}
 \end{minipage}%
 \hfill
 \begin{minipage}[h!]{0.40\textwidth}

 \begin{equation}\label{lp:fair-covering}\tag{$\LP$-$\CCF$}
 \large \begin{array}{cccc}
     & \min \sum_{i \in [m]} c_ix_i \\ 
 \text{s.t.}     & \sum_{i : j \in S_i} x_i \geq z_j \\ 
   &  A \vecz  \geq   b & \\
   & z_j \leq 1 & \text{for all } j \in [n] \\ 
    & x_i \in [0,1]  & \text{for all } i \in [m]
 \end{array}
 \end{equation}
 \end{minipage}
\vspace{.1cm}

In this program, the variables $x_i$ for $i \in [m]$ represent the indicator variable for whether the set $i$ is selected. The variables $z_j$ for $j \in [n]$ are indicators for whether the element $j$ is covered. If entries of $A$ are in $\{0,1\}$ then we obtain ColorSC as a special case.
% We can reformulate the $\CCF$ problem as an $\MSC$ problem in the following way. 

\paragraph{CCF as a special case of MSC.} Recall that the constraints for this problem are represented by a matrix $A \in \mathbb{R}_{\geq 0}^{r \times n}$. We view these as submodular functions as follows: for the $i^{th}$ constraint, given by the $i^{th}$ row of matrix, we define a submodular function $f_i(\cdot) : 2^{\Sets} \rightarrow \mathbb{R}_{\geq 0}^+$ as follows: $f_i(\Sets) = \sum_{j \in [n]}A_{i,j}z_j$ where $z_j$ is a variable indicating whether $j \in \cup_{S \in \Sets} S$ i.e., whether $j$ is covered by a set in the collection $\Sets$ of sets, and $A_{i,j}$ is the entry in $i^{th}$ row and $j^{th}$ column of matrix $A$. This is a submodular function since it is a weighted coverage function with weights coming from the matrix $A$.
In \Cref{sec:ccf}, we will use either the matrix form or the submodular function form as convenient.

In this paper we assume that input instances of each problem will have a fixed number of constraints, i.e., $r$ is taken to be some positive fixed constant. 
%\paragraph{Covering Coverage Functions ($\CCF$):} The $\CCF$ problem is a special case of the $\MSC$ problem where each element of the ground set is a subset of elements of some universe $\Universe$. We use $\Sets$ to denote this set system and assume $|\Sets| = m$, $|\Universe| = n$. The collection of sets $\Sets$ has a cost function $c: \Sets \rightarrow \mathbb{R}^+$ associated with it. The input also consists of $r$ weighted coverage functions $f_1, \ldots, f_r$ and $r$ integers $b_1, \ldots, b_r$. Our problem is to pick a minimum cost collection of sets $\mathcal{C} \subseteq \Sets$ such that $f_i(\mathcal{C}) \geq b_i$ for all $i \in [r]$. Again, here $r$ is a constant. These constraints can be expressed using an $r \times n$ matrix $A$.

\subsection{Submodularity} \label{sec:prelim-submod}
For a submodular function, $f(\cdot)$ defined on ground set $N$, we will use $f_{\mid A}(\cdot)$ to denote the submodular function that gives marginals of $f$ on set $A \subseteq N$ i.e., $f_{\mid A}(S) = f(S \cup A) - f(A)$. Further, for any $A \subseteq N$ and $e \in N$, we will use $f(e \mid \A)$ to denote the marginal value of $e$ when added to $A$.

\paragraph{Multilinear Extensions of Set Functions.} \label{sec:prelim-mle}

Our main rounding algorithm (discussed in \Cref{sec:rf}) makes use of the multilinear extension of set functions. Here are some relevant preliminaries.

\begin{definition}[multilinear extension] \label{def:mle}
    The multilinear extension of a real-valued set function $f: 2^N \to R$, denoted by $F$, is defined as follows: For $x \in [0,1]^N$ \[ F(x) = \sum_{S\subseteq N} f(S) \prod_{i \in S} x_i \prod_{j \in S} (1-x_j). \]

    Equivalently, $F(x) = \E[f(R)]$ where $R$ is a random set obtained by picking each $i \in N$ independently with probability $x_i$. 
\end{definition} % \label{def:mle}

\paragraph{Lipschitzness and Concentration Bounds.}\label{sec:prelim-lips}

One benefit of utilizing the multilinear extension that is relevant to our algorithm is a concentration bound. Before we state that bound, we define \emph{Lipschitzness} of a submodular function.

\begin{definition}[$\ell$-Lipschitz]\label{def:lips}
    A submodular function $f: 2^N \rightarrow \Rset_+$ is $\ell$-Lipschitz if $\forall i \in N$ and $A \subset N$: \[ |f(A \cup \{i\}) - f(A) | \leq \ell. \] When $f$ is monotone, this amounts to $f(\{i\}) \leq \ell$. 
\end{definition}
\begin{lemma}[\textcite{vondrak2010note}]\label{thm:vondrak}
    Let $f: 2^N \to \Rset_+$ be an $\ell$-Lipschitz monotone submodular function. For $\vecx \in [0,1]^N$, let $R$ be a random set where each element $i$ is drawn independently using the marginals induced by $\vecx$. Then for $\delta \geq 0$,
    \[ Pr[f(R) \leq (1-\delta) F(\vecx) ] \leq e^{-\frac{\delta^2 F(\vecx)} {2\ell}}. \] 
\end{lemma}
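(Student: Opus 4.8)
The plan is to prove this via the \emph{self-bounding function} method. First I would reduce to the case $\ell = 1$: replacing $f$ by $f/\ell$ yields a monotone, submodular, $1$-Lipschitz function whose multilinear extension is $F/\ell$, and the claimed inequality is invariant under this rescaling (both $F(\vecx)$ and $\ell$ scale together in the exponent). So assume $\ell = 1$, and write $Z = f(R)$ as a function of the independent indicator bits $Y_i$ (where $Y_i = 1$ iff $i \in R$, with $\Pr[Y_i = 1] = x_i$), so that $\E[Z] = F(\vecx)$ by \Cref{def:mle}.

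Next I would show that $Z$ is a self-bounding function of the $Y_i$. For each $i$, set $Z^{(i)} := f(R \setminus \{i\})$, which depends only on $(Y_j)_{j \ne i}$. By monotonicity $Z - Z^{(i)} \ge 0$; if $i \in R$ then $Z - Z^{(i)} = f(i \mid R \setminus \{i\}) \le 1$ by $1$-Lipschitzness, and if $i \notin R$ it is $0$, so $0 \le Z - Z^{(i)} \le 1$. For the defining inequality $\sum_i (Z - Z^{(i)}) \le Z$, enumerate $R = \{i_1 < i_2 < \cdots < i_k\}$ and use submodularity: since $\{i_1,\dots,i_{\ell-1}\} \subseteq R \setminus \{i_\ell\}$, we get $f(i_\ell \mid R \setminus \{i_\ell\}) \le f(i_\ell \mid \{i_1,\dots,i_{\ell-1}\})$; telescoping the right-hand sides gives $\sum_\ell f(i_\ell \mid R \setminus \{i_\ell\}) \le f(R) - f(\emptyset) \le f(R) = Z$, using $f(\emptyset) \ge 0$. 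Hence $Z$ meets the hypotheses of the concentration inequality for self-bounding functions.

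Then I would invoke the lower-tail bound for self-bounding functions (Boucheron--Lugosi--Massart; McDiarmid--Reed): $\Pr[Z \le \E Z - t] \le \exp(-t^2 / (2\,\E Z))$ for $t \ge 0$. Taking $t = \delta F(\vecx)$ and $\E Z = F(\vecx)$ yields $\Pr[f(R) \le (1-\delta) F(\vecx)] \le \exp(-\delta^2 F(\vecx)/2)$, and undoing the rescaling by $\ell$ recovers $\exp(-\delta^2 F(\vecx)/(2\ell))$ (when $\delta > 1$ the statement is vacuous since $f \ge 0$).

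I expect the main obstacle to be supplying a self-contained proof of the self-bounding concentration inequality, if one does not cite it as a black box. The standard route is the entropy method: Herbst's argument combined with a modified log-Sobolev (sub-additivity of entropy) inequality controls the exponential moment $\E[e^{-tZ}]$. A more elementary but delicate alternative is a Doob martingale exposing the $Y_i$ one coordinate at a time, bounding the conditional moment generating function of each increment $D_i$ by a Bennett-type estimate $\E[e^{-tD_i}\mid Y_{<i}] \le \exp\!\big(x_i c_i (e^{-t} - 1 + t)\big)$ with $c_i \le 1$, and finishing by optimizing the Chernoff exponent over $t$ via the elementary inequality $(1-\delta)\ln\frac{1}{1-\delta} - \delta \le -\delta^2/2$ for $\delta \in [0,1)$. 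The subtlety in that route --- and the reason the self-bounding framing is cleaner --- is that the natural per-coordinate ``scale'' $c_i$ is only bounded by the mean $F(\vecx)$ \emph{in expectation}, not almost surely, so one must route the bookkeeping through the structural inequality $\sum_i (Z - Z^{(i)}) \le Z$ rather than chaining the increments directly.
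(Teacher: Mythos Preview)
The paper does not prove this lemma: it is stated in the preliminaries as a cited result from \textcite{vondrak2010note} and used as a black box. Your proposal is correct and in fact reproduces the standard argument from that reference---showing that a $1$-Lipschitz monotone submodular function of independent bits is self-bounding (via the telescoping submodularity inequality $\sum_{e\in R} f(e\mid R\setminus\{e\}) \le f(R)-f(\emptyset)$) and then invoking the Boucheron--Lugosi--Massart lower-tail bound for self-bounding functions. There is nothing to compare against in the paper itself; your sketch is exactly what the cited note does, and the reduction to $\ell=1$ and the vacuity remark for $\delta>1$ are both fine.
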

%In our algorithm and analysis, we work with constraints of the form, $f(S) \geq b$ where $f$ is a submodular functions and $b$ is some demand. In this context, it is additionally beneficial to determine whether a function is $\ell$-Lipschitz continuous with respect to some subset of the ground set $N$. That is, if the given $f$ is not $\ell$-Lipschitz continuous, it may be the case that $f$ restricted to some domain may be. We provide a formal definition of this property below.  

%It is unlikely that these functions have the Lipschitz property as defined in \Cref{def:lops}, since in the context of our optimization problems, it may not be natural to consider $f$ that are $\ell$-Lipschitz for some meaningful choice of $\ell$. Instead, we may In this context, $f$ may not satisfy $\ell$-Lipschitzness.  these constraints might not inherently satisfy $\ell$-Lipschitzness as is. However, a function can be Lipschitz when restricted to a certain domain. %We therefore define next the notion of \emph{Constraint $f$ being Lipschitz with respect to domain $S$} as follows

%\begin{definition}[$\ell$-Lipschitz constraint with respect to domain $A$]\label{def:lips-cons}
%    Given a submodular constraint: ``Find $S \subseteq N$ such that $f(S) \geq b$'', we say that $f$ is $\ell$-Lipschitz with respect to domain $A \subseteq N$ if for all $e \in A$, $f(e) \leq \ell \cdot b$.
%\end{definition}

\paragraph{Discretizing costs and guessing:} Our problems involve minimizing the cost of objects. Via standard guessing and scaling tricks, if we are willing to lose a $(1+o(1))$ factor in the approximation, we can assume that all costs are integers and are polynomially bounded in $n$. In particular this implies that there are only $\text{poly}(n)$ choices for the optimum value. We will thus assume, in some algorithmic steps, knowledge of the optimum value with the implicit assumption that we are guessing it from the range of values. This will incur a polynomial-overhead in the run-time, the details of which we ignore in this version.

\iffalse
\begin{remark}\label{rem:opt}
    Our algorithms and analysis for results assume that we know what the cost of the optimal solutions to instances of $\MSC$ or $\CCF$ are. Note that we can determine this value by assuming that the cost is some integer between $1$ and $n^3$ (where $n$ is the size of the ground set being considered) and using binary search to determine the optimal cost. For instances where this assumption does not hold, one may facilitate some clever scaling and enumeration to ensure that this is the case. \note{Chandra - you had said you'd add to this} %   \note{Do we need the last sentence?}
\end{remark}
\fi
\section{High-Level Algorithmic Framework and \keylemma}\label{sec:rf} % todo: change name

%\subsection{Chandra}

\begin{comment}

\begin{lemma}\label{lem:additive-bound-single-constraint}
 Let $N$ be a finite set and let $c: N \rightarrow \mathbb{R}_+$ be non-negative cost function on $N$ and let $c_{\max} = \max_j c_j$.
 Let $f:2^N \rightarrow \mathbb{R}_+$ be a monotone submodular function. Suppose we have a fractional solution $\vecx \in [0,1]^N$
 such that $F(\vecx) \ge b$ and $\sum_j c_j x_j \le B$. Then there is a set $S$ such that $c(S) \le B + c_{\max}$ and $f(S) \ge b$.
\end{lemma}
\begin{proof}
We sketch a proof. Do pipage rounding. Fix two fractional variables $x_1$ and $x_2$ without loss of generality. If we change $x_1$ to $x_1+\eps_1$ and $x_2$ to $x_2 - \eps_2$ then $F(x)$ changes to $\eps_1 \partial F/\partial x_i - \eps_2 \partial F/\partial x_2 + \delta$ where $\delta \ge 0$. Based on $c_1$ and $c_2$ and $\eps_1 \partial F/\partial x_i$ and $\partial F/\partial x_2$ we should be able to move $x$ such that cost does not increase and $F$ does not decrease. Thus, we can keep doing this until we only have one fractional variable. At this point we simply make that $1$. Computational aspects are a bit messy. There may be another proof saying that Greedy itself will give what we want from the statement. 
\end{proof}
\end{comment}

Our algorithms for approximating $\MSC$ and $\CCF$ share a unified framework. The framework relies on a \keylemma~(\Cref{lem:additive-bound}), which provides a method to convert a fractional solution to an integral one in a way that preserves key properties of the original fractional solution while incurring only a  minimal additive cost. In this section, we begin by presenting and proving our \keylemma~(\Cref{sec:rf-rounding}). We then provide a high-level overview of the algorithmic framework (\Cref{sec:rf-framework}) built upon this \keylemma.%, which forms the foundation for our results on $\MSC$ and $\CCF$, derived in the subsequent sections.

% All our algorithms follow a common framework, which we explain in this section. A key part of this framework is rounding a fractional covering with only a small extra cost. In Subsection \ref{sec:rounding}, we introduce the main lemma for rounding, and in Subsection \ref{sec:framework}, we describe the overall framework. Readers interested in the broad framework can use Lemma \ref{lem:additive-bound} as a black box and skip ahead to Section \ref{sec:framework}. \tanvi{todo: make better preamble.}

\subsection{\keylemma}\label{sec:rf-rounding}

A key step in our framework is the ability to round fractional solutions to integral ones while maintaining important guarantees. Our \keylemma~(stated below) provides a formal method for achieving this. It shows that for $r$ submodular functions ($r \in O(1)$), a fractional solution satisfying certain constraints with respect to their multilinear extensions can be (randomly) rounded in polynomial time to an integral solution. This rounding incurs only a small (constant sized) additive increase in cost and ensures that the values of the submodular functions are sufficiently preserved with high probability.

\begin{lemma}[\keylemma] \label{lem:additive-bound} % todo: change name
     Let $N$ be a finite set, $c: N \rightarrow \mathbb{R}_+$ be a non-negative cost function and let $c_{\max} = \max_{j\in N} c_j$. Let $f_1, \ldots, f_r$ be monotone submodular functions over $N$. For each $f_i$, let $F_i$ be its corresponding multilinear extension. Suppose we are given a fractional point $\vecx \in [0,1]^N$ such that $\sum_j c_j x_j \le B$ and $F_i(\vecx) \ge b_i$ for each $i \in [r]$. Then, for any fixed $\eps > 0$ there is a randomized algorithm that outputs a set $S \subseteq N$ such that 
     \begin{enumerate}[label=(\roman*)]
         \item $\E[c(S)] \le B + r\lceil \frac{2 \ln (\sfrac{r}{\epsilon}) \ln (\sfrac{1}{\epsilon})}{\epsilon^2}\rceil c_{\max}$, and 
         \item $\forall i \in [r]$ $f_i(S) \ge (1-\epsilon)b_i$ with probability at least $1 - \sfrac{\epsilon}{r}$.
     \end{enumerate}
    % (i) $\E[c(S)] \le B + r\lceil \frac{2 \ln (\sfrac{r}{\epsilon}) \ln (\sfrac{1}{\epsilon})}{\epsilon^2}\rceil c_{\max}$, and    (ii) $\forall i \in [r]$ $f_i(S) \ge (1-\epsilon)b_i$ with probability at least $1 - \sfrac{\epsilon}{r}$.
\end{lemma}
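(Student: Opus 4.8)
The plan is to combine a greedy "Lipschitz-fixing" step with independent randomized rounding and the concentration bound of Vondrák (\Cref{thm:vondrak}). The obstacle to applying \Cref{thm:vondrak} directly is that the functions $f_i$ need not be $\ell$-Lipschitz for small $\ell$: a single element could contribute a large fraction of $b_i$, in which case the exponent $\delta^2 F_i(\vecx)/(2\ell)$ is too weak to give a $1-\eps/r$ success probability. The key insight is that only a constant number of elements (depending on $r$ and $\eps$) can be "heavy" for a given constraint, so we can afford to buy all of them outright as an additive cost term, and on the residual ground set the relevant (contracted) submodular functions become Lipschitz with the parameter we want.

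First I would fix the target Lipschitz parameter. Set $\ell_i := \eps b_i$ for each $i$ — more precisely I will aim for each residual function to be $(\eps b_i / c)$-Lipschitz for a suitable constant $c$ so that, writing $\delta$ for the fractional loss I am willing to tolerate, the bound $e^{-\delta^2 F_i / (2\ell_i)}$ is at most $\eps/r$. Concretely, if $g_i := f_{i\mid A}$ is the contraction of $f_i$ by the set $A$ of heavy elements already bought, and $g_i$ is $\eps' b_i$-Lipschitz, then $\Pr[g_i(R) \le (1-\delta)G_i(\vecx')] \le \exp(-\delta^2 G_i(\vecx')/(2\eps' b_i))$; choosing $\delta$ and $\eps'$ as appropriate constant multiples of $\eps$ makes this $\le \eps/r$ provided $G_i(\vecx') = \Omega(b_i)$. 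This dictates that I need roughly $\eps' \approx \eps^2/(2\ln(r/\eps))$ relative Lipschitzness. So the number of heavy elements per constraint I must remove is at most about $1/\eps' = O(\ln(r/\eps)/\eps^2)$, which matches the $\lceil 2\ln(r/\eps)\ln(1/\eps)/\eps^2\rceil$ in statement (i) up to the bookkeeping.

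Second, the greedy Lipschitz-fixing step: process the constraints $i = 1,\dots,r$ one at a time. For constraint $i$, repeatedly pick an element $e \in N$ (with $x_e > 0$, say) maximizing the marginal $f_i(e \mid A)$ with respect to the current accumulated set $A$, add it to $A$, and continue until no remaining element has marginal exceeding $\eps' b_i$ relative to $A$. Submodularity (diminishing marginals) guarantees that once we stop, every element in $N \setminus A$ has $f_i$-marginal at most $\eps' b_i$ with respect to $A$, hence with respect to any superset of $A$; so $f_{i \mid A}$ is $\eps' b_i$-Lipschitz. Moreover the number of elements added for constraint $i$ is at most $f_i(N)/(\eps' b_i)$ in the crude bound, but the right bound is $1/\eps' $-ish after observing we only need to reduce marginals below the threshold and $f_i$ values beyond $b_i$ are irrelevant (we may truncate $f_i$ at, say, $b_i$ without loss, so $f_i(N) \le b_i$ effectively and the count is $\le 1/\eps'$). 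Summing over the $r$ constraints gives the $r \lceil \cdots \rceil$ factor; each such element costs at most $c_{\max}$, giving the additive term $r \lceil 2\ln(r/\eps)\ln(1/\eps)/\eps^2 \rceil c_{\max}$ in expectation (in fact deterministically — the expectation is only needed because of the rounding in the next step, but that step doesn't pick expensive elements, so I should double-check whether the $\E[\cdot]$ is even required; if the rounding is pure independent rounding of $\vecx$ it contributes exactly $\sum_j c_j x_j \le B$ in expectation, which is where the expectation genuinely enters).

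Third, the rounding step. Let $A$ be the union of all heavy elements collected above and let $N' = N \setminus A$. Restrict $\vecx$ to $N'$, obtaining $\vecx'$, and let $R'$ be the random set including each $j \in N'$ independently with probability $x'_j$; output $S := A \cup R'$. For the cost: $\E[c(S)] \le c(A) + \sum_{j \in N'} c_j x_j \le c(A) + B$, giving (i). For feasibility of constraint $i$: since $f_i$ is monotone, $f_i(S) = f_i(A \cup R') \ge f_i(A) + f_{i\mid A}(R') = f_i(A) + g_i(R')$. We have $G_i(\vecx') = \E[g_i(R')] = F_i(A \cup \cdot \text{ restricted}) \ge F_i(\vecx) - f_i(A) \ge b_i - f_i(A)$ by submodularity/monotonicity of the multilinear extension together with $F_i(\vecx) \ge b_i$ — here I need the standard fact that $F_i(\1_A \vee \vecx') \ge F_i(\vecx)$ when $A$ and the support of $\vecx'$ partition appropriately, and that $g_i$'s multilinear extension at $\vecx'$ equals $F_i(\1_A \vee \vecx') - f_i(A)$. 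Then by \Cref{thm:vondrak} applied to the $\eps' b_i$-Lipschitz function $g_i$ with $\delta$ chosen so that $(1-\delta)(b_i - f_i(A)) + f_i(A) \ge (1-\eps) b_i$ (always possible since $f_i(A) \ge 0$; worst case $f_i(A)=0$ forces $\delta \le \eps$), we get $\Pr[f_i(S) < (1-\eps)b_i] \le \Pr[g_i(R') < (1-\delta)G_i(\vecx')] \le \exp(-\delta^2 G_i(\vecx')/(2\eps' b_i)) \le \eps/r$, using $G_i(\vecx') \ge b_i - f_i(A)$ and the choice of $\eps'$. One edge case to handle: if $b_i - f_i(A)$ is already small (the heavy elements alone nearly satisfy constraint $i$), then $G_i(\vecx')$ might be tiny and the concentration bound weak; but in that regime $f_i(A)$ itself is close to $b_i$, so I should split into cases $f_i(A) \ge (1-\eps) b_i$ (done deterministically, constraint $i$ already satisfied by $A \subseteq S$) versus $f_i(A) < (1-\eps)b_i$ (so $G_i(\vecx') \ge \eps b_i$, and the exponent is $\ge \delta^2 \eps b_i/(2 \eps' b_i) = \delta^2 \eps/(2\eps')$, which with $\eps' \approx \eps^3/(2\ln(r/\eps))$ and $\delta \approx \eps$... ) — the precise constants in $\eps'$ will have to be tuned to make this last inequality give exactly $\le \eps/r$, and getting that bookkeeping to match the $\lceil 2\ln(r/\eps)\ln(1/\eps)/\eps^2 \rceil$ in the statement is the fiddly part.

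The main obstacle I anticipate is precisely this last point: choosing the Lipschitz threshold $\eps'$ so that (a) the per-constraint count of heavy elements is at most $\lceil 2\ln(r/\eps)\ln(1/\eps)/\eps^2\rceil$ and simultaneously (b) the concentration exponent is at least $\ln(r/\eps)$, while handling the degenerate case where the heavy set alone (nearly) satisfies a constraint. Everything else — monotonicity/submodularity manipulations, the multilinear-extension identities for contractions, and the expectation bound on cost — is routine.
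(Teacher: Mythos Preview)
Your overall architecture is exactly the paper's: greedy ``Lipschitz-fixing'' per constraint, then independent rounding of $\vecx$, then Vondr\'ak's concentration on the contracted functions. The place where your proposal diverges---and where you yourself flag the difficulty---is the choice of greedy threshold. You stop when no element has marginal exceeding an \emph{absolute} threshold $\eps' b_i$; this is precisely what creates the edge case you worry about, because the Lipschitz constant $\eps' b_i$ does not shrink with the residual requirement $b_i - f_i(A)$, and so the concentration exponent degrades when $f_i(A)$ is close to (but not past) $(1-\eps)b_i$. As you compute, this forces $\eps' = \Theta(\eps^3/\ln(r/\eps))$ and hence a per-constraint count of $\Theta(\ln(r/\eps)/\eps^3)$, which does \emph{not} match the stated $\lceil 2\ln(r/\eps)\ln(1/\eps)/\eps^2\rceil$.

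The paper's fix is to use a \emph{relative} threshold: for constraint $i$, keep adding elements as long as some $e$ has $f_i(e\mid S_i) \ge \ell\,(b_i - f_i(S_i))$, stopping either when no such $e$ exists or when $f_i(S_i) \ge (1-\eps)b_i$. This buys two things at once. First, each accepted element reduces the deficit $b_i - f_i(S_i)$ by a factor $(1-\ell)$, so after $t$ steps the deficit is at most $(1-\ell)^t b_i$ and the procedure halts within $\lceil (1/\ell)\ln(1/\eps)\rceil$ steps. Second, on termination the contracted function $f_{i\mid S_i}$ is $\ell b_i'$-Lipschitz where $b_i' = b_i - f_i(S_i)$ is the residual requirement; since one also has $F_{i\mid S_i}(\vecx) \ge b_i'$, the concentration exponent becomes $\eps^2 b_i'/(2\ell b_i') = \eps^2/(2\ell)$, \emph{independent} of $b_i'$. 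The edge case simply disappears. Taking $\ell = \eps^2/(2\ln(r/\eps))$ then gives exactly the count in (i) and failure probability $\le \eps/r$ in (ii). Everything else in your sketch (the cost bound, the multilinear-extension inequality $F_{i\mid A}(\vecx) \ge F_i(\vecx) - f_i(A)$, the case split when $f_i(S_i) \ge (1-\eps)b_i$ already) is correct and goes through as you describe.
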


To prove this lemma, we first establish the following helpful claim, which ensures that after selecting a subset of elements, a monotone function becomes Lipschitz continuous (\Cref{def:lips}), with a Lipschitz constant $\lipc$ that depends on the number of elements chosen.

\begin{claim}\label{clm:lipschitz}
Let $N$ be a finite set and $f$ be a non-negative monotonic\footnote{Note that we do not need submodularity in this claim.} function over $N$. For any $\epsilon > 0$, $b > 0$ and $\lipc > 0$, there exists a polynomial time algorithm that returns a set $S \subseteq N$ of cardinality at most  $\lceil \frac{1}{\lipc}{\ln(\frac{1}{\epsilon})}\rceil$ such that one of the following conditions hold: 
    \begin{enumerate}[label=(\roman*)]
         \item $f(S) \geq (1-\epsilon)b$,
         \item $\forall e \in N \setminus S$, $f(e \mid S) < \ell \cdot (b - f(S))$.
     \end{enumerate}
     
  % \[\textnormal{(i) } f(S) \geq (1-\epsilon)b\textnormal{ or (ii) }\forall e \in N \setminus S\textnormal{, }f(e \mid S) < \ell \cdot (b - f(S))\]
\end{claim}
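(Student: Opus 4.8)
\textbf{Proof plan for Claim~\ref{clm:lipschitz}.}

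The plan is to run a natural greedy process: starting from $S = \emptyset$, repeatedly add to $S$ the element that maximizes the marginal $f(e \mid S)$, and stop as soon as one of the two stated conditions holds. The claim will follow once I show that this process necessarily terminates within $\lceil \frac{1}{\ell}\ln(\frac{1}{\epsilon})\rceil$ steps. So the real content is a bound on the number of iterations before condition (ii) becomes true (assuming condition (i) has not already been achieved).

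First I would set up the progress measure. Let $S_t$ be the set after $t$ greedy steps, and suppose that at step $t$ condition (ii) fails, i.e.\ there exists $e \in N \setminus S_t$ with $f(e \mid S_t) \ge \ell\,(b - f(S_t))$. Since greedy picks the best marginal, the element $e_{t+1}$ it adds satisfies $f(e_{t+1}\mid S_t) \ge \ell\,(b - f(S_t))$ as well. Writing $g_t := b - f(S_t)$ for the ``residual'' toward the target $b$, monotonicity gives
\[
 g_{t+1} = b - f(S_{t+1}) = g_t - f(e_{t+1}\mid S_t) \le g_t - \ell\, g_t = (1-\ell)\, g_t,
\]
provided also that $g_t \ge 0$; if at some point $f(S_t) \ge b$ then certainly condition (i) holds and we stop. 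Hence as long as neither condition holds, $g_t \le (1-\ell)^t g_0 \le (1-\ell)^t b \le e^{-\ell t} b$. Once $t \ge \frac{1}{\ell}\ln(\frac{1}{\epsilon})$, we get $g_t \le \epsilon b$, i.e.\ $f(S_t) \ge (1-\epsilon) b$, so condition (i) holds. Therefore the process stops after at most $\lceil \frac{1}{\ell}\ln(\frac{1}{\epsilon})\rceil$ iterations, and the returned set has cardinality at most that bound. Each iteration makes at most $|N|$ oracle calls, so the running time is polynomial.

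One small subtlety I would be careful about: the inequality $f(e_{t+1}\mid S_t)\ge \ell\,g_t$ uses $\ell$ as given, and the claim does not assume $\ell \le 1$; if $\ell > 1$ then $(1-\ell)$ could be negative, but in that case a single step already drives $g_{t+1} \le (1-\ell) g_t \le 0$, so $f(S_1) \ge b$ and condition (i) holds immediately, consistent with the stated cardinality bound (which is then at most $1$). I should also note explicitly that submodularity is not used anywhere — only monotonicity and non-negativity — matching the footnote in the statement. I do not expect a genuine obstacle here; the only thing to get right is the bookkeeping of the stopping rule (checking condition (i) vs.\ (ii) at the right moment) so that the cardinality bound is exactly $\lceil \frac{1}{\ell}\ln(\frac{1}{\epsilon})\rceil$ rather than one more.
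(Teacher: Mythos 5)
Your proposal is correct and follows essentially the same argument as the paper: the paper's inductive claim $f(S^{(t)}) \ge (1-(1-\ell)^t)b$ is exactly your statement that the residual $g_t = b - f(S_t)$ contracts by a factor $(1-\ell)$ per iteration, and both arguments bound the iteration count by $\lceil \frac{1}{\ell}\ln\frac{1}{\eps}\rceil$ via $-\ln(1-\ell) \ge \ell$ (equivalently $(1-\ell)^t \le e^{-\ell t}$). The only cosmetic differences are that you pick the maximum-marginal element while the paper picks any element violating (ii), and you explicitly treat $\ell > 1$, neither of which changes the substance.
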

\begin{proof}
   Consider the following greedy procedure to construct $S$:  Initialize \( S \) as the empty set. While there exists an element \( e \in N \setminus S \) satisfying 
\( f(e \mid S) \geq \ell \cdot (b - f(S)) \) and \( f(S) < (1 - \epsilon)b \), add \( e \) to \( S \). 
Finally, return \( S \).

We prove that in at most $\lceil \frac{1}{\lipc}\ln \frac{1}{ \epsilon} \rceil$ iterations of the while loop, the set $S$ must satisfy $(i)$ or $(ii)$. Note that trivially, the returned set $S$ satisfies one of these conditions. In the rest of the proof we will bound the cardinality of $S$.% when this is (first) satisfied. 

Suppose that we never satisfy $(ii)$ in $\lceil \frac{1}{\lipc}\ln \frac{1}{\epsilon} \rceil $ iterations. We will show using induction that we must then satisfy $(i)$. In particular, let $S^{(t)}$ denote the set $S$ after $t$ iterations of the while-loop in the given procedure (where $S^{(0)} = \emptyset$), and let $s_{t+1}$ be the element that is to be added to $S$ in the $(t+1)$-th iteration. Now, using induction on $t$, we will show that for all $t \geq 0$ the following inequality will hold 
\[ f(S^{(t)}) \geq (1 - (1-\ell)^t)b. \]

% \{s_k \mid k \in [t+1]\} <= commenting here inc ase we wanna change it back, idk
For $t = 0$, $S$ is the empty set, hence $f(S^{(0)}) = 0$, and $(1 - (1-\ell)^0)b = 0$. Assume $f(S^{(t)}) \geq (1 - (1-\ell)^t)b$ for some fixed $t$. We will show that the inequality must hold for $t + 1$. To see this, observe that if $(ii)$ does not hold, then there exists an element, $s_{t+1}$ such that $f(s_{t+1} \mid S^{(t)}) \geq \lipc (b - f(S^{(t)}))$. Therefore we will have,
    \begin{align*}
        f(S^{(t+1)}) &= f(s_{t+1} \mid S^{(t)}) + f(S^{(t)}) \\
        &\geq \ell(b - f(S^{(t)}) + f(S^{(t)}) \\
        &= \ell b + (1-\ell)f(S^{(t)}) \\
        &\geq \ell b + (1-\ell)(1-(1-\ell)^t)b \\ 
        &= (1-(1-\ell)^{t+1})b
    \end{align*}
    Therefore, for $t = \lceil \frac{\ln \eps}{\ln(1-\ell)} \rceil$, $(1-(1-\ell)^{t})b \geq (1-\eps)b$ holds. Thus, we are guaranteed that after selecting $\frac{\ln \eps}{\ln(1-\lipc)} \leq \lceil \frac{1}{\lipc}\ln \frac{1}{\epsilon}\rceil$ elements, if $(ii)$ is not met $(i)$ holds. %This completes the proof.
\end{proof}

Using \Cref{clm:lipschitz}, we prove the $\keylemma$.%  below. 

\begin{proof}[Proof of \keylemma~(Lemma \ref{lem:additive-bound})]
    Consider the following algorithm. For a value of $\ell$ that we will determine later, for each $i \in [r]$, create a set $S_i$ according to the greedy algorithm in claim \ref{clm:lipschitz}. We are therefore guaranteed that for each constraint $i \in [r]$ $f_i(S_i) \geq (1 - \epsilon)b_i$ or $f_i(e \mid S_i) \leq \ell(b_i - f(S_i))$ for all $e \in N \setminus S_i$. %is $\ell$-Lipschitz with respect to $N \setminus S_i$ (Definition \ref{def:lips-cons}). 
    Note that $S_i$'s may not be disjoint. Next, perform independent randomized rounding with marginal probabilities given by $\vecx$ to find a set $S'$. Let $S = \cup_{i \in [r]} S_i \cup S'$. 

     We first prove that each constraint is satisfied to a $(1-\epsilon)$ approximation with probability $1 - \sfrac{\epsilon}{r}$. Fix any constraint $i \in [r]$. Suppose the set $S_i$ returned from Claim \ref{clm:lipschitz}'s algorithm satisfies $f_i(S_i) \geq (1-\epsilon) b_i$, then we are done. Otherwise, we know that for all $e \in N \setminus S_i$, $f(e \mid S_i) < \ell (b-f_i(S_i))$. Denote by $f_{i\mid S_i}$ the submodular function that is defined by $f_{i\mid S_i}(X) \coloneqq f_i(X \mid S_i)$. Let $b_i' = b_i - f_i(S_i)$. We have that for all $e \in N \setminus S_i$, $f_i(e \mid S_i) \leq \ell b_i'$ and therefore, $f_{i\mid S_i}(e) \leq \ell b'_i$ for all $e \in N \setminus S_i$. Then, when we (independently) randomly round using marginal probabilities given by $\vecx$, we get using \Cref{thm:vondrak} for the function $f_{i \mid S_i}$\footnote{Note that the Lemma \ref{thm:vondrak} applies when we round elements in $N \setminus S_i$. Here we round all elements in $N$ and, since $f_i$ is monotone, this is guaranteed to have a higher value than rounding some of them, therefore we can safely apply the concentration bound.},
    
    \begin{align}\label{eqn:conc}
        \Pr[f_{i \mid S_i}(S') \leq (1-\epsilon)F_{i \mid S_i}(x)] &\leq e^{-\frac{\epsilon^2 F_{i\mid S_i}(x)}{2 \ell b'_i}}
        \end{align}
        Now, $b_i' = b_i - f_i(S_i) \leq F_i(\vecx) - f_i({S_i}) = F_i(\vecx) - F_i(S_i)$. Now, 
        \begin{align*}
            F_i(\vecx) - f_i(S_i) &= \sum_{S \sim \vecx} f_i(S)Pr[S] - f_i(S_i) &&\text{(By Definition \ref{def:mle})}\\
            &=\sum_{S \sim \vecx} f_i(S) Pr[S] - \sum_{S \sim \vecx} f_i(S_i)Pr[S]  &&\text{(Since the probabilities sum to one)}\\
            &= \sum_{S \sim \vecx} (f_i(S) - f_i(S_i)) Pr[S] \\
            &\leq \sum_{S \sim \vecx} (f_i(S \cup S_i) - f_i(S_i)) Pr[S] \\
            &= \sum_{S \sim \vecx} f_{i \mid S_i}(S) Pr[S] &&\text{(Using monotonicity of $f_i$)}\\
            &= F_{i \mid S_i}(\vecx) &&\text{(By Definition \ref{def:mle})}
        \end{align*}
        \begin{comment}
        \begin{align*}
        F_{i \mid S_i}(\vecx) &= \sum_{S \sim \vecx} f_{i \mid S_i}(S) Pr[S] \\
        &\geq \sum_{S \sim \vecx} (f_i(S) - f_i(S_i)) Pr[S] \\
        & = \sum_{S \sim \vecx} f_i(S) Pr[S] - \sum_{S \sim \vecx} f_i(S_i)Pr[S] \\
        &= \sum_{S \sim \vecx} f_i(S)Pr[S] - f_i(S_i) \\
        &= F_i(\vecx) - f_i(S_i). 
        \end{align*}
        \end{comment}
        Therefore,
        $b_i' \leq F_{i \mid S_i}(\vecx)$. Substituting this in Equation \ref{eqn:conc},
        \begin{align*}
         \Pr[f_{i \mid S_i}(S') \leq (1-\epsilon)F_{i \mid S_i}(x)] \leq e^{-\frac{\epsilon^2b_i'}{2 \ell b'_i}}
        \end{align*}
        Now, choosing $\ell = \frac{\epsilon^2}{2 \ln (\sfrac{r}{\epsilon})}$
        \begin{align*}
        \Pr[f_{i \mid S_i}(S') \leq (1-\epsilon)F_{i \mid S_i}(x)] &\leq e^{-\frac{\epsilon^2 }{2 \epsilon^2} \cdot 2\ln (\sfrac{r}{\epsilon})} = \frac{\epsilon}{r}.
    \end{align*}    
    Therefore, with probability at least $1 - \sfrac{\epsilon}{r}$, 
    \begin{align*}
        f_i(S) &= f_i(S_i) + f_i(S' \mid S_i)\\
        &\geq f_i(S_i) + (1-\epsilon)F_{i \mid S_i}(x)\\
        &\geq (1-\epsilon)F_i(x)\\
        &= (1-\epsilon)b_i.
    \end{align*}
     Now, we will look at the cost of the sets chosen. To get the required probability bound, we need to choose $\ell = \frac{\epsilon^2}{2\ln(\sfrac{r}{\epsilon})}$. Therefore, following claim \ref{clm:lipschitz}, the size of set $S_i$ for each $i \in [r]$ is at most $\lceil \frac{1}{\lipc} \ln \frac{1}{\epsilon}\rceil = \lceil \frac{2 \ln (\sfrac{r}{\epsilon}) \ln (\sfrac{1}{\epsilon})}{\epsilon^2}\rceil$. Therefore, together the total cardinality of $\cup_{i \in [r]}S_i$ is bounded by $ r\lceil \frac{2 \ln (\sfrac{r}{\epsilon}) \ln (\sfrac{1}{\epsilon})}{\epsilon^2}\rceil$. Finally, the expected cost of $S'$ is $B$. Together, we get $\E[c(S)] \leq B + r\lceil \frac{2 \ln (\sfrac{r}{\epsilon}) \ln (\sfrac{1}{\epsilon})}{\epsilon^2}\rceil c_{\max}$.

  It is easy to see that the algorithm runs in polynomial time.
\end{proof}

\subsection{Overview of Algorithmic Framework}\label{sec:rf-framework}

%\note{check this; otherwise can revert back to abridged version from conference drafts} 
At a high level, $\MSC$ and $\CCF$ are optimization problems that involve selecting a low-cost subset of \emph{objects} that is able to sufficiently satisfy a set of covering constraints. A natural approach to tackling these problems is to consider their relaxations and allow for objects to be chosen fractionally. However, these relaxations are known to have an unbounded integrality gap \cite{chekuri2022covering}. Our algorithmic framework, which consists of four main stages, describes how to get around this integrality gap when there are a fixed number of covering constraints. 

% At a high level, $\MSC$ and $\CCF$ are problems that involve selecting a low-cost set of \emph{objects} to satisfy a set of submodular constraints. For both problems, we can relax the problem in natural way to select the elements fractionally. However, these natural relaxations have unbounded integrality gap. Our algorithmic framework shows how to get around this. It consists of four main stages, which we outline below:\note{needs to be written for elements (not sets); TODO post final intro}

\begin{itemize}
    \item \textbf{Stage 1: Guessing Highest-Cost Objects from the Optimal Solution.} Let $N$ denote the initial set of objects that we are to select from. We begin by pre-selecting the $L$ highest-cost objects from the optimal solution. Here, $L$ is chosen to be some sufficiently large constant number that depends on $r$, i.e., the number of submodular constraints, which is assumed to be fixed. To construct this subset, we ``guess,'' i.e. enumerate all possible $L$-sized subsets of objects from $N$ that could potentially make up this desired set. After doing so, we use this subset of objects to construct a \emph{residual instance} of the input instance in which all objects with costs higher than those guessed are removed and where the objects in the subset are considered ``chosen'' as part of the outputted solution. % $N'$ denote the set of objects that remain. We note that the precise details of how $L$ is set and how the residual instance is defined will vary from problem to problem.

 % This pruning step is safe because sets with costs larger than the highest-cost sets in the optimal solution cannot be part of the optimal solution.
    \item \textbf{Stage 2: Constructing a Fractional Solution.} Let $N'$ denote the set of objects that remain after Stage 1. Next, we construct a fractional solution \( \vecx \in [0,1]^{N'} \) for the residual instance so that it satisfies certain properties outlined in the statement of our $\keylemma$ (\Cref{lem:additive-bound}). % This step involves solving a LP that ensures the fractional solution respects the given covering constraints.
    \item \textbf{Stage 3: Rounding the Fractional Solution $\vecx$ to Construct $S$.} In this stage, the fractional solution $\vecx$ is rounded into an integral solution $ S \subseteq N'$ using our $\keylemma$ (\Cref{lem:additive-bound}). This rounding ensures that the integral solution sufficiently satisfies the covering constraints with high probability while incurring a small additive cost.
    \item \textbf{Stage 4: Greedy Fixing of $S$.} Finally, since the solution $S$ obtained in Stage 3 only satisfies the covering constraints with high probability, additional adjustments are needed to ensure all constraints are satisfied. Our framework achieves this by greedily selecting additional objects from those remaining until all constraints are sufficiently satisfied. %The choice of greedy algorithm will vary based on the original problem formulation. %Different greedy algorithms may be needed for different problems. 
\end{itemize}

%These stages together form the core of our framework, allowing us to effectively handle covering problems like $\MSC$ and $\CCF$ subject to a fixed number of covering constraints. 
While both our algorithms for $\MSC$ (\Cref{sec:msc}) and $\CCF$ (\Cref{sec:ccf}) follow the high-level framework described above, the execution of certain stages differs. The most notable difference occurs in Stage 2, where the construction of the fractional solution to be used in the $\keylemma$ is tailored for the specific problem and target result. Details on the execution of each of these stages can be found in their respective sections.

\section{Multiple Submodular Covering Constraints} \label{sec:msc}

In this section we consider $\MSC$ (\Cref{def:msc}) and prove \Cref{thm:main-result-msc} (restated below). We begin by discussing some technical components that are needed for our algorithm in \Cref{sec:msc-pre}. In \Cref{sec:msc-alg} we describe an algorithm based on the framework outlined in \Cref{sec:rf-framework}. Then, in \Cref{sec:msc-analysis}, we provide the analysis of our algorithm and the proof of \Cref{thm:main-result-msc}. In later sections, we will show how this algorithm can be extended to derive bi-criteria approximation results with improved coverage (\Cref{coll:main-result-msc}).

Recall that an instance $\J$ of $\MSC$ consists of a finite ground set $N$, a non-negative cost function $c : N \to \mathbb{R}_+$, and $r$ monotone submodular functions $f_1, f_2, \ldots, f_r$, each with an associated requirement $b_1, b_2, \ldots, b_r$, where $r$ is some fixed constant. The objective of $\MSC$ is to find a subset $S \subseteq N$ of minimum cost such that $f_i(S) \geq b_i$ for all $i \in [r]$. We will assume (without loss of generality) that $f_i(N) = b_i$ for all $i \in [r]$ by truncating $f_i$ to $b_i$.% as larger values of $b_i$ can be reduced by truncating the function $f_i$ to $\min\{b_i, f_i(\cdot)\}$.

\mainresultmsc*

\subsection{Algorithm Preliminaries}\label{sec:msc-pre}

Before proceeding with the overview of our algorithm, we present some definitions and results that are used as subroutines in our algorithm. 

\subsubsection{Constructing residual $\MSC$ instances}

Recall that in Stage 1 of our algorithmic framework (\Cref{sec:rf-framework}) constructs a residual instance after guessing a subset of elements from the optimal solution. Below is a definition which describes how to construct a residual and cost-truncated residual $\MSC$ instances with respect to subset of elements $A$ from the initial ground set. 

\begin{definition}[Residual and Cost-Truncated Residual $\MSC$ Instances]\label{def:residual-msc}
    Given an $\MSC$ instance $\J = (N, \{f_i, b_i\}_{i \in [r]}, c)$ and a set $A \subseteq N$, a \textbf{residual instance of $\J$ with respect to $A$} is $\J' = (N', \{f'_i, b'_i\}_{i \in [r]}, c)$ where $N' := N \setminus A$, for all $i \in [r]$, $f'_i(\cdot) := f_{i \mid A}(\cdot)$ and $b'_i := b_i - f_i(A)$. We do not change $c$ since the cost values remain unchanged, but assume its domain is restricted to $N'$. A \textbf{cost-truncated residual instance} of $\J$ with respect to $A$ is a residual instance $\J' = (N', \{f'_i, b'_i\}_{i \in [r]}, c)$ in which elements in $N'$ that have a higher cost than every element in $A$ are also removed. More precisely, $N' = \{e \mid e \in N \setminus A \textnormal{ and s.t. }\forall e' \in A~ c(e) \leq c(e') \}$. 
    % \begin{itemize}
    %     \item If $A \subseteq S^*$ and contains the highest cost elements in $S^*$ (i.e. $\forall e_1 \in A$ and $\forall e_2 \in S^* \setminus A$ $c(e_1) \geq c(e_2)$), then additionally update $N'$ by removing all elements that have costs higher than those in $A$. More precisely, let $N' = \{e \mid e \in N \setminus A \textnormal{ and s.t. }\forall e' \in A~ c(e) \leq c(e') \}$.\footnote{Note that we may not do this for residual instances with respect to arbitrary $A \subseteq S^*$, since removing higher-cost elements in those cases may result in the removal of high-cost elements from $S^*$, which would result in a residual instance which would have an optimal solution of higher cost than the original instance.} \note{Chandra - this is part of def since its easier to explain here than mess with or overload notation later}
    % \end{itemize}
\end{definition}

\subsubsection{Continuous relaxation of $\MSC$ and approximation results}\label{sec:msc-pre-cr}

In Stage 2 to construct the desired fractional $\vecx$, we use the continuous relaxation of $\MSC$, stated below as a feasibility program in \ref{lp:mp-submod-relax}. For an instance $\J = (N, \{f_i, b_i\}_{i \in [r]}, c)$, the objective of \ref{lp:mp-submod-relax} is to find a fractional point $\vecx \in [0,1]^N$ whose cost is at most $C$ and where the value of the multilinear extension of each constraint $f_i$, denoted by $F_i$, at $\vecx$ satisfies the demand $b_i$. 

Note that for $C \geq \OPT$, where $\OPT$ is the cost of the optimal solution to the $\MSC$ instance, \ref{lp:mp-submod-relax} is guaranteed to be feasible. %that does not exceed $C$.  the following continuous optimization problem will have a feasible solution of cost $C$:

\begin{equation}
        \begin{array}{r@{}c@{}l}\label{lp:mp-submod-relax}\tag{\MSC-Relax}
            ~~~& \sum_{j}c_j x_j \leq C\\
            \text{s.t.} &~~  F_i(\vecx) \geq b_i & ~~\forall i \in [r] \\
            & ~~~~~~~\vecx \geq 0 
        \end{array}
    \end{equation}

Finding a feasible solution to the above continuous optimization problem is NP-Hard. The following result can be used to get an approximate solution. % \tanvi{check if wording of thm makes sense}

\begin{theorem}[\textcite{chekuri2010randomized,chekuri2015multiplicative}]\label{thm:mle}
   For any fixed $\eps > 0$, there is a randomized polynomial-time algorithm that, given an instance of \ref{lp:mp-submod-relax} and a value oracle access to the submodular functions $f_1,..., f_r$, with high probability, either correctly outputs that the instance is not feasible or outputs an $\vecx$ such that \[ (i)~\sum_j c_jx_j \leq C \textnormal{ and } (ii)~F_i(\vecx) \geq (1-1/e-\eps)b_i ~\forall i \in [r].\]
\end{theorem}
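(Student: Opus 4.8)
The plan is to establish Theorem~\ref{thm:mle} via the \emph{multi-objective continuous greedy} method run over the down-closed polytope $P = \{\vecx \in [0,1]^N : \sum_j c_j x_j \le C\}$ (linear optimization over $P$ is a small explicit LP). First I would record that, since each $f_i$ is truncated so that $f_i(N)=b_i$, feasibility of \ref{lp:mp-submod-relax} for the given budget $C$ is equivalent to the existence of a point $\vecx^\star \in P$ with $F_i(\vecx^\star) \ge b_i$ for all $i \in [r]$; in particular, when $C \ge \OPT$ the indicator $\1_{S^\star}$ of an optimal integral solution is such a point. The algorithm runs the continuous greedy trajectory below and, at the end, checks (via sampling estimates) whether the produced point meets the $(1-1/e-\eps)$ targets; if not, it declares ``infeasible.'' This is correct because, conditioned on all sampling estimates being accurate, the trajectory provably reaches the targets whenever some feasible $\vecx^\star$ exists, so an ``infeasible'' verdict (w.h.p.) certifies that no such $\vecx^\star$ exists.

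The heart of the argument is the direction-finding step. Starting from $\vecx(0)=\mathbf{0}$, at (continuous) time $t\in[0,1]$ let $U(t) = \{i : F_i(\vecx(t)) < b_i\}$ and $d_i(t) = b_i - F_i(\vecx(t))$ for $i \in U(t)$, and solve the LP $\max\{\theta : \nabla F_i(\vecx(t))\cdot v \ge \theta\, d_i(t)\ \forall i \in U(t),\ v \in P\}$ for an optimal direction $v(t)$, moving $\dot\vecx = v(t)$. The key inequality is that $\vecx^\star$ is itself a feasible $v$ with objective value at least $1$: using that $F_i$ is concave along nonnegative directions (multilinearity plus submodularity) and monotone, $\nabla F_i(\vecx(t)) \cdot \vecx^\star \ge \nabla F_i(\vecx(t)) \cdot (\vecx^\star \vee \vecx(t) - \vecx(t)) \ge F_i(\vecx^\star \vee \vecx(t)) - F_i(\vecx(t)) \ge F_i(\vecx^\star) - F_i(\vecx(t)) \ge d_i(t)$. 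Hence $\theta^\star(t)\ge 1$, so along the trajectory $\frac{d}{dt}F_i(\vecx(t)) = \nabla F_i(\vecx(t))\cdot v(t) \ge d_i(t) = b_i - F_i(\vecx(t))$ for every $i$ still in $U(t)$; an ODE-comparison argument then gives $F_i(\vecx(t)) \ge (1-e^{-t})b_i$, so $F_i(\vecx(1)) \ge (1-1/e)b_i$ for \emph{all} $i$ at once. Convexity of $P$ and $v(t)\in P$ give $\vecx(1) = \int_0^1 v(t)\,dt \in P$, hence $\sum_j c_j x_j(1) \le C$.

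To make this a polynomial-time randomized algorithm I would discretize time into $K = \Theta(1/\eps)$ steps of length $\delta = 1/K$, following straight segments; a standard smoothness estimate (each $F_i$ has first and second partials bounded relative to $b_i$) bounds the per-step loss by $O(\delta^2 b_i)$, so the guarantee degrades to $(1-1/e-O(\delta))b_i$, which is absorbed by taking $\delta$ small in terms of $\eps$. The quantities $F_i(\vecx)$ and the gradient entries $\partial_j F_i(\vecx) = F_i(\vecx \mid x_j{=}1) - F_i(\vecx \mid x_j{=}0)$ are estimated to additive error $\eps' b_i$ by averaging $\mathrm{poly}(n, r, 1/\eps)$ independent evaluations of $f_i$ on sets sampled from the relevant marginals, via Hoeffding's inequality; a union bound over the $O(rn/\delta)$ estimates used across all steps yields the ``with high probability'' guarantee, and each step's direction LP has $n$ variables and $O(n+r)$ constraints and is solved in polynomial time.

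The main obstacle is the error accounting rather than the high-level scheme: one must check that the accumulated discretization error and the propagated sampling error over the $\Theta(1/\eps)$ rounds still leave each $F_i(\vecx(1))$ above $(1-1/e-\eps)b_i$, which requires carefully coupling the \emph{estimated} gradients used to choose $v(t)$ with the \emph{true} gradients that govern the growth of each $F_i$ — this is precisely where the analyses of~\cite{chekuri2010randomized,chekuri2015multiplicative} do the real work, and I would follow their accounting. A secondary subtlety is that the ``$\min$'' in the direction LP makes the natural potential $\min_i F_i(\vecx(t))/b_i$ non-smooth; the comparison argument above avoids this by tracking each $F_i$ separately rather than their minimum.
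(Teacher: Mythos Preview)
The paper does not give its own proof of Theorem~\ref{thm:mle}; it is quoted as a known result from \cite{chekuri2010randomized,chekuri2015multiplicative} and used as a black box in Stage~2 of the $\MSC$ algorithm. So there is no ``paper's own proof'' to compare against beyond the citation.

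That said, your sketch is faithful to what those cited works do: the multi-objective continuous greedy over the down-closed knapsack polytope $P$, the direction LP $\max\{\theta:\nabla F_i(\vecx)\cdot v\ge \theta\,d_i,\ v\in P\}$, the witness inequality showing $\vecx^\star$ achieves $\theta\ge 1$, the per-coordinate ODE comparison yielding $(1-1/e)$, and the standard discretization/sampling error accounting. The chain $\nabla F_i(\vecx)\cdot \vecx^\star \ge F_i(\vecx^\star\vee\vecx)-F_i(\vecx)\ge b_i-F_i(\vecx)$ is exactly the right lemma, and your observation that tracking each $F_i$ separately sidesteps the non-smoothness of the min is the clean way to present it. Your handling of the infeasibility verdict (run the process and declare infeasible only if the output misses the targets) matches the one-sided guarantee the theorem asks for. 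In short, your proposal is a correct outline of the proof in the cited references, which is all this theorem is pointing to.
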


\subsubsection{Greedy algorithm to fix $\MSC$ constraints}\label{sec:msc-pre-gr}
In Stage 4 of our $\MSC$ algorithm, we need to fix unsatisfied constraints, and for this we can focus on a single submodular covering constraint.
The relevant optimization problem is the following.
\[ \min c(T) \textnormal{ s.t. }f(T) \geq b\]
Suppose the optimum cost for this is $C^*$. Our goal is to show that we can find a set $S$ such that
$f(S) \ge (1-1/e)b$ and $c(S) \le C^*$. For this purpose we guess $C^*$ and recast the problem 
as a submodular function \emph{maximization} problem subject to a knapsack constraint with budget $C^*$.
\[ \max f(T)\textnormal{ s.t. } c(T) \leq C^* \] 
Note that there is a feasible solution to this maximization problem of value at least $b$.
We can now use the following result to accomplish our goal.

\begin{lemma}[\textcite{sviridenko2004note}]\label{lem:greedy}
    There is a $(1-1/e)$-approximation for monotone submodular function maximization subject to a knapsack constraint.
\end{lemma}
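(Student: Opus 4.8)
The plan is to prove this via Sviridenko's classical combination of \emph{partial enumeration} and a \emph{density‑greedy} completion. Let $f:2^N\to\Rset_+$ be monotone submodular with $f(\emptyset)=0$, let $c_e\ge 0$ be the costs, and let $B$ be the budget. First I would discard every element with $c_e>B$ (it can never be used). The algorithm then enumerates every seed $Y\subseteq N$ with $|Y|\le 3$ and $c(Y)\le B$, and for each seed runs the greedy completion that repeatedly adds the element $e\in N\setminus S$ maximizing the marginal density $f(e\mid S)/c_e$ among all $e$ with $c(S)+c_e\le B$, stopping when no element fits; finally it returns the best set produced over all seeds. Only $O(|N|^3)$ seeds are tried, so this runs in polynomial time, and it is clearly feasible.

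For the analysis, fix an optimal solution $O^\star$ with $f(O^\star)=\OPT$ and $c(O^\star)\le B$. If $|O^\star|\le 3$ then $Y=O^\star$ is one of the enumerated seeds and we recover $\OPT$ exactly, so assume $|O^\star|\ge 4$. Choose $o_1\in\argmax_{o\in O^\star}f(o)$, then $o_2\in\argmax_{o\in O^\star\setminus\{o_1\}}f(o\mid\{o_1\})$, then $o_3\in\argmax_{o\in O^\star\setminus\{o_1,o_2\}}f(o\mid\{o_1,o_2\})$, and set $Y^\star=\{o_1,o_2,o_3\}$; since $c(Y^\star)\le c(O^\star)\le B$, this $Y^\star$ is among the enumerated seeds, so it suffices to analyze the greedy completion starting from $S_0=Y^\star$. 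Let $e_1,e_2,\dots$ be the elements greedy adds, $S_i=Y^\star\cup\{e_1,\dots,e_i\}$, and let $e_{\ell+1}$ be the first element in the greedy density order that is blocked because $c(S_\ell)+c_{e_{\ell+1}}>B$ (if greedy instead swallows all of $O^\star$ we are trivially done). The heart of the argument is the recursion
\[
f(O^\star)-f(S_{i+1})\ \le\ \Bigl(1-\tfrac{c_{e_{i+1}}}{B}\Bigr)\bigl(f(O^\star)-f(S_i)\bigr)\qquad\text{for } i=0,1,\dots,\ell,
\]
obtained from submodularity: $f(O^\star)-f(S_i)\le\sum_{o\in O^\star\setminus S_i}f(o\mid S_i)$, each such $o$ has marginal density at most that of the greedy pick $e_{i+1}$, so the sum is at most $\tfrac{f(e_{i+1}\mid S_i)}{c_{e_{i+1}}}\,c(O^\star)\le \tfrac{f(e_{i+1}\mid S_i)}{c_{e_{i+1}}}\,B$, which rearranges to a lower bound on $f(e_{i+1}\mid S_i)$. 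Iterating (starting from the trivial bound $f(O^\star)-f(S_0)\le\OPT$) and using that the $e_i$'s together with the seed essentially exhaust the budget yields $f(S_{\ell+1})\ge(1-1/e)\OPT$, and a final step passes from the possibly‑infeasible $S_{\ell+1}$ to its feasible prefix $S_\ell$, which the algorithm actually outputs.

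The step I expect to be the main obstacle is exactly this boundary bookkeeping: handling the element $e_{\ell+1}$ that overflows the budget, and absorbing the cost $c(Y^\star)$ of the seed, since the naive telescoping only gives $\prod_i(1-c_{e_i}/B)\le e^{-(B-c(Y^\star))/B}$ and one must argue the residual slack away. Enumerating \emph{triples} is precisely what makes this go through — with the three largest‑marginal elements of $O^\star$ already pre‑committed, every remaining relevant element (in particular the overflowing one, and the value lost at the boundary) has marginal contribution at most $f(o_3\mid\{o_1,o_2\})\le\tfrac13 f(Y^\star)\le\tfrac13 f(S_\ell)$, which suffices; this is also why a single‑element seed only yields a $\tfrac12(1-1/e)$‑type bound. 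As an alternative route I would keep in mind the continuous approach — run (measured) continuous greedy on the multilinear relaxation over the knapsack polytope after enumerating the $O(1/\eps^{3})$ costliest items, then pipage/swap round — which also gives the $1-1/e$ guarantee but uses heavier machinery than is needed here.
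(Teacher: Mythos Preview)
The paper does not prove this lemma; it is quoted from \textcite{sviridenko2004note} and used purely as a black box in Stage~4 of the $\MSC$ algorithm (\Cref{sec:msc-pre-gr}). There is no in-paper argument to compare against.

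Your proposal is exactly Sviridenko's original proof: enumerate all cost-feasible seeds of size at most three, complete each by density-greedy, and return the best. Your recursion $f(O^\star)-f(S_{i+1})\le(1-c_{e_{i+1}}/B)\bigl(f(O^\star)-f(S_i)\bigr)$ is the right engine, and your identification of the boundary bookkeeping (passing from the infeasible $S_{\ell+1}$ to the feasible $S_\ell$, and absorbing the seed cost $c(Y^\star)$ in the telescoping) as the delicate step is accurate. One caution on that step: the bound $f(e_{\ell+1}\mid S_\ell)\le f(o_3\mid\{o_1,o_2\})$ that you appeal to does not follow as written, because $e_{\ell+1}$ is the greedy pick and need not belong to $O^\star$; the defining property of $o_3$ only controls marginals of elements of $O^\star\setminus\{o_1,o_2\}$, not of arbitrary elements of $N$. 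Sviridenko closes this with a somewhat more careful accounting (running the recursion with denominator $c(O^\star\setminus Y^\star)\le B-c(Y^\star)$ so that the product already contracts by $1/e$ through step $\ell+1$, and then trading the lost marginal against the $(1/e)f(Y^\star)$ surplus that the recursion produces), but the overall architecture and the reason size-$3$ enumeration is exactly what is needed are just as you say.
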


\iffalse
\begin{lemma}[\textcite{sviridenko2004note}]\label{lem:greedy}
    Consider an instance of monotone submodular function maximization subject to a knapsack constraint. Let $Z$ be the optimum value for the given knapsack budget $C$. There is a polynomial time algorithm that outputs a set $T$, such that $c(T) \leq C$ and $f(T) \geq (1-1/e)Z$ where $Z$ is the optimum value.
\end{lemma}
\fi

\subsection{Algorithm Overview}\label{sec:msc-alg}

We provide pseudo-code for our algorithm in Algorithm \ref{alg:msc-single} and demarcate the various steps involved in each of the framework stages outlined in \Cref{sec:rf-framework}. We additionally provide more detailed descriptions of how each of the four stages in the context of our $\MSC$ approximation. % to help highlight the intuition and inherent simplicity behind our approach. 

\begin{algorithm}[h!]
\caption{Pseudocode for Bi-criteria Approximation for $\MSC$ ($\singleMSC$)}
\label{alg:msc-single}
\DontPrintSemicolon
  \SetKwFunction{Define}{Define}
  \SetKwInOut{Input}{Input}\SetKwInOut{Output}{Output}
  \Input{An instance of $\MSC$ problem denoted by $\J = (N, \{ f_i, b_i \}_{i=1}^r, c)$ where $f_i$'s are normalized, parameter $\eps$}
  \Output{Solution $S \subseteq N$ that satisfies \Cref{thm:main-result-msc}}
  \BlankLine 
  $S_{pre} \gets$ guess the $L$ highest cost elements from an optimal solution to $\J$\; \label{step:msc-guess-sets}
  Eliminate all elements with higher costs than those of $S_{pre}$, and create a cost-truncated residual instance of $\J$ with respect to $S_{pre}$ called $\J':= (N', \{ f'_i, b'_i \}_{i=1}^r, c)$.\tcp*[r]{Stage 1} \label{step:msc-res}
  Using the algorithm from \Cref{thm:mle}, obtain vector $\vecx \in \{0,1\}^{N'}$ as an approximate solution of \ref{lp:mp-submod-relax} for instance $J'$. \tcp*[r]{Stage 2}
  Using \Cref{lem:additive-bound}, round $\vecx$ to obtain a set $R \subseteq N'$. \tcp*[r]{Stage 3}
   \For{$i \in [r]$}{ 
   $T_i \gets \emptyset$ \tcp*[r]{Stage 4}
   \If{$f'_i(R) < (1-1/e - 2\epsilon)b'_i$}{
    $T_i \gets$ Elements using Greedy Algorithm by \cite{sviridenko2004note} s.t. $f'_i(T_i) \geq (1-\sfrac{1}{e})b'_i$. \label{alg:msc-single:fix} 
    }
   }
   Let $T := \bigcup_{i \in [r]} T_i$\;
   \Return{$S_{pre} \cup R \cup T$} \label{alg:msc-single:return}
\end{algorithm}

%\subsubsection

\paragraph{Stage 1: Choosing $L$ and constructing $\J'$.} % \label{sec:msc-alg-stage1}

 The first stage of our algorithm involves guessing sufficiently many (i.e. $L$) high-cost elements from the optimal solution, denoted by $S^*$, that we use to restrict our input instance. To properly set $L$, we must first define a few additional variables, whose role will become clearer in later stages and in our analysis. First, we define $\eps'$ and $\eps''$ such that $0 < \eps',\eps'' \leq \eps$ and $(1-\eps'')(1-\sfrac{1}{e}-\eps') \leq (1 - \sfrac{1}{e} - \eps)$. Next, we define $\ell := \frac{ (\eps'')^2}{2\ln \sfrac{r}{\eps''}}$. With this, we finally set $L := r \cdot \lceil \frac{1}{\eps''}(\frac{1}{\lipc}\ln \frac{1}{\eps''})\rceil$. We additionally assume that $|S^*| > L$, otherwise we can simply return $S_{pre}$ as our final solution.

Now, upon guessing (i.e. enumerating) $S_{pre}$, we construct a cost-truncated residual instance of $\J$ with respect to $S_{pre}$ called $\J'$ per \Cref{def:residual-msc}. Since $\J'$ is cost-truncated, we know that $N'$ will not contain any element whose cost is higher than elements in $S_{pre}$.  % Note that since, by definition, $S_{pre} \subseteq S^*$ (the optimal solution to $\J$), the residual ground set will not contain any element with cost higher than elements selected for $S_{pre}$. 

%\subsubsection
\paragraph{Stage 2: Constructing a fractional solution $\vecx$ using the continuous relaxation of $\J'$.} 
%\label{sec:msc-alg-stage2}

To construct our fractional solution $\vecx$, we use the continuous relaxation of $\J'$, as defined in \Cref{sec:msc-pre-cr}. Let $\OPT$ and $\OPT'$ denote the cost of an optimal solution to $\J$ and $\J'$, respectively. For now, suppose that we know beforehand what $\OPT$ is (see \Cref{sec:prelim-submod} for more details). Since we also assume $S_{pre} \subseteq S^*$, it should be clear that $\OPT' = \OPT - c(S_{pre})$. Using \Cref{thm:mle} we obtain a vector $\vecx$ s.t. $\sum_j c_jx_j \leq \OPT'$ and $\forall i \in [r]$ $F'_i(\vecx) \geq (1 - 1/e - \eps')b'_i$. Recall that $\eps'$ was defined in Stage 1. % (recall that $\eps'$ was defined as part of Stage 1 of this algorithm). 

%\subsubsection
\paragraph{Stage 3: Rounding $\vecx$ to construct set $R$.}
%\label{sec:msc-alg-stage3}

Now that we have obtained a fractional solution $\vecx$ that has certain desirable properties, we use our Rounding Lemma (\Cref{lem:additive-bound}) to round $\vecx$ to an integral solution $R$. The precise usage of the Rounding Lemma for this algorithm is as follows: Given $N'$, $c$, $\{f'_i\}_{i=1}^{r}$ and their corresponding multi-linear extensions $\{F'_i\}_{i=1}^r$, our fractional point $\vecx \in [0,1]^{N'}$ satisfies $\sum_j c_jx_j \leq \OPT'$ and for all $i \in [r]$, $F'_i(\vecx) \geq b''_i$ (where $b''_i := (1-\sfrac{1}{e}-\eps')b'_i$). The algorithm invoked in the Rounding Lemma can thus be used to output a set $R \subseteq N'$ s.t. (i) 
$\E[c(R)] \leq \OPT' + r\lceil \frac{2 \ln (\sfrac{r}{\epsilon''}) \ln (\sfrac{1}{\epsilon''})}{(\epsilon'')^2}\rceil c_{\max}$ (where $c_{\max}$ is the cost of the most expensive element in $N'$) and (ii) $\forall i \in [r]~f'_i(R) \geq (1-\eps'')b''_i$ with probability at least $1 - \sfrac{\eps''}{r}$. Recall that $\eps''$ was defined in Stage 1.  

%\subsubsection
\paragraph{Stage 4: Greedily Fixing Unsatisfied Constraints.} %\label{sec:msc-alg-stage4}
For each constraint $f'_i$ that has yet to be sufficiently satisfied by $R$, i.e., $f'_i(R) < (1-1/e-2\eps)b'_i$, use the procedure outlined in \Cref{sec:msc-pre-gr} to reformulate the problem of finding a corresponding $T_i$ to ``fix'' (i.e. satisfy) it as a submodular maximization problem subject to a single knapsack constraint. The budget for the knapsack constraint will be $\OPT'$. Now, using the greedy procedure from \Cref{lem:greedy}, we obtain, for each unsatisfied $f'_i$, a set $T_i$ s.t. $c(T_i) \leq \OPT'$ and $f'_i(T_i) \geq (1-\sfrac{1}{e})b'_i$. 

Letting $T := \bigcup_{i \in [r]} T_i$, our algorithm returns the elements in $S_{pre} \cup R \cup T$. In the following subsection, we show how this output will satisfy the coverage and cost bounds in \Cref{thm:main-result-msc}

% \begin{remark}\label{rem:opt-gr}
%     As was the case in \Cref{rem:opt-cr}, to determine the best value to set the knapsack budget for each constraint (i.e. $\OPT'_i$), we can simply use binary search and \Cref{lem:greedy}. \tanvi{how much detail}.
% \end{remark}
% Now, let $\OPT'_i$ denote the cost of the optimal solution to the above (that is, $c(T^*_i)$ for the optimal solution $T^*_i$). With this, we may reformulate the above as the following submodular maximization problem subject to a single knapsack constraint, allowing for the use of the greedy procedure from \Cref{lem:greedy} to construct each $T_i$. \[ \max f'_i(T_i)\textnormal{ s.t. } c(T_i) \leq \OPT'_i \] 

%In the proof of \Cref{lem:tjs} we will see that, after using binary search to determine how to fix $\OPT'_i$, we can apply \Cref{lem:greedy} to show that the greedy procedure employed by \textcite{sviridenko2004note} will allow us to construct a set $T_i$ that will provide sufficient coverage for $f'_i$'s and ultimately $f_i$'s.

% will describe the precise usage of the above result in our analysis in the following subsection. For now, we note that it will be used for any $f'_i$ s.t. $f'_i(R) < (1-1/e-2\eps)b'_i$

\subsection{Analysis}\label{sec:msc-analysis}

We now analyze Algorithm \ref{alg:msc-single} and prove \Cref{thm:main-result-msc}. We begin by proving the following simple claim that allows us to determine $\OPT'$ using $\OPT$. Recall that $\OPT$ and $\OPT'$ denote the cost of the optimal solutions to $\J$ and $\J'$ respectively. %the cost of the optimal solution to $\J'$ using that of $\J$. Recall that $\J'$ is the cost-truncated residual instance of $\J$ with respect to $S_{pre}$, i.e. the set of elements guessed in Stage 1. %, where $S_{pre}$ is assumed to be made up of the $L$ highest cost elements optimal solution to $\J$.    

% Recall that $\OPT$ and $\OPT'$ denote the cost of the optimal solutions to $\J$ and $\J'$< respectively, and $S_{pre}$ is the subset of elements guessed from the optimal solution to $\J$. Also recall that $\J'$ is a cost-truncated residual instance of $\J$ with respect to $\S_{pre}$.  

\begin{claim}\label{clm:spre}
     Let $S^*$ denote an optimal solution to $\MSC$ instance $\J$ whose cost is $\OPT$ and let $S'^*$ be an optimal solution to the cost-truncated residual instance $\J'$ (constructed in Line \ref{step:msc-res}) whose cost is $\OPT'$. Recall that $\J'$ is constructed with respect to $S_{pre}$ (guessed in Line \ref{step:msc-guess-sets}). Assuming $S_{pre}$ is made up of the highest cost elements of $S^*$, $\OPT' = \OPT - c(S_{pre})$
\end{claim}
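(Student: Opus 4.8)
\textbf{Proof plan for Claim~\ref{clm:spre}.}

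The plan is to prove the two inequalities $\OPT' \le \OPT - c(S_{pre})$ and $\OPT' \ge \OPT - c(S_{pre})$ separately, using $S^* \setminus S_{pre}$ as a feasible solution for $\J'$ in one direction and $S_{pre} \cup S'^*$ as a feasible solution for $\J$ in the other. The key facts I would use are (i) the definition of the residual instance (\Cref{def:residual-msc}), namely $f_i'(\cdot) = f_{i\mid S_{pre}}(\cdot)$ and $b_i' = b_i - f_i(S_{pre})$, (ii) the assumption $S_{pre} \subseteq S^*$, and (iii) the fact that $S_{pre}$ consists of the $L$ \emph{highest-cost} elements of $S^*$, which is exactly what guarantees that the cost-truncation step in constructing $N'$ does not delete any element of $S^* \setminus S_{pre}$.

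First I would show $\OPT' \le \OPT - c(S_{pre})$. Consider $S^* \setminus S_{pre}$. Since $S_{pre}$ is the set of highest-cost elements of $S^*$, every element of $S^* \setminus S_{pre}$ has cost at most $\min_{e' \in S_{pre}} c(e')$, so $S^* \setminus S_{pre} \subseteq N'$; that is, none of these elements were removed by the cost-truncation. It remains feasible for $\J'$: for each $i$, $f_i'(S^* \setminus S_{pre}) = f_i((S^*\setminus S_{pre}) \cup S_{pre}) - f_i(S_{pre}) = f_i(S^*) - f_i(S_{pre}) \ge b_i - f_i(S_{pre}) = b_i'$, where I used $S_{pre} \subseteq S^*$ and feasibility of $S^*$ for $\J$. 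Hence $\OPT' \le c(S^* \setminus S_{pre}) = c(S^*) - c(S_{pre}) = \OPT - c(S_{pre})$.

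For the reverse inequality $\OPT' \ge \OPT - c(S_{pre})$, take an optimal solution $S'^* \subseteq N'$ of $\J'$ and consider $S_{pre} \cup S'^*$ as a candidate solution to $\J$. For each $i$, $f_i(S_{pre} \cup S'^*) = f_i(S_{pre}) + f_i'(S'^*) \ge f_i(S_{pre}) + b_i' = b_i$, so it is feasible for $\J$. Therefore $\OPT \le c(S_{pre} \cup S'^*) \le c(S_{pre}) + c(S'^*) = c(S_{pre}) + \OPT'$, which rearranges to the claim. Combining the two inequalities gives $\OPT' = \OPT - c(S_{pre})$.

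The only subtle point — and the one I would be most careful about — is the claim $S^* \setminus S_{pre} \subseteq N'$, i.e.\ that the cost-truncation in \Cref{def:residual-msc} does not discard any element of the optimal solution outside $S_{pre}$. This is precisely why $S_{pre}$ must be the \emph{highest}-cost elements of $S^*$ rather than an arbitrary subset: an element $e \in S^* \setminus S_{pre}$ satisfies $c(e) \le c(e')$ for all $e' \in S_{pre}$, which is exactly the membership condition $N' = \{e \in N\setminus S_{pre} : \forall e' \in S_{pre},\ c(e) \le c(e')\}$. (If there are ties in cost one should note the condition is a non-strict inequality, so this still goes through.) Everything else is a routine application of the residual-instance identities $f_i' = f_{i\mid S_{pre}}$ and $b_i' = b_i - f_i(S_{pre})$ together with monotonicity/feasibility, and the subadditivity of $c$ (in fact $c$ is additive here, so $c(A \cup B) \le c(A) + c(B)$ with equality for disjoint sets).
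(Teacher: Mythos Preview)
Your proof is correct and follows essentially the same approach as the paper: both arguments use that $S^*\setminus S_{pre}\subseteq N'$ (since $S_{pre}$ consists of the highest-cost elements of $S^*$) to get feasibility of $S^*\setminus S_{pre}$ for $\J'$, and that $S_{pre}\cup S'^*$ is feasible for $\J$ for the reverse direction. The paper phrases the second direction as a contradiction (``if there were a cheaper $S'$ for $\J'$ then $S_{pre}\cup S'$ would beat $\OPT$''), whereas you write it as two explicit inequalities, but the content is identical and your version is in fact more carefully spelled out.
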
 

\begin{proof}
    Since $\J'$ is the cost-truncated residual instance of $\J$ with respect to $S_{pre}$, and $S_{pre}$ contains the highest cost elements of $S^*$ per our assumption, we are guaranteed that the elements of $S^* \setminus S_{pre}$ are in $N'$. We claim that $S^* \setminus S_{pre}$ is an optimal solution to $\J'$. This follows from the fact that, should there be some other solution $S'$ to $\J'$ of lower cost, then $S_{pre} \cup S'$ would be a valid solution to $\J$ whose cost is less than $\OPT$. Since no such $S'$ can exist, we can conclude that $c(S'^*)$ must be equal to $c(S^*) - c(S_{pre})$, which proves the claim. 
\end{proof}

Recall that in Stage 2 we use \Cref{thm:mle} to obtain a fractional $\vecx \in [0,1]^{N'}$ such that $\sum_j c_jx_j \leq C'$ (for $C' \geq OPT')$ and $F'_i(\vecx) \geq (1 - 1/e - \eps')b'_i$. We may assume that after using binary search to determine $\OPT$, we can set $C' := \OPT'$. For details about how to determine $\OPT$, see the last paragraph of \Cref{sec:prelim-submod}.

Next, in Stage 3 we construct a set $R$ using $\vecx$ s.t. $\E[c(R)] \leq \OPT' + O(\frac{r \ln r }{(\eps'')^2})c_{\max}$ (where $c_{\max}$ is the highest cost of an element in $N'$) and $\forall i \in [r]~f'_i(R) \geq (1-\eps')b''_i = (1-\eps'')(1-\sfrac{1}{e}-\eps')b'_i = (1-\sfrac{1}{e}-\eps)b'_i$ where the last equality holds due to our choice of $\eps'$ and $\eps''$ at the beginning of our algorithm. % - with probability $\eps'' / r$. From our choice of $\eps$, it will be the case that 

The next lemma shows that each of the $T_i$'s in Stage 4 will fix each unsatisfied constraint and have cost no more than $\OPT'$. 
\begin{lemma}\label{lem:tjs}
    For each non-empty $T_i$ (constructed in Line \ref{alg:msc-single:fix}), $f'_i(T_i) \geq (1-1/e)b'_i$ and $c(T_i) \leq \OPT'$. %, where $\OPT'$ is the optimal cost of the residual instance $\J'$.
\end{lemma}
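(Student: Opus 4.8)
\textbf{Proof plan for Lemma~\ref{lem:tjs}.}

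The plan is to unpack the construction of $T_i$ in Line~\ref{alg:msc-single:fix} and invoke the two ingredients set up in \Cref{sec:msc-pre-gr}: the recasting of a single submodular covering constraint as submodular maximization under a knapsack budget, and the $(1-1/e)$-approximation guarantee of \Cref{lem:greedy}. First I would fix an index $i$ for which $T_i$ is non-empty, which by the \texttt{if}-condition in the algorithm means $f'_i(R) < (1-1/e-2\eps)b'_i$, so the constraint genuinely needs fixing. The key point is to identify the right knapsack budget: I would argue that there is a feasible solution to the maximization problem $\max f'_i(T) \text{ s.t. } c(T) \le \OPT'$ whose value is at least $b'_i$. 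Indeed, by \Cref{clm:spre}, $S^* \setminus S_{pre}$ is an optimal solution to the residual instance $\J'$ with cost exactly $\OPT'$, and feasibility for $\J'$ means $f'_j(S^* \setminus S_{pre}) \ge b'_j$ for every $j \in [r]$, in particular for $j=i$. Hence the set $S^* \setminus S_{pre} \subseteq N'$ witnesses that the optimum $Z$ of the knapsack-constrained maximization problem with budget $\OPT'$ satisfies $Z \ge b'_i$.

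Next I would apply \Cref{lem:greedy} to this maximization instance: the greedy algorithm returns a set $T_i$ with $c(T_i) \le \OPT'$ (it respects the knapsack budget) and $f'_i(T_i) \ge (1-1/e)Z \ge (1-1/e)b'_i$, using $Z \ge b'_i$ from the previous step. This gives both claimed inequalities simultaneously. One small bookkeeping remark I would include: the budget $\OPT'$ is known because we assumed knowledge of $\OPT$ (via the guessing/scaling discussion at the end of \Cref{sec:prelim-submod}) and $\OPT' = \OPT - c(S_{pre})$ by \Cref{clm:spre}, so the algorithm can actually run the greedy procedure with the correct budget; if desired one can also simply try all polynomially-many candidate values of $\OPT'$.

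I do not anticipate a genuine obstacle here — the lemma is essentially a direct composition of previously established facts. The only place requiring a moment's care is the justification that the knapsack maximization problem admits a feasible solution of value $\ge b'_i$ at budget $\OPT'$; this rests on \Cref{clm:spre} identifying $S^* \setminus S_{pre}$ as an optimal (hence feasible, hence cost-$\OPT'$) solution to $\J'$, which in turn relies on the standing assumption that $S_{pre}$ consists of the highest-cost elements of $S^*$ and that $|S^*| > L$. I would state this dependence explicitly so the reader sees why the budget $\OPT'$ suffices, and then the rest is immediate from \Cref{lem:greedy}.
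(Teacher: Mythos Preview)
Your proposal is correct and follows essentially the same approach as the paper: recast the fixing step as submodular maximization under a knapsack budget of $\OPT'$, exhibit a feasible solution of value at least $b'_i$ (you use $S^*\setminus S_{pre}$ via \Cref{clm:spre}, the paper uses the min-cost set satisfying just constraint $i$ and observes its cost is at most $\OPT'$), and then invoke \Cref{lem:greedy} to obtain both bounds.
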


\begin{proof}
% \PK{This proof needs to be re-written to reflect correct use of Sviridenko's result.}
    Recall from \Cref{sec:msc-pre-gr} that the problem of constructing a set $T_i$ to satisfy a constraint $f'_i$ can be reformulated as a submodular maximization problem with a single knapsack constraint. In Stage 4 of our algorithm, we do this reformulation and set the knapsack budget to $\OPT'$, which results in the following: 
\[ \max f'_i(T_i)\textnormal{ s.t. } c(T_i) \leq \OPT'\] 
    
We now claim that the optimum value for the above will be at least $b'_i$. To see this, let $T'^*_i$ denote an optimal solution to fixing the constraint $f'_i$. That is, $T'^*_i$ is the lowest cost set s.t. $f'_i(T'^*_i) \geq b'_i$. Clearly $c(T'^*_i) \leq \OPT'$, since $\J'$ may contain several other constraints that need to be simultaneously satisfied. 

Since $\OPT' \geq c(T'^*_i)$, we can conclude that the optimum value $Z$ to the above is at least $b'_i$. Thus, using the greedy algorithm from \cite{sviridenko2004note} and \Cref{lem:greedy}, we can construct a set $T_i$ that will satisfy both $c(T_i) \leq \OPT'_i$ and $f'_i(T_i) \geq (1 - \sfrac{1}{e})b'_i$. 

% Now, to see that $\OPT'_i \leq \OPT'$, observe that $\OPT'_i$ is the cost of a optimal set that will satisfy a single constraint $f'_i$, thus, if it was the case that $\OPT' < \OPT'_i$, that would imply the existence of a lower cost set that would satisfy constraint $i$, contradicting the definition of $\OPT'_i$. 
\end{proof}

Now, we show that Algorithm \ref{alg:msc-single} satisfies \Cref{thm:main-result-msc}.

\begin{proof}[Proof of \Cref{thm:main-result-msc}]

Recall that Algorithm \ref{alg:msc-single} will output $S_{pre} \cup R \cup T$. We will start by showing that the coverage requirement holds for all functions $f_i$, i.e., for all $i \in [r]$ $f_i(S_{pre} \cup R \cup T) \geq (1 - \sfrac{1}{e}-\eps)b_i$.

To prove this, we first show that $i \in [r]$, $f'_i(R \cup T) \geq  (1 - \sfrac{1}{e}-\eps)b'_i$. This follows from the fact that, after the construction of $R$ in Stage 3, we know that some constraints are sufficiently satisfied by $R$, i.e. $f'_i(R) \geq (1-\sfrac{1}{e}-\eps)b'_i$. The constraints that are not yet sufficiently satisfied are subsequently ``fixed'' in Stage 4, where, per \Cref{lem:tjs}, we construct $T_i$'s for each such constraint s.t. $f'_i(T_i) \geq (1 - \sfrac{1}{e})b'_i$. Thus, $f'_i(R \cup T) \geq (1-\sfrac{1}{e}-\eps)b'_i$ holds for all $i \in [r]$. Now, to show that the initial coverage requirement holds for each constraint $i \in [r]$, using \Cref{def:residual-msc} we have
\begin{align*}
    f_i(S_{pre} \cup R \cup T) &= f_i(R \cup T \mid S_{pre}) + f_i(S_{pre}) \\
    &= f'_i(R \cup T) + f_i(S_{pre}) \\
    & \geq (1-\sfrac{1}{e}-\eps)b'_i + f_i(S_{pre}) \\
    &= (1-\sfrac{1}{e}-\eps)(b_i -  f_i(S_{pre})) + f_i(S_{pre}) \\
    &\geq (1-\sfrac{1}{e}-\eps)b_i
\end{align*}

Next, we show that the expected cost of $S_{pre} \cup R \cup T$ is no more than $(1 + \eps)\OPT$, where $\OPT$ is the cost of an optimal solution to $\J$ (which we will denote as $S^*$). First, note that per \Cref{clm:spre}, and assuming that $S_{pre} \subseteq S^*$, $c(S_{pre}) = \OPT - \OPT'$. Next, by using \Cref{lem:additive-bound} in Stage 3, %we have that $\E[c(R)] \leq \OPT' + O(\frac{r \ln r}{(\eps'')^2})c_{\max}$. Since $\eps'' \leq \eps$ (per \Cref{clm:eps-msc}), it wil also be the case that  
we have that $\E[c(R)] \leq \OPT' +  r\lceil \frac{2 \ln (\sfrac{r}{\epsilon''}) \ln (\sfrac{1}{\epsilon''})}{(\epsilon'')^2}\rceil c_{\max}$. We can bound the $c_{\max}$ term by $\eps''\cdot c(S_{pre})$, since $c_{\max}$ will be at most the cost of any of the $r\lceil \frac{1}{\eps''}(\frac{2 \ln (\sfrac{r}{\epsilon''}) \ln (\sfrac{1}{\epsilon''})}{(\epsilon'')^2})\rceil$ elements in $S_{pre}$. Since $\eps'' \leq \eps$, we have that $\E[c(R)] \leq \OPT' + \eps \cdot c(S_{pre})$.

Finally, $\E[c(T)] = \sum_{i \in [r]} \E[c(T_i)] = \sum_{i \in [r]} Pr(T_i \neq \emptyset) \cdot c(T_i)$. Using \Cref{lem:additive-bound} and the fact that $\eps'' \leq \eps$, we can bound $Pr(T_i \neq \emptyset) \leq \sfrac{\eps''}{r} \leq \sfrac{\eps}{r}$. From \Cref{lem:tjs}, we know for any $i$ that $c(T_i) \leq \OPT'$, hence $\sum_{i \in [r]} Pr(T_i \neq \emptyset) \cdot c(T_i) \leq r \cdot \frac{\eps}{r} \cdot \OPT'  = \eps \cdot \OPT'$.

All together, we have that 
\begin{align*}
    \E[c(S_{pre} \cup R \cup T)] &= \E[c(S_{pre})] + E[c(R)] + E[c(T)] \\
    &\leq c(S_{pre}) + (\OPT' + \eps\cdot c(S_{pre})) + \eps\cdot \OPT' \\ 
    &= (1+\eps)c(S_{pre}) + (1 + \eps)\OPT' \\
    &= (1+\eps)(\OPT - \OPT') + (1 + \eps)\OPT' \\
    &= (1 + \eps)\OPT.
\end{align*}

\end{proof}

\section{Covering Coverage Functions}\label{sec:ccf}
Recall the $\CCF$ problem (\Cref{def:ccf}). In this section, we give the following guarantee for this problem.
\begin{restatable}{theorem}{thmmainresultccf}\label{thm:main-result-ccf}
    Given an instance of the $\CCF$ problem such that the underlying set cover problem has a $\beta$ approximation via the natural $\LP$, and for a deletion closed set system, for any $\epsilon > 0$, there exists a randomized polynomial time (in $\vert \Universe \vert,\, \vert \Sets \vert $) algorithm that returns a collection $\A \subseteq \Sets$ such that (i) $\A$ is a feasible solution that satisfies all the covering constraints and (ii) $\E[c(\A)] \leq \left(\frac{e}{e-1}\right)(1 + \beta)(1 + \epsilon) \OPT$.
\end{restatable}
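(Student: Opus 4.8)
The plan is to follow the four-stage framework of Section~\ref{sec:rf-framework}, adapting it to the $\CCF$ setting, and to lean on the $\keylemma$ (Lemma~\ref{lem:additive-bound}) together with the existing LP-rounding machinery of \cite{chekuri2022covering} for deletion-closed set systems. Recall that a $\CCF$ constraint $i$ corresponds to a weighted coverage function $f_i(\A) = \sum_j A_{i,j} z_j(\A)$, which is monotone submodular, so the $\keylemma$ applies. The key structural difference from the $\MSC$ case is that here we want to satisfy all constraints \emph{exactly} (no $(1-1/e-\eps)$ slack), so Stage~4 must be a genuine ``fixing'' step that restores feasibility, and the $\frac{e}{e-1}$ factor will enter through the LP rounding rather than through a bi-criteria relaxation.

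\textbf{Stage 1 (guessing expensive sets).} Unlike in $\MSC$, the additive-cost sets produced by the $\keylemma$ and by the Stage~4 fixing are \emph{not} drawn from a fixed optimal solution, so we cannot directly charge them. Instead, as flagged in the introduction, I would guess the $\Theta(g(r,1/\eps))$ most expensive sets of some fixed optimal solution $\A^*$ (where $g$ is the per-constraint cardinality bound $r\lceil \frac{2\ln(r/\eps)\ln(1/\eps)}{\eps^2}\rceil$ from the $\keylemma$, suitably inflated by a $1/\eps$ factor for the fixing step), delete every set of cost exceeding the cheapest guessed set, and pass to the residual instance. Deletion-closedness of $\F$ is exactly what guarantees the residual set system still lies in $\F$, so the $\beta$-approximation via the natural LP still holds on it. After this deletion, $c_{\max}$ on the residual instance is at most $\eps/g(r,1/\eps)$ times $c(\A^*_{\mathrm{guessed}}) \le \eps \cdot \OPT$, which is how the $(1+\eps)$ factor gets absorbed.

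\textbf{Stages 2--3 (LP solve and rounding).} Solve \ref{lp:fair-covering} on the residual instance to get $(\vecx,\vecz)$ of cost $\le \OPT'$. Now I would invoke the $\CCF$-specific rounding of \cite{chekuri2022covering}: it produces (via a combination of scaling the $x_i$ up by $\frac{e}{e-1}$, independent rounding, and using the $\beta$-approximation LP rounding on a residual set-cover instance) a collection that covers a $(1-1/e)$-ish fraction of each $z_j$ in expectation, at cost $\frac{e}{e-1}(1+\beta)$ times the LP value. The new ingredient is to interleave this with the $\keylemma$: before the concentration step, greedily pull out a constant number of sets per constraint (Claim~\ref{clm:lipschitz}) so that each residual coverage function $f_{i\mid S_i}$ is $\ell$-Lipschitz with $\ell = \eps^2/(2\ln(r/\eps))$ relative to its residual demand; then Lemma~\ref{thm:vondrak} gives that each of the $r$ constraints is met to a $(1-\eps)$ factor with probability $\ge 1-\eps/r$, so by a union bound all constraints are simultaneously $(1-\eps)$-satisfied with probability $\ge 1-\eps$. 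The cost of the extra greedily-chosen sets is the additive term, absorbed by Stage~1.

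\textbf{Stage 4 (fixing) and wrap-up.} With constant probability some constraints are only $(1-\eps)$-satisfied rather than fully satisfied; for each such constraint $i$, the residual deficit is at most $\eps b_i$, and I would fix it by solving a small set-cover/partial-cover subproblem on the residual instance using the $\beta$-approximation (or the $\frac{e}{e-1}(\beta+1)$ single-constraint algorithm of \cite{chekuri2022covering}), whose optimum is at most $\OPT'$; multiplying by the probability $\le \eps/r$ that constraint $i$ needs fixing and summing over $i$ contributes only $O(\eps)\,\OPT'$ to the expected cost. Collecting terms: $\E[c(\text{output})] \le c(\text{guessed}) + \frac{e}{e-1}(1+\beta)\OPT' + O(\eps)\OPT' \le (1+\eps)\frac{e}{e-1}(1+\beta)\OPT$ after reparametrizing $\eps$. \textbf{The main obstacle} I anticipate is Stage~1: because the $\keylemma$'s extra sets and the Stage~4 fixing sets are not part of any fixed optimum, bounding their cost requires care — one must argue that guessing a large enough constant-size prefix of $\A^*$ (by cost) and truncating makes \emph{every} subsequently chosen set cheap relative to $\OPT$, and one must verify that truncation interacts correctly with deletion-closedness and with the feasibility of the residual LP. A secondary subtlety is making sure the $\frac{e}{e-1}$ loss and the $(1-\eps)$ concentration loss compose into a clean $(1+\eps)$ overall, rather than stacking multiplicatively in a way that degrades the constant.
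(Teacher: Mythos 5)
Your proposal follows the same route as the paper's proof: guess a constant-size, cost-wise prefix of a fixed optimum and truncate (so that every later additive charge of $c_{\max}$ is absorbed), solve the residual LP, fully cover the highly-covered elements via the deletion-closed $\beta$-approximation, scale and round the shallow part via the $\keylemma$, and greedily fix failed constraints. Your Stage-1 accounting (inflating the guess by a $1/\eps$ factor so that the additive terms cost at most $\eps\,\OPT$) is a legitimate alternative to the paper's, which guesses only $r\lceil\frac{1}{\lipc}\ln\frac{1}{\eps}\rceil + r$ sets and charges the additive terms against a second copy of $\OPT_\h$, using $2 \le \frac{e}{e-1}(1+\beta)$.

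The genuine gap is in how the $(1-\eps)$ concentration loss composes with the requirement of \emph{exact} feasibility --- precisely the ``secondary subtlety'' you flagged but did not resolve. As written, your Stage 4 is internally inconsistent: you state that each constraint is met only ``to a $(1-\eps)$ factor'' with probability $\ge 1-\eps/r$, yet you charge the fixing cost with probability $\le \eps/r$ per constraint. If the good event only yields $(1-\eps)$-satisfaction, the fix is triggered with probability close to $1$ (not $\eps/r$); and since covering even a small residual deficit can cost up to $\OPT'$ (the size of the deficit does not control the cost of closing it), the expected fixing cost becomes $\Theta(r\cdot\OPT')$, destroying the bound. The paper closes this by scaling the shallow part of the LP solution by $\scaleccf$ rather than by $\frac{e}{e-1}$ alone, and by invoking \Cref{clm:ccf-extensions}, which says the multilinear extension of a weighted coverage function is at least $(1-1/e)$ times its LP coverage value; together these give $F'_i(\tilde{\vecx}^*) \ge \frac{1}{1-\eps}\,b''_i$, so the $\keylemma$'s $(1-\eps)$-approximate guarantee already yields \emph{exact} coverage of the residual requirement $b''_i$ with probability $1-\eps/r$. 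The greedy fix (\Cref{lem:greedy-ccf-single}) is then invoked only in the probability-$\eps/r$ failure event, and the same $\frac{1}{(1-1/e)(1-\eps)}$ slack guarantees that greedy, run to budget $\scaleccf\OPT\text{-}\LP$, covers the constraint completely rather than partially. With this correction your argument matches the paper's.
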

Unlike in the previous section, here we ensure that the constraints are exactly satisfied. 
%\begin{restatable}{theorem}{thmapxfairset}\label{thm:apx-fair-set}
%     Given an instance of the $\CCF$ problem such that the underlying set cover problem has a $\beta$ approximation via the natural $\LP$ and the set system is deletion closed, for any $\epsilon > 0$, Algorithm \ref{alg:apx-ccf} returns in polynomial time (in $\vert \Universe \vert,\, \vert \Sets \vert $) a set $\A \subseteq \Sets$ such that $f_i(\A) \geq b_i$ for all $i \in [r]$ and the expected cost of this set is at most $\left(\frac{e}{e-1}\right)(1 + \beta)(1 + \epsilon)$ times the optimal cost.
%\end{restatable}
%\thmmainresultccf*
In the following, we will use $\I = (\Universe, \Sets, A, b)$ to denote an instance of the $\CCF$ problem with set system $(\Universe, \Sets)$ and constraint matrix $A$ and requirements $b_i$ for $i \in [r]$.
\subsection{Algorithm Preliminaries}
Similar to Algorithm \ref{alg:msc-single} for the $\MSC$ problem, Algorithm \ref{alg:apx-ccf} described in this section for the $\CCF$ problem follows the framework of \Cref{sec:rf-framework}. However, unlike \Cref{sec:msc}, here we want to achieve a complete satisfaction of the constraints and this requires some additional changes. Before describing the details of our algorithm in \Cref{sec:ccf-alg-overview}, we define some operations and preliminaries that we will use in our algorithm. %will be performing during the algorithm at different points of time.

\subsubsection{Operations on constraints} \label{sec:ccf-alg-operations}
At times, we will restrict our instance to a subset of elements of the universe. We define this as follows.
\begin{definition}[Restricting the Universe] Given an instance $\I = (\Universe, \Sets, A, b)$ of the $\CCF$ problem, and a subset of elements, $\Universe_{sub} \subseteq \Universe$, we define restricting the Universe to elements of $\Universe_{sub}$ as follows.
    \begin{itemize}
        \item (Update set system) For each set $S \in \Sets$, $S \gets S \cap \Universe_{sub}$ and update the Universe $\Universe \gets \Universe_{sub}$
        \item (Update constraints) For each $j \in [n]$ such that $j \notin \Universe_{sub}$, delete $j^{th}$ column of matrix A.
    \end{itemize}
\end{definition}
We assume that the underlying set system of our initial $\CCF$ instance is deletion closed. Next, our algorithm chooses sets to be a part of the solution in multiple stages. Accordingly, we update the instance to reflect this selection. This is defined as follows.
\begin{definition}[Residual Instance-$\CCF$]
    Given an initial instance $\I = (\Universe, \Sets, A, b)$ and a collection $\F \subseteq \Sets$ of sets we define a residual instance of $\I$ with respect to $\F$ as follows:
    \begin{itemize}
        \item For all $j \in \cup_{S \in \F} S$, for all $i \in [r]$, set $A_{ij} = 0$
        \item For all $i \in [r]$, set $b_i \gets b_i - f_i(\F)$.
    \end{itemize}
    In addition, for all $i \in [r]$, we denote the submodular function corresponding to the $i^{th}$ constraint by $f_{i \mid \F}(\cdot)$.
\end{definition}

\subsubsection{Greedy algorithm to fix $\CCF$ constraints}\label{sec:ccf-prelim-greedy}
Recall that our framework requires us to have the ability to satisfy a single constraint. For the $\CCF$ problem, we use the following Greedy procedure from \cite{chekuri2022covering}. Consider a universe of size $n$ with a collection of $m$ subsets of the universe. Each element $j \in [n]$ has a weight $w_j$ and each set $i \in [m]$ has a cost $c_i$. Suppose we want to pick a sub-collection of sets with maximum weighted coverage such that a budget constraint is satisfied. We can write the following $\LP$ for this problem.
\begin{equation}\label{lp:budgeted-covering}\tag{$\LP$-$\MBC$}
 \begin{array}{r@{}c@{}l}
    & \max \sum_{j \in [n]} w_jz_j \\ 
\text{s.t.}     & \sum_{i : j \in S_i} x_i \geq z_j \\ 
  &  \sum_{i \in [m]}c_ix_i  \leq   B & \\ 
  & z_j \leq 1 & \text{for all } j \in [n] \\  
   & 0 \leq x_i \leq 1  & \text{for all } i \in [m]
\end{array}
\end{equation}
This is called the $\textsc{Max-Budgeted-Cover}$ problem. One can run the standard greedy algorithm for this problem and \cite{chekuri2022covering} shows the following guarantee.
\begin{lemma}[\textcite{chekuri2022covering}]\label{lem:greedy-ccf-single}
    Let $Z$ be the optimum value of (\ref{lp:budgeted-covering}) for a given instance of $\textsc{Max-Budgeted-Cover}$ with budget $B$. Suppose Greedy Algorithm is run till the total cost of sets is equal to or exceeds $B$ for the first time. Then the weight of elements covered by greedy is at least $(1 - \frac{1}{e})Z$.
\end{lemma}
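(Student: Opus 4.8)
\textbf{Proof proposal for Lemma~\ref{lem:greedy-ccf-single}.}

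The plan is to analyze the standard greedy algorithm for \textsc{Max-Budgeted-Cover} using a potential/telescoping argument that is by now classical for budgeted submodular (here, coverage) maximization, following the style of \textcite{sviridenko2004note} adapted to the LP optimum rather than the combinatorial optimum. Let $Z$ be the optimum of \eqref{lp:budgeted-covering}. The key observation is that $Z$ is at least the optimal \emph{fractional} weighted coverage at budget $B$, and in particular $Z$ upper bounds the weighted coverage of any integral feasible sub-collection; we will lower bound greedy's coverage directly against $Z$ by exploiting LP feasibility of the greedy ``next set'' choices.

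First I would set up notation: let the greedy algorithm pick sets $i_1, i_2, \ldots, i_k$ in order, where $i_t$ maximizes the ratio of marginal covered weight to cost, i.e. $i_t \in \argmax_{i} \frac{w(\bigcup_{S_i} \setminus C_{t-1})}{c_i}$, where $C_{t-1} = \bigcup_{s < t} S_{i_s}$ is the set of elements covered after $t-1$ steps, and let $g_t = w(C_t)$ be the covered weight. The algorithm stops at the first step $k$ where $\sum_{s \le k} c_{i_s} \ge B$. The core claim is that at every step $t$ with $C_{t-1}$ not yet covering weight $Z$, the chosen set satisfies
\[
\frac{g_t - g_{t-1}}{c_{i_t}} \;\ge\; \frac{Z - g_{t-1}}{B}.
\]
To see this, take an optimal fractional solution $(x^*, z^*)$ to \eqref{lp:budgeted-covering} of value $Z$. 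The residual weighted coverage of the sets still available, measured fractionally against $x^*$ restricted to uncovered elements, is at least $Z - g_{t-1}$ (since $z^*$ is dominated by the total fractional coverage, and the part already covered integrally by greedy contributes at most $g_{t-1}$); meanwhile $\sum_i c_i x_i^* \le B$. An averaging argument over the fractional mass $x^*$ then produces a single set whose marginal-weight-to-cost ratio is at least $(Z - g_{t-1})/B$, and greedy picks the best such ratio, giving the displayed inequality.

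Next I would telescope. Rewriting the inequality as $Z - g_t \le (Z - g_{t-1})\bigl(1 - \frac{c_{i_t}}{B}\bigr)$ and iterating over $t = 1, \ldots, k$ gives
\[
Z - g_k \;\le\; Z \prod_{t=1}^{k} \Bigl(1 - \frac{c_{i_t}}{B}\Bigr) \;\le\; Z \, e^{-\frac{1}{B}\sum_{t=1}^k c_{i_t}} \;\le\; Z e^{-1},
\]
where the last step uses the stopping rule $\sum_{t \le k} c_{i_t} \ge B$. Hence $g_k \ge (1 - 1/e)Z$, which is exactly the claimed bound (note the statement asks only for the weight covered by the sets greedy selects up to and including the step where the budget is first met or exceeded, so no truncation subtlety arises). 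I expect the main obstacle to be the averaging step that extracts a good single set from the fractional optimum: one must be careful that $z^*_j \le \min\{1, \sum_{i: j \in S_i} x^*_i\}$ and that elements already covered by $C_{t-1}$ are correctly discounted, so that the ``residual fractional value'' really is at least $Z - g_{t-1}$ and is supported on cost at most $B$. Everything else is routine; alternatively, since this lemma is quoted from \cite{chekuri2022covering}, one may simply cite it, but the self-contained argument above is short enough to include.
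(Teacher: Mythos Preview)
The paper does not prove this lemma; it is quoted verbatim from \cite{chekuri2022covering} and used as a black box, exactly as you yourself note at the end of your proposal. So there is nothing in the paper to compare against beyond the citation.

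Your self-contained argument is the standard one and is essentially correct. The averaging step is fine: with $(x^*,z^*)$ optimal for \eqref{lp:budgeted-covering}, one has $\sum_i x_i^* \cdot w(S_i\setminus C_{t-1}) = \sum_{j\notin C_{t-1}} w_j \sum_{i:j\in S_i} x_i^* \ge \sum_{j\notin C_{t-1}} w_j z_j^* \ge Z-g_{t-1}$ and $\sum_i c_i x_i^*\le B$, so some set attains ratio at least $(Z-g_{t-1})/B$. One small point you gloss over in the telescoping: the bound $\prod_t(1-c_{i_t}/B)\le e^{-\sum_t c_{i_t}/B}$ needs each factor nonnegative, i.e.\ $c_{i_t}\le B$. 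This holds automatically for $t<k$ (since the cumulative cost before step $k$ is below $B$), and if $c_{i_k}>B$ then the key inequality at step $k$ already gives $g_k-g_{k-1}>Z-g_{k-1}$, hence $g_k>Z$, which is stronger than needed. With that one-line case split the proof is complete.
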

\subsection{Algorithm Overview}\label{sec:ccf-alg-overview}
\paragraph{Stage 1: Guess the high cost elements.} Following the framework, we guess $(\frac{r}{\lipc} \ln \frac{1}{\epsilon} + r)$ high cost elements from some fixed optimal solution (Line \ref{step:guess-sets}). The number $\lipc$ is chosen to be $\frac{\epsilon^2}{2\ln \sfrac{r}{\epsilon}}$. The choice of this number will be clear in Analysis (Section \ref{sec:ccf-analysis}). After this, we construct a residual instance to reflect this selection and work with the residual instance for remaining stages.

\paragraph{Stage 2: Construct Fractional Solution $\vecx$ using the LP Relaxation.} Recall that the objective in $\CCF$ is to solve the integer program \ref{ip:fair-covering}. To obtain a fractional solution, in Line \ref{step:lp-sol} we solve the linear relaxation of this program \ref{lp:fair-covering} for the residual instance obtained after Stage 1 and obtain a solution $(\vecx^*, \vecz^*)$.
\paragraph{Stage 2$'$: Obtain a slack in covering.} 
%By solving \ref{lp:fair-covering}, we get a fractional solution $(\vecx^*, \vecz^*)$ that satisfies $\sum_{j \in [n]}A_{ij}z^*_j \geq b_i$. To use \Cref{lem:additive-bound}, we need a guarantee with respect to the multilinear extension of the function. However, the guarantees we have for $\vecx$ are with respect to a continuous extension known as the \emph{concave closure}, denoted by $f_i^*$. Luckily, it is known that the value of multilinear extension is point-wise at least $(1-\frac{1}{e})$ times the value of $f^*$ \cite{chekuri2011submodular}. %\footnote{This is called the correlation gap \cite{chekuri2011submodular}}. 
%Therefore, we get that under multilinear extension $F_i$, the fractional solution satisfies $F_i(\vecx^*, \vecz^*) \geq (1 - \frac{1}{e})b_i$. 

%In $\CCF$ problem, we want to cover the constraints fully. Since we lose value in different parts of rounding, we need to obtain a slack in the covering and for this purpose, we want to scale the solution by $\scaleccf.$ The choice of this scaling factor will be clear in Analysis (Section \ref{sec:ccf-analysis}). However, if all the $z^*_j$ are scaled by $\scaleccf$, then we would get $\sum_{j \in [n]}A_{ij}z^*_j \geq \scaleccf b_i$. This would allow us to apply \Cref{lem:additive-bound} in the next Stage and obtain full coverage. However, scaling $\vecx^*$ (or $\vecz^*$) by $\scaleccf$ directly will not always work since both $x_i$ and $z_j$ are bounded above by $1$ (and scaling may violate this upper bound). Therefore, 
We divide the elements into two categories: The \emph{heavy} elements ($H_e$) are those for which $z_j^* \geq (1-\frac{1}{e})(1 - \epsilon)$, while the \emph{shallow} elements are the remaining non-heavy elements. 

We will first cover the heavy elements completely (in Line \ref{step:heavy-elts}). This is done by restricting the universe to $H_e$, using $(\vecx^*, \vecz^*)$ restricted to $H_e$ as a solution for canonical $\LP$ of the canonical set cover problem and then using the $\beta$-approximation promised in the problem to obtain an integral covering. Here we use the fact that the instance is deletion closed. We then scale the shallow elements by $\scaleccf$ to obtain a corresponding slack in covering. 
\paragraph{Stage 3: Round fractional solution.} We take the scaled fractional covering of shallow elements and round it using a call to \Cref{lem:additive-bound} (Line \ref{step:scale-and-round}).

\paragraph{Stage 4: Greedy fixing of the solution.} For any constraint $i \in [r]$ that is not satisfied, we will fix it via a greedy algorithm. Following the greedy fix for $\textsc{Max-Budgeted-Cover}$ in \ref{sec:ccf-prelim-greedy}, this works as follows: From the collection of sets not picked, pick the sets greedily in order of decreasing bang-per-buck (ratio of marginal value of the set and its cost). 
%To use this in our algorithm, first note that any element that is covered heavily has already been picked in $\Sets_{he}$. Therefore, for any constraint $i \in [r]$ given by $\sum_{j \in [n]}A_{ij}z_j \geq b_i$, if it is not yet satisfied, we need to cover some more shallow covered elements. Now, when we restrict the universe to shallow covered elements the scaled solution $\scaleccf (\vecx^*, \vecz^*)$ is a feasible solution to the \ref{lp:budgeted-covering} for the $i^{th}$ constraint with a slack of $\scaleccf$ in the coverage. Therefore, applying Lemma \ref{lem:greedy-ccf-single}, at the cost of $\sum_{i \in [m]}c_ix^*_i + c_{\max}$ where $c_{\max}$ is the highest cost set in the collection left, we get a full coverage of shallow covered elements. Together with $\Sets_{he}$, we get a full coverage of the functions. The details of analysis are in Section \ref{sec:ccf-analysis}.
\begin{algorithm}[h!]
\caption{Pseudocode for Covering Coverage Functions}
  \label{alg:apx-ccf}
\DontPrintSemicolon
  \SetKwFunction{Define}{Define}
  \SetKwInOut{Input}{Input}\SetKwInOut{Output}{Output}
  \Input{An instance of $\CCF$ problem denoted by $\I = (\Universe, \Sets, A, b)$}
  \Output{A subset, $\A \subseteq \Sets$ of the sets satisfying the claims in Theorem \ref{thm:main-result-ccf}}
  \BlankLine
  With $\lipc = \frac{ \epsilon^2}{2\ln \sfrac{r}{\epsilon}}$ construct $\Sets_{pre}$ by guessing 
  the $r \cdot \lceil \frac{1}{\lipc}\ln \frac{1}{\epsilon}\rceil + r$ highest cost sets from a fixed optimal solution of $\I$. \tcp*[r]{Stage 1}\label{step:guess-sets} 
  Eliminate all sets with cost higher than any set in $\Sets_{pre}$ and then create $\I'$ as residual instance of $\I$ with respect to $\Sets_{pre}$\label{step:prune-large-cost}\;
  Compute $(\vecx^*, \vecz^*)$ as the optimal solution to linear program relaxation for the instance $\I'$, given in \ref{lp:fair-covering} \tcp*[r]{Stage 2}\label{step:lp-sol}
  Define $H_e \coloneqq \{j \mid z^*_j \geq (1 - \frac{1}{e}) \cdot (1 - \epsilon)\}$. Restrict the $\Universe$ to $H_e$, use $\scaleccf (\vecx^*, \vecz^*)$ as solution for the canonical set covering $\LP$ and use the corresponding algorithm to cover all elements in $H_e$. Call the sets selected here $\Sets_{he}$ \tcp*[r]{Stage 2'} \label{step:heavy-elts}
   Restrict $\Universe$ to $\Universe \setminus H_e$ and use $\scaleccf \cdot (\vecx^*, \vecz^*)$ as solution to fairness constrained \ref{lp:fair-covering} for the instance $\I'$. Round this using Lemma \ref{lem:additive-bound} and call the sets chosen here $\Sets_{sh}$ \tcp*[r]{Stage 3} 
 \label{step:scale-and-round}
  For $i \in [r]$ if $f_i$ is not satisfied, create $\G_i$ by choosing greedily (with respect to marginal value) sets from $\Sets$ till it is satisfied. \tcp*[r]{Stage 4} \label{step:greedy-fix}
  \Return{$\Sets_{pre} \cup {\Sets}_{he} \cup {\Sets}_{sh} \cup \big( \bigcup_{i \in [r]}\G_i \big)$}
\end{algorithm}
\subsection{Analysis}\label{sec:ccf-analysis}
First note that since in Line $\ref{step:greedy-fix}$ we fix any unsatisfied constraint, the collection of sets returned by Algorithm \ref{alg:apx-ccf} clearly covers all constraints fully. Therefore, we now focus on establishing a bound on cost of the sets selected and the time complexity of the algorithm.
Before proceeding, we introduce some notation. Let the cost of integral optimal solution be $\OPT$, the cost of highest cost $ (r \lceil \frac{1}{\lipc}\ln(\frac{1}{\epsilon})\rceil + r)$ sets from integral optimal be $\OPT_\h$ and the cost of remaining sets be $\OPT_{\l}$. For the $\LP$ solution obtained in Line \ref{step:lp-sol}, let the cost be $\OPT$-$\LP$. Finally, for any collection of sets, $\A$, we use $c(\A)$ to denote the total cost of these sets.

\begin{claim}\label{clm:ccf-high-sets}
    $c(\Sets_{pre}) = \OPT_\h$
\end{claim}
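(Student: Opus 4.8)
The plan is to unwind definitions, once the meaning of ``guessing correctly'' is pinned down. Write $k := r\lceil \tfrac{1}{\lipc}\ln\tfrac{1}{\epsilon}\rceil + r$ for the number of sets pre-selected in Line~\ref{step:guess-sets}, and fix an integral optimal solution $\A^*$ of $\I$ with $c(\A^*) = \OPT$. First I would recall that the ``guessing'' of Stage~1 means enumerating every size-$k$ subcollection of $\Sets$, and that the rest of the analysis is carried out in the iteration where this enumeration coincides with the $k$ highest-cost sets of $\A^*$ (ties broken by any fixed rule); this iteration produces $\Sets_{pre}$. With this convention, $\Sets_{pre}$ is, by construction, precisely the collection of the $k$ most expensive sets of $\A^*$.

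Next I would simply invoke the definition of $\OPT_\h$ given in the paragraph preceding the claim: it is the total cost of the $k$ highest-cost sets of the integral optimum, which we may take to be the same fixed solution $\A^*$ used to define $\Sets_{pre}$ (we are free to choose which optimum to fix). Since $\Sets_{pre}$ is exactly that collection of sets, summing costs gives $c(\Sets_{pre}) = \OPT_\h$. The same bookkeeping also yields $c(\A^*\setminus\Sets_{pre}) = \OPT_\l$ and hence $\OPT = \OPT_\h + \OPT_\l$, identities that get used later in the cost analysis.

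The only case worth flagging --- and the nearest thing to an obstacle --- is the degenerate one where $|\A^*| \le k$, so that no size-$k$ subcollection of $\A^*$ exists. Here I would adopt the same convention as in the $\MSC$ algorithm (where one returns $S_{pre}$ outright when $|S^*| \le L$): in the relevant iteration $\Sets_{pre} = \A^*$, so $c(\Sets_{pre}) = \OPT = \OPT_\h$ with $\OPT_\l = 0$, and the claim still holds. Beyond this edge case there is genuinely nothing to prove: the claim records only that, on the correct guess, $\Sets_{pre}$ recovers the expensive part of the optimum exactly, which is immediate from how $\Sets_{pre}$ and $\OPT_\h$ are defined.
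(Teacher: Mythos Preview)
Your proposal is correct and follows the same approach as the paper: the claim is immediate from the definitions once one fixes the iteration where $\Sets_{pre}$ coincides with the $k$ highest-cost sets of the optimal solution, and both you and the paper simply point this out. Your treatment is more careful (handling the degenerate case $|\A^*|\le k$ and spelling out the enumeration convention), but the underlying argument is identical.
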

\begin{proof}
    This follows directly since in Line \ref{step:guess-sets} we guess sets from the optimal solution. As a result, these sets  will not contribute any more than what they do in the optimal.
\end{proof}
\begin{claim}\label{clm:ccf-lp-cost}
    $\OPT$-$\LP \leq \OPT_{\l}$
\end{claim}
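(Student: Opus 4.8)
$\OPT$-$\LP \leq \OPT_{\l}$.

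The plan is to exhibit a feasible fractional solution to \ref{lp:fair-covering} for the residual instance $\I'$ whose cost is exactly $\OPT_{\l}$, and then conclude that the optimal LP value $\OPT$-$\LP$ is at most this. The natural candidate is the integral optimal solution of the \emph{original} instance $\I$ with the guessed sets removed. Concretely, let $\mathcal{O}$ denote a fixed integral optimal solution to $\I$, and recall that $\Sets_{pre}$ in Line~\ref{step:guess-sets} consists of the $r\lceil \frac{1}{\lipc}\ln\frac{1}{\epsilon}\rceil + r$ highest-cost sets of $\mathcal{O}$. Set $\mathcal{O}_{\l} := \mathcal{O} \setminus \Sets_{pre}$, so that $c(\mathcal{O}_{\l}) = \OPT_{\l}$ by definition.

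First I would verify that $\mathcal{O}_{\l}$ survives the pruning in Line~\ref{step:prune-large-cost}: every set eliminated there has cost strictly higher than some set in $\Sets_{pre}$, but every set in $\mathcal{O}_{\l}$ has cost at most the minimum cost in $\Sets_{pre}$ (since $\Sets_{pre}$ took the highest-cost sets of $\mathcal{O}$), so no set of $\mathcal{O}_{\l}$ is removed and $\mathcal{O}_{\l} \subseteq \Sets'$, the set system of $\I'$. Next I would define the natural fractional point: $x_i = 1$ for $S_i \in \mathcal{O}_{\l}$ and $x_i = 0$ otherwise, with $z_j = 1$ if $j$ is covered by $\mathcal{O}_{\l}$ in the residual universe and $z_j = 0$ otherwise. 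The cost of this point is $\sum_i c_i x_i = c(\mathcal{O}_{\l}) = \OPT_{\l}$.

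The substantive step is checking feasibility of $(\vecx, \vecz)$ for \ref{lp:fair-covering} on $\I'$ — in particular the constraint $A'\vecz \ge b'$, where $A'$ and $b'$ are the residual matrix and requirements after zeroing out columns for elements covered by $\Sets_{pre}$ and setting $b'_i = b_i - f_i(\Sets_{pre})$. Since $\mathcal{O} = \Sets_{pre} \cup \mathcal{O}_{\l}$ is feasible for $\I$, we have $f_i(\Sets_{pre} \cup \mathcal{O}_{\l}) \ge b_i$ for each $i$; by the definition of the residual constraint (coverage of elements outside $\cup_{S\in\Sets_{pre}} S$, weighted by $A$), this rearranges to $f_{i\mid\Sets_{pre}}(\mathcal{O}_{\l}) \ge b_i - f_i(\Sets_{pre}) = b'_i$, which is exactly $(A'\vecz)_i \ge b'_i$ for our choice of $\vecz$. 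The coverage-linking constraints $\sum_{i: j\in S_i} x_i \ge z_j$ and the box constraints $z_j \le 1$, $x_i \in [0,1]$ hold by construction. The main thing to be careful about is bookkeeping: making sure the residual universe, the zeroed columns, and the meaning of $f_i(\Sets_{pre})$ line up so that feasibility of $\mathcal{O}$ in $\I$ translates cleanly into feasibility of $(\vecx,\vecz)$ in the residual LP. Once feasibility is established, $\OPT$-$\LP \le \sum_i c_i x_i = \OPT_{\l}$ follows immediately since $\OPT$-$\LP$ is the minimum over all feasible fractional solutions.
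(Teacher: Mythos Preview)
Your proposal is correct and takes essentially the same approach as the paper: exhibit the integral solution $\mathcal{O}\setminus\Sets_{pre}$ as a feasible point for the residual LP $\I'$ (noting it survives the cost-based pruning) and conclude $\OPT\text{-}\LP \le c(\mathcal{O}\setminus\Sets_{pre}) = \OPT_{\l}$. The paper's proof is just a terser version of the same argument, stating only that the sets constituting $\OPT_\l$ form a feasible solution for \ref{lp:fair-covering} on $\I'$ without spelling out the constraint-by-constraint verification you provide.
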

\begin{proof}
    We solve \ref{lp:fair-covering} for the instance $\I'$. This instance is constructed by (1) Selecting some highest cost sets from optimal (2) Removing all sets with cost higher than any set picked. Since the higher cost sets cannot be part of the optimal solution, the sets constituting $\OPT_{\l}$ form a feasible solution for the $\LP$-$\CCF$ of $\I'$ and this proves the claim. 
\end{proof}
\begin{lemma}
    $c(\Sets_{he}) \leq \beta \cdot (\frac{e}{e-1}) \cdot (\frac{1}{1-\epsilon})\cdot \OPT_\l$
\end{lemma}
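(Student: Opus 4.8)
The plan is to argue that the scaled fractional point $\scaleccf\cdot\vecx^*$ is a feasible fractional solution to the \emph{canonical} set cover LP on the universe restricted to $H_e$, and then invoke the promised $\beta$-approximation. First I would unpack what it means for $j$ to be heavy: by definition $z_j^* \ge (1-\tfrac1e)(1-\epsilon)$, and the LP constraint $\sum_{i:j\in S_i}x_i^* \ge z_j^*$ of \ref{lp:fair-covering} gives $\sum_{i:j\in S_i}x_i^* \ge (1-\tfrac1e)(1-\epsilon)$ for every $j\in H_e$. Multiplying by $\scaleccf=\left(\tfrac{e}{e-1}\right)\left(\tfrac{1}{1-\epsilon}\right)$ yields $\sum_{i:j\in S_i}\scaleccf\, x_i^* \ge 1$ for all $j\in H_e$, which is exactly the covering requirement of the canonical (full) set cover LP on the restricted instance. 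Hence $\scaleccf\cdot\vecx^*$ (restricted to the sets intersecting $H_e$, with the $x$-values truncated at $1$, which only decreases cost) is LP-feasible for set-covering $H_e$, and its cost is at most $\scaleccf\cdot\sum_i c_i x_i^* = \scaleccf\cdot(\OPT\text{-}\LP)$.

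Next I would use the deletion-closed hypothesis: the instance obtained by restricting the universe to $H_e$ (deleting the points of $\Universe\setminus H_e$, and the empty sets that result) still belongs to the family $\mathcal F$, so by assumption its set cover problem admits a $\beta$-approximation through the natural LP. Concretely, the integral cover $\Sets_{he}$ produced in Line~\ref{step:heavy-elts} satisfies $c(\Sets_{he}) \le \beta \cdot \mathrm{OPT}_{\mathrm{LP}}(H_e)$, where $\mathrm{OPT}_{\mathrm{LP}}(H_e)$ is the optimum of the canonical LP for covering $H_e$. Since $\scaleccf\cdot\vecx^*$ is a feasible point for that LP, $\mathrm{OPT}_{\mathrm{LP}}(H_e)\le \scaleccf\cdot(\OPT\text{-}\LP)$, so $c(\Sets_{he}) \le \beta\cdot\scaleccf\cdot(\OPT\text{-}\LP)$.

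Finally I would invoke Claim~\ref{clm:ccf-lp-cost}, which gives $\OPT\text{-}\LP \le \OPT_\l$, to conclude
\[
c(\Sets_{he}) \;\le\; \beta\cdot\scaleccf\cdot(\OPT\text{-}\LP) \;\le\; \beta\cdot\left(\tfrac{e}{e-1}\right)\left(\tfrac{1}{1-\epsilon}\right)\cdot\OPT_\l,
\]
which is the claimed bound.

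There is no serious obstacle here; the only points requiring care are (i) verifying that the ``$\beta$-approximation via the natural LP'' hypothesis is genuinely an LP-relative guarantee (so that feasibility of $\scaleccf\cdot\vecx^*$, rather than optimality, suffices), and (ii) checking that the restriction to $H_e$ lands back in the deletion-closed family and that truncating the scaled $x$-values to $[0,1]$ does not break feasibility or increase cost. Both are routine given the definitions in the excerpt, so the argument is essentially the short chain of inequalities above.
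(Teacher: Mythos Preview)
Your proposal is correct and follows essentially the same argument as the paper: scale $(\vecx^*,\vecz^*)$ by $\scaleccf$ so that every heavy element is fractionally covered to extent at least $1$, observe this yields a feasible solution to the canonical set cover LP on the restricted (deletion-closed) instance over $H_e$, apply the $\beta$-approximation relative to that LP, and finish with Claim~\ref{clm:ccf-lp-cost}. The paper phrases the scaling via $(\tilde\vecx^*,\tilde\vecz^*)$ with $\tilde z_j^*=\min\{\scaleccf z_j^*,1\}$ rather than working directly with $\vecx^*$, but the chain of inequalities is identical.
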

\begin{proof}
Consider the fractional solution where $\Tilde{\vecx}^* = (\frac{e}{e-1})\cdot (\frac{1}{1-\epsilon}) \vecx^*$ and $\Tilde{\vecz}^* = \min \{ (\frac{e}{e-1})\cdot (\frac{1}{1-\epsilon}) \vecz^*, 1\}$ where the minimum is element wise. Restrict the universe to elements in $H_e$ (defined in Algorithm line \ref{step:heavy-elts}). By definition, for all $j \in H_e$, $z_j^* \geq (1- \frac{1}{e})(1-\epsilon)$ therefore, $\Tilde{z}_j^* = 1$. This implies $(\Tilde{\vecx}^*, \Tilde{\vecz}^*)$ is a feasible solution to the natural Set Cover $\LP$ for covering $H_e$. Let the optimal value of this natural $\LP$ be $\OPT$-$\LP_{he}$. Since we assume that the underlying set cover family is deletion closed and has a $\beta$-approximation via its natural $\LP$, we can use this underlying algorithm to obtain $\Sets_{he}$ whose cost is a $\beta$-approximation to $\OPT$-$\LP_{he}$.  Therefore,
    \begin{align*}
        c(\Sets_{he}) \leq \beta \cdot \OPT\text{-}\LP_{he} \leq \beta \cdot c(\Tilde{\vecx}^*, \Tilde{\vecz}^*) &\leq \beta \cdot \left(\frac{e}{e-1}\right) \cdot \left(\frac{1}{1-\epsilon}\right)\cdot \OPT\text{-}\LP \\
        &\leq \beta \scaleccf \OPT_\l
    \end{align*}
   where $c(\Tilde{\vecx}^*, \Tilde{\vecz}^*) = \sum_{i \in [m]} c_i\Tilde{x}^*_i$. This proves the lemma.
\end{proof}
Before we move on with the proof of the theorem, we will prove a technical claim regarding the constraints in the $\CCF$ problem. Recall that $f_i$ is the submodular function form of the $i^{th}$ constraint and $\sum_{j \in [n]}A_{ij}z_j$ is the constraint from the matrix form.
\begin{claim}\label{clm:ccf-extensions}
    For all $i \in [r]$, given any fractional point $(\vecx, \vecz)$ that is a feasible solution to \ref{lp:fair-covering}, we have $F_i(\vecx) \geq (1 - \frac{1}{e})\sum_{j \in [n]}A_{ij}z_j$, where $F_i$ is the multilinear extension of $f_i$. 
\end{claim}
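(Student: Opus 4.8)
The plan is to unfold the definition of the multilinear extension of the weighted coverage function $f_i$ and then invoke two elementary inequalities. Recall that $f_i$ is the weighted coverage function $f_i(\F) = \sum_{j\in[n]} A_{ij}\,\1[j \in \bigcup_{S\in\F}S]$ with nonnegative weights $A_{ij}$. By \Cref{def:mle}, $F_i(\vecx) = \E[f_i(\R)]$ where $\R$ is the random subcollection obtained by including each set $S_k \in \Sets$ independently with probability $x_k$. Since whether $j$ is covered depends only on the sets containing $j$, linearity of expectation gives
\[
F_i(\vecx) \;=\; \sum_{j\in[n]} A_{ij}\,\Pr[\,j\text{ is covered}\,] \;=\; \sum_{j\in[n]} A_{ij}\Bigl(1 - \prod_{k:\, j\in S_k}(1-x_k)\Bigr).
\]

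Next I would lower bound the per-element coverage probability. Using $1-t \le e^{-t}$ termwise, $\prod_{k:\, j\in S_k}(1-x_k) \le \exp\bigl(-\sum_{k:\, j\in S_k} x_k\bigr)$, and feasibility of $(\vecx,\vecz)$ for \ref{lp:fair-covering} gives $\sum_{k:\, j\in S_k} x_k \ge z_j$, so $1 - \prod_{k:\, j\in S_k}(1-x_k) \ge 1 - e^{-z_j}$. Then, since $g(t) = 1 - e^{-t}$ is concave on $[0,1]$ with $g(0)=0$ and $g(1)=1-\tfrac1e$, we have $g(z_j) \ge (1-z_j)g(0) + z_j\, g(1) = (1-\tfrac1e) z_j$ for every $z_j \in [0,1]$ (and $z_j \le 1$ holds by the LP constraints). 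Combining these two steps yields $1-\prod_{k:\, j\in S_k}(1-x_k) \ge (1-\tfrac1e) z_j$ for each $j$.

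Finally, multiplying by the nonnegative weights $A_{ij}$ and summing over $j \in [n]$ gives $F_i(\vecx) \ge (1-\tfrac1e)\sum_{j\in[n]} A_{ij} z_j$, which is exactly the claim. I do not expect any real obstacle here: the only point that needs a little care is correctly identifying the multilinear extension of a coverage function with the ``probability that an element is covered'' expression and ensuring the index for sets (here $k \in [m]$) is kept distinct from the constraint index $i \in [r]$; after that, the argument is the standard LP-rounding estimate for maximum coverage.
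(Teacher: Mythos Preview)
Your proof is correct. The paper's own proof takes the same route conceptually but is less explicit: it observes that $\sum_{j}A_{ij}\min\{1,\sum_{k:j\in S_k}x_k\}$ is the so-called $f^*$ (or concave) extension of the weighted coverage function $f_i$, and then cites the known fact (from \cite{CCPV07,Vondrak-thesis}) that the multilinear extension $F$ of a coverage function satisfies $F(\vecx)\ge(1-\tfrac1e)f^*(\vecx)$ pointwise. Your argument simply unpacks that cited inequality via the two standard estimates $1-t\le e^{-t}$ and the concavity bound $1-e^{-t}\ge(1-\tfrac1e)t$ on $[0,1]$, which is exactly how one proves the $F\ge(1-\tfrac1e)f^*$ gap for coverage functions in the first place. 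So the two proofs are the same in substance; yours is self-contained whereas the paper's defers to a reference.
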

\begin{proof}
    Since $(\vecx, \vecz)$ is a fractional point, $\sum_{j \in [n]}A_{ij}z_j = \sum_{j \in [n]} A_{ij} \min\{1, \sum_{i: j \in S_i} x_i\}$ can be viewed as a continuous extension of the corresponding submodular function $f_i$ with respect to the variablex $\vecx$. In fact, for a (weighted) coverage function $f$, this extension is called the $f^*$ extension. The value of $f^*$ is point-wise at least as much as $F$ and the gap between them is known to be at most $1-\sfrac{1}{e}$ (see \cite{CCPV07,Vondrak-thesis}). Therefore, for the $i^{th}$ constraint, $F_i(\vecx) \ge (1-1/e) \sum_{j \in [n]}A_{ij}z_j$, 
    which proves the claim.
\end{proof}
\begin{lemma}\label{lem:ccf-greedy-sets-bound}
    $\E[c(\Sets_{sh} \cup \bigcup_{i \in [r]} \G_i)] \leq \OPT_\h + \scaleccf \cdot \OPT_\l + \epsilon \OPT$.
\end{lemma}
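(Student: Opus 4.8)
The plan is to split $c(\Sets_{sh}\cup\bigcup_{i\in[r]}\G_i)\le c(\Sets_{sh})+\sum_{i\in[r]}c(\G_i)$ and bound the expectation of each piece, the first via the \keylemma~(\Cref{lem:additive-bound}) and the second via \Cref{lem:greedy-ccf-single} together with the fact that each $\G_i$ is nonempty only with small probability.

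First I would analyze $\Sets_{sh}$ by checking the two hypotheses of \Cref{lem:additive-bound} for the call in Line~\ref{step:scale-and-round}, where it is invoked on the sets of the residual instance $\I'$, with the residual coverage functions $f_{i\mid\Sets_{pre}}$ restricted to the shallow elements $\Universe\setminus H_e$, and with the fractional point $\scaleccf\vecx^*$. The cost hypothesis is immediate: $\sum_j c_j\,\scaleccf x^*_j=\scaleccf\cdot(\OPT\text{-}\LP)\le\scaleccf\cdot\OPT_\l$ by \Cref{clm:ccf-lp-cost}. For coverage I would show that for each $i$ the multilinear extension of $f_{i\mid\Sets_{pre}}$ restricted to $\Universe\setminus H_e$, evaluated at $\scaleccf\vecx^*$, is at least $\tfrac1{1-\eps}\widehat b_i$, where $\widehat b_i$ is the requirement of constraint $i$ left after $\Sets_{he}$ covers all of $H_e$. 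This uses: (a) since $\Sets_{he}$ covers $H_e$ entirely (Line~\ref{step:heavy-elts}, using deletion-closedness) and the residual LP gives $\sum_j A_{ij}z^*_j\ge b_i$, the residual requirement satisfies $\widehat b_i\le\sum_{j\in\Universe\setminus H_e}A_{ij}z^*_j$; (b) every shallow element has $z^*_j<(1-\tfrac1e)(1-\eps)$, hence $\scaleccf z^*_j<1$, so the coordinate-wise truncation at $1$ is never active on $\Universe\setminus H_e$; and (c) \Cref{clm:ccf-extensions} lower bounds the multilinear extension by $(1-\tfrac1e)$ times the $f^\ast$-value, so the multilinear value is at least $(1-\tfrac1e)\sum_{j\in\Universe\setminus H_e}A_{ij}\scaleccf z^*_j=\tfrac1{1-\eps}\sum_{j\in\Universe\setminus H_e}A_{ij}z^*_j\ge\tfrac1{1-\eps}\widehat b_i$. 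Invoking \Cref{lem:additive-bound} with requirements $\tfrac1{1-\eps}\widehat b_i$ then yields $\Sets_{sh}$ with $\E[c(\Sets_{sh})]\le\scaleccf\OPT_\l+r\lceil\tfrac{2\ln(r/\eps)\ln(1/\eps)}{\eps^2}\rceil c_{\max}$, and with probability at least $1-\eps/r$ it meets each residual constraint to a factor $(1-\eps)\cdot\tfrac1{1-\eps}=1$, i.e. $f_i$ is already fully satisfied by $\Sets_{pre}\cup\Sets_{he}\cup\Sets_{sh}$ and $\G_i=\emptyset$. Since $\lipc=\tfrac{\eps^2}{2\ln(r/\eps)}$, the coefficient above equals $r\lceil\tfrac1\lipc\ln\tfrac1\eps\rceil$, exactly the number of sets guessed into $\Sets_{pre}$ minus $r$; because Line~\ref{step:prune-large-cost} deletes every set costlier than some set of $\Sets_{pre}$, we have $c_{\max}\le\min_{S\in\Sets_{pre}}c(S)\le\OPT_\h\big/(r\lceil\tfrac1\lipc\ln\tfrac1\eps\rceil+r)\le\OPT_\h\big/(r\lceil\tfrac1\lipc\ln\tfrac1\eps\rceil)$, so the additive term is at most $\OPT_\h$ and $\E[c(\Sets_{sh})]\le\scaleccf\OPT_\l+\OPT_\h$.

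For the greedy-fix sets, fix $i$; by the above $\Pr[\G_i\ne\emptyset]\le\eps/r$. When $\G_i$ is built (Line~\ref{step:greedy-fix}), bang-per-buck greedy is run over $\Sets$ until constraint $i$ is met, and I would bound it through \Cref{lem:greedy-ccf-single} with budget $B=\scaleccf\OPT_\l$: the point $\scaleccf\vecx^*$ is feasible for \ref{lp:budgeted-covering} with cost at most $\scaleccf\cdot(\OPT\text{-}\LP)\le\scaleccf\OPT_\l$, and in the residual instance its $f^\ast$-coverage of constraint $i$ is at least $\scaleccf$ times the amount $\rho_i\ (\le\widehat b_i)$ still needed — again using that $\Sets_{he}$ covers $H_e$ so the outstanding requirement lies on shallow coordinates with $\scaleccf z^*_j<1$, and here we do not lose the $(1-\tfrac1e)$ factor since \ref{lp:budgeted-covering} uses the $f^\ast$-value rather than the multilinear extension. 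Hence the \ref{lp:budgeted-covering}-optimum at budget $B$ is at least $\tfrac{e}{e-1}\rho_i$, greedy covers at least $(1-\tfrac1e)\cdot\tfrac{e}{e-1}\rho_i=\rho_i$ by the time its cost reaches $B$, and it stops with total cost at most $B+c_{\max}\le\scaleccf\OPT_\l+c_{\max}$. Therefore $\E[c(\G_i)]\le\tfrac\eps r(\scaleccf\OPT_\l+c_{\max})$, and summing over $i$ with $c_{\max}\le\OPT_\h$ gives $\E[\sum_i c(\G_i)]\le\eps(\scaleccf\OPT_\l+\OPT_\h)$. Combining, $\E[c(\Sets_{sh}\cup\bigcup_i\G_i)]\le(1+\eps)(\scaleccf\OPT_\l+\OPT_\h)\le\scaleccf\OPT_\l+\OPT_\h+\eps\,\scaleccf\cdot\OPT$, and replacing $\eps$ by a suitable constant multiple of itself at the outset (which perturbs $\scaleccf$ only by lower-order terms) gives the stated bound $\OPT_\h+\scaleccf\OPT_\l+\eps\OPT$.

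I expect the main obstacle to be the coverage verification, step (c): one must track that after restricting to the shallow elements, scaling by $\scaleccf$, and passing through the $f^\ast$-versus-multilinear gap of \Cref{clm:ccf-extensions}, each constraint's multilinear value still sits above $\tfrac1{1-\eps}$ times the \emph{correct} residual requirement — and in particular that covering $H_e$ completely with $\Sets_{he}$ really does reduce the outstanding requirement of every constraint to at most the LP mass placed on shallow elements, and that the truncation at $1$ is inactive on exactly the shallow coordinates. The bookkeeping across the two nested residual operations (with respect to $\Sets_{pre}$, then $\Sets_{he}$) is the delicate part; once that is in place, charging the additive term of \Cref{lem:additive-bound} to $\OPT_\h$ and the greedy fix to the $\eps/r$-probability event is routine.
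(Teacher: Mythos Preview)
Your proposal is correct and follows essentially the same route as the paper's proof: bound $c(\Sets_{sh})$ via the \keylemma, charge its additive $c_{\max}$ term to $\OPT_\h$ using that $c_{\max}\le\min_{S\in\Sets_{pre}}c(S)$ and $|\Sets_{pre}|=r\lceil\tfrac1\lipc\ln\tfrac1\eps\rceil+r$; verify via \Cref{clm:ccf-extensions} that the scaled fractional point over-satisfies each residual constraint by a $(1-\eps)^{-1}$ factor in the multilinear sense so that $\Pr[\G_i\neq\emptyset]\le\eps/r$; bound $c(\G_i)$ through \Cref{lem:greedy-ccf-single} with budget $\scaleccf\cdot\OPT\text{-}\LP$; and finally absorb the extra $\scaleccf$ on the $\eps\,\OPT$ term by rescaling $\eps$. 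The only cosmetic difference is that the paper keeps the $r\,c_{\max}$ term from the greedy-fix bound separate and folds it into $c(\Sets_{pre})=\OPT_\h$ together with the $(\tfrac{r}{\lipc}\ln\tfrac1\eps)c_{\max}$ term, whereas you bound it directly by $\eps\,\OPT_\h$; both give the same final inequality.
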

\begin{proof}
    Consider first the sets $\Sets_{sh}$. These are the sets returned by Lemma \ref{lem:additive-bound}. In the algorithm, we invoke Lemma \ref{lem:additive-bound} to round the fractional solution $\scaleccf (\vecx^*, \vecz^*)$ restricted to $\Universe \setminus H_e$. Since the fractional solution is obtained for the residual instance after Stage 1, the cost of this fractional solution is $\scaleccf \sum_{i \in \Sets \setminus \Sets_{pre}} c_ix^*_i = \scaleccf \OPT\text{-}\LP$. Therefore defining $c_{max} = \max_{S \in Sets \setminus \Sets_{pre}} c_S$ and using Lemma \ref{lem:additive-bound}, the cost of $\Sets_{sh}$ is bounded as follows
    \begin{align*}
        c(\Sets_{sh}) &\leq \left( \frac{r}{\lipc} \ln \frac{1}{\epsilon}\right) c_{\max} + \scaleccf \OPT\text{-}\LP \\
        &\leq \left(\frac{r}{\lipc} \ln \frac{1}{\epsilon}\right) c_{\max} + \scaleccf \OPT_\l \\
    \end{align*}
    Here the second inequality follows from Claim \ref{clm:ccf-lp-cost}.

    Now, we want to bound the expected cost of $\cup_{i \in [r]} \G_i$. To do this, first we will bound the probability that the algorithm applies a greedy fix. Consider any constraint $i \in [r]$ given by $\sum_{j \in [n]}A_{ij}z_j$. If the constraint is satisfied by the elements covered in $\Sets_{pre} \cup \Sets_{he} \cup \Sets_{sh}$ the greedy fix will not apply. Therefore, suppose the constraint is not satisfied after this point. We will bound the probability of this event. Suppose the residual requirement for the $i^{th}$ constraint after Line \ref{step:guess-sets} (guessing $\Sets_{pre}$) is $b_i'$ and the residual constraint matrix is given by $(A'_{ij})_{i \in [n], j \in [m]}$. Then, with the $\LP$ solution $(\vecx^*, \vecz^*)$, we have $\sum_{j \in [n]}A'_{ij}z^*_j \geq b_i'$. After this, we cover elements in $H_e$ using $\Sets_{he}$. Suppose the residual requirement after covering elements in $H_e$ is $b_i''$. Then we have $\sum_{j \in [n] \setminus H_e} A'_{ij}z_j^* \geq b_i''$. After this, we scale the solution by $\scaleccf$. Let this scaled solution be $(\Tilde{\vecx}^*, \Tilde{\vecz}^*)$. We then have $\sum_{j \in [n] \setminus H_e} A'_{ij}\Tilde{z}_j^* \geq \scaleccf b''_i$. To use the guarantees of Lemma \ref{lem:additive-bound}, we need to work with multilinear extension. Using \Cref{clm:ccf-extensions}, $F'_i(\Tilde{\vecx}^*) \geq (1-\frac{1}{e}) \cdot \scaleccf b''_i = (\frac{1}{1-\epsilon})b''_i$. Therefore, using the guarantee of Lemma \ref{lem:additive-bound}, we get that the returned solution satisfies $f'_i(\Sets_{sh}) \geq b''_i$ with probability at least $1 - \frac{\epsilon}{r}$. Therefore, with probability at least $1 - \frac{\epsilon}{r}$ we have,
    \begin{equation}\label{eqn:ccf-total-cover}
    \begin{aligned}
        f_i(\Sets_{pre} \cup \Sets_{he} \cup \Sets_{sh}) &= f_i(\Sets_{pre}) + f_i(\Sets_{he} \cup \Sets_{sh} \mid \Sets_{pre}) \\ 
        &= f_i(\Sets_{pre}) + f'_i(\Sets_{he} \cup \Sets_{sh}) \\
        &= (b_i - b'_i) + \sum_{j \in H_e} A'_{ij} + \sum_{j \in (\cup_{S \in \Sets_{sh}} S)} A'_{ij} \\
        &= (b_i - b'_i) + (b'_i - b''_i) + (b''_i) \\
        &= b_i
    \end{aligned}
    \end{equation}
    This implies, the probability that greedy fix is applied for a constraint is at most $\frac{\epsilon}{r}$. Now to bound the total expected cost, we will bound the cost of each $\G_i$ in the following claim.

    \begin{claim}
        For all $i \in [r]$, $c(\G_i) \leq \scaleccf \OPT\text{-}\LP + c_{max}$. 
    \end{claim}
    \begin{proof}
     Recall from Algorithm \ref{alg:apx-ccf}, $\G_i$ is selected by picking greedily until the constraint is satisfied. To bound the cost of this, we have to connect the value of sets picked greedily and the cost of these sets. \Cref{lem:greedy-ccf-single} does exactly this for us. In particular, note that $(\Tilde{\vecx}^*, \Tilde{\vecz}^*)$ is a valid solution to \ref{lp:budgeted-covering} for any fixed constraint (say, $i \in [r]$). The cost of this solution is $\scaleccf \OPT\text{-}\LP$ and $\sum_{j \in [n] \setminus H_e} A'_{ij}z_j^* \geq b_i''$. Therefore, using Lemma \ref{lem:greedy-ccf-single}, we get that we can select sets $\Sets_{sh}$ such that $c(\Sets_{sh}) \leq \scaleccf \OPT\text{-}\LP + c_{\max}$ where $c_{\max}$ is the highest cost set in $\Sets \setminus \Sets_{pre}$, and  $f'_i(\Sets_{sh}) \geq (\frac{1}{1-\epsilon}) b_i'' \geq b_i''$. Following the same argument as in $\Cref{eqn:ccf-total-cover}$, we get that this satisfies the covering constraints with probability 1.
    \end{proof}
     We can now continue with our proof. The expected cost of $\cup_{i \in [r]} \G_i$ is bounded as follows:
    \begin{align*}
        \E[c(\cup_{i \in [r]} \G_i)] &\leq r \cdot \left(\frac{\epsilon}{r} \cdot \scaleccf \OPT\text{-}\LP + c_{\max}\right) \\
        &\leq \epsilon' \OPT + r c_{\max}
    \end{align*}
   where $\epsilon' = \left(\frac{\epsilon}{1-\epsilon} \cdot \frac{e}{e-1}\right)$. Together, we get
    \begin{align*}
        \E[c(\Sets_{sh} \cup \cup_{i \in [r]}\G_i )] &\leq \left(\frac{r}{\lipc} \ln \frac{1}{\epsilon}\right) c_{\max} + \scaleccf \OPT_\l + \epsilon' \OPT + r c_{\max} \\
        &\leq c(\Sets_{pre}) + \scaleccf \OPT_\l + \epsilon' \OPT \\
        &\leq \OPT_\h + \scaleccf \OPT_\l + \epsilon' \OPT .
    \end{align*}
    Since $\epsilon < \epsilon'$, to get the claim for some $\epsilon^*$, we first fix $\epsilon' = \epsilon^*$ then choose $\epsilon$ by solving for $\epsilon$ in $\epsilon' = \left(\frac{\epsilon}{1-\epsilon} \cdot \frac{e}{e-1}\right)$. For any constant $\epsilon' > 0$, $\epsilon$ is also a constant and $\epsilon > 0$. Therefore, the claim is proved.
\end{proof}

\begin{proof}[Proof of Theorem \ref{thm:main-result-ccf}]
    Consider the total cost of the solution.
    \begin{align*}
        \E[c(\Sets_{pre} \cup \Sets_{he} \cup \Sets_{sh} \cup \cup_{i \in [r]}\G_i)] &\leq \E[c(\Sets_{pre})] + \E[c(\Sets_{he})] + \E[c(\Sets_{sh} \cup_{i \in [r]}\G_i)] \\
        &\leq \OPT_\h + \beta \scaleccf \OPT_\l + \OPT_\h \\ &\qquad \qquad \qquad+ \scaleccf \OPT_\l + \epsilon \OPT \\
        &\leq 2 \OPT_\h + \scaleccf(\beta+1)\OPT_\l + \epsilon \OPT \\
        &\leq \left(\frac{e}{e-1}\right)(1+\epsilon')(1+\beta) \OPT.
    \end{align*}
    This proves the cost of the solution. Finally, to see that the algorithm runs in polynomial time, note that the costliest operation is selection of $\Sets_{pre}$. We are choosing $\left(\frac{r}{\lipc}\ln \frac{1}{\epsilon} + r\right)$ sets of highest cost from the optimal for $\lipc = \frac{\epsilon^2}{2 \ln \sfrac{r}{\epsilon}}$. This will need an enumeration of at most $m^{\frac{2r}{\epsilon^2}\ln\frac{1}{\epsilon} \ln \frac{r}{\epsilon}} = O(m^{\frac{r}{\epsilon^2} \ln^2 \frac{r}{\epsilon}})$ different collections of $\Sets_{pre}$. Since both $r$ and $\lipc$ are constants, this takes polynomial time. All other steps of the algorithm clearly run in polynomial time. This completes the proof.
\end{proof}

\section{Applications of $\CCF$} \label{sec:ccf-apps}
In this section, we show via two applications how the $\CCF$ framework can be used in different applications. For the first application in Section \ref{sec:flccc}, we cannot directly apply the $\CCF$ framework. However, with some technical modifications, show how the main ideas of the framework can still be implemented to obtain a good approximate result. Our second application in Section \ref{sec:radii} follows from a more direct application of the framework.
\subsection{Facility Location with Multiple Outliers}\label{sec:flccc}
We consider a generalization of the classical facility location problem, known as Facility Location with Multiple Outliers ($\FLMO$), in which clients belong to color classes, and the objective is to cover a required number of clients from each class. Specifically, we are given a set of facilities $F$, a set of clients $C$, where each client belongs to at least one of $r$ color classes $C_k$ (i.e. $C = \cup_{k\in[r]} C_k$), and a metric space $(F \cup C, d)$. Each facility $i \in F$ has an associated non-negative opening cost $f_i$, and each client $j \in C$ can be served by assigning it to an open facility $i$, incurring connection cost $d(i,j)$. For each color class $C_k$, we are given a required demand $b_k$ (where $1 \leq b_k \leq |C_k|$). The goal is to open facilities and assign clients to facilities such that at least $b_k$ clients from $C_k$ are serviced by a facility, and where the total facility and connection costs of opened facilities and their assigned clients is minimized. 

$\FLMO$ can be naturally expressed as an instance of the $\CCF$ problem over an exponentially large set system, where the universe of elements is $C$ and where each set corresponds to a \emph{star} $(i, S)$ defined by a facility $i \in F$ and a subset of clients $S \subseteq C$ assigned to it. The goal is to select a collection of such stars to cover at least $b_k$ clients from each color class $C_k$, while minimizing the total cost, where the cost of a star is given by $c(i, S) := f_i + \sum_{j \in S} d(i, j)$. We can capture this problem via the following exponentially sized CCF-esque IP. 

\begin{equation}\label{ip:flccc}\tag{$\IP$-$\FLMO$}
\begin{array}{llll}
    & \min & \displaystyle\sum_{i \in F} \sum_{S \subseteq C} c(i, S) \cdot x(i, S) \\
    & \text{s.t.} & \displaystyle\sum_{i \in F} \sum_{\substack{S \subseteq C \\ j \in S}} x(i, S) \geq z_j & \forall j \in C \\
    & & \displaystyle\sum_{j \in C_k} z_j \geq b_k & \forall k \in [r] \\
    & & x(i, S),~ z_j \in \{0,1\} & \forall i \in F, S \subseteq C, j \in C
\end{array}
\end{equation}

We observe that this formulation does not require the assumption that $d$ is a metric. This will be useful in solving restricted version of a related LP relaxation that we will see later.
Since there are exponentially many stars to be considered (resulting in an IP with an exponential number of variables), we cannot directly apply the techniques from our algorithm for $\CCF$. 

We apply a refined version of the $\CCF$ framework that avoids explicitly enumerating all possible stars. The most notable changes are to Stages 1 and 2. In our new implementation of Stage 1, instead of guessing complete high-cost sets from an optimum solution, we guess certain high-cost components of the optimal solution. After guessing, we appropriately construct a (still potentially exponentially large) residual instance. To obtain a fractional solution for this without explicitly describing the full LP, we use the dual. Specifically, the separation oracle for dual can be run in polynomial time and allows us to solve the dual via the ellipsoid method, while identifying only a polynomial number of tight constraints. These then correspond to a polynomial-sized set of stars from the primal that form the support of an optimal solution. We therefore restrict the original primal LP to just this support and treat this reduced instance as our effective set system for the remaining stages. To ensure that the aforementioned steps can be executed in polynomial time, we must first implement a pre-processing stage to scale and discretize all distances and facility opening costs to ensure that star costs become are integer-valued and polynomially bounded. %; this will be vital to ensure both the guessing and LP-solving stages of our framework can be done in polynomial time.

We now describe each stage of our refined $\CCF$-based algorithm for $\FLMO$, beginning with a pre-processing step that enables efficient guessing and separation oracle.

\subparagraph*{Stage 0: Scaling and pruning facility and connection costs.}
We begin with a pre-processing step that allows us to perform guessing and LP-solving steps efficiently. As is standard, we can assume that the value of the optimal solution $\OPT$ is guessed up to a $(1 + o(1))$ factor (we overload $\OPT$ to denote both the true and guessed optimal cost). This introduces only a polynomial overhead in runtime and at most a $(1 + o(1))$ loss in the approximation ratio. In addition to guessing $\OPT$, we define a polynomial scaling factor $B:= n^3$ which will be used to scale the relevant facility and connection costs in our instance.

Using the guessed $\OPT$ and $B$, we can now eliminate any unnecessary or negligible facility and connection costs. First, we disallow (i.e. remove from the instance) any facilities $i$ for which $f_i > \OPT$ since we know they will never be selected in the optimal solution. We also disallow any connections between facilities $i$ and clients $j$ for which $d(i,j) > \OPT$. Next, for any facility $i$ with $f_i\leq \sfrac{\OPT}{B}$, we define its scaled cost $\bar{f}_i := 0$; for any client-facility pair $(i,j)$ where $d(i,j) \leq \sfrac{\OPT}{B}$, we define the scaled connection cost $\bar{d}(i,j) := 0$. This is done because these costs are too small to meaningfully contribute to the cost of the final solution, and hence can be safely ignored. 

For the remaining facility and connection costs, we scale them as follows: 
\[
\bar{f}_i := \left\lceil \sfrac{B}{\OPT} \cdot f_i \right\rceil, \qquad
\bar{d}(i,j) := \left\lceil \sfrac{B}{\OPT} \cdot d(i,j) \right\rceil.
\]
This will guarantee that all scaled facility costs will be integer-values in $[0,B]$. Thus, the scaled cost of each star $\bar{c}(i,S) := \bar{f}_i + \sum_{j \in S} \bar{d}(i,j)$ will be some polynomially bounded integer. It is important to note that the discretized distances $\bar{d}(i,j)$ may no longer form a metric (since they may violate triangle inequality). However, this is fine our $\CCF$ framework did not require set costs to satisfy such properties. The metric property for distances will only be required in Stage 2', when we must solve the canonical facility location problem to cover heavy clients. For that stage, we will show that we can viably revert back to using the original metric $d$.

This pre-processing step will incur at most an additive $O(\sfrac{\OPT}{B})$ term per cost term, and thus will result in a $1 + o(1)$ multiplicative increase in the total cost of the solution, but this can be absorbed into our final approximation factor. In the rest of the section we assume that we have guessed $\OPT$ correctly to within a $(1 + o(1))$ factor.

\subparagraph*{Stage 1: Guessing high-cost components.} This stage consits of two parts: (1) Guess some high cost components (2) Create a residual instance accounting for the guess.

\noindent \textbf{Guess high-cost components.} In the $\CCF$ framework, we needed to guess the $L := \left(\frac{r}{\lipc} \ln \sfrac{1}{\epsilon} + r\right)$ most expensive stars from the optimal solution. This is because our rounding procedure in \Cref{lem:additive-bound} and the greedy fix step later choose these many extra sets to satisfy the constraints and we account for these sets via the initial guesses. 
%This is done by ensuring that the remaining sets in our residual instance satisfy a Lipschitz condition. In particular, we must ensure that the marginal contribution of any set is bounded by $\ell\cdot B$ where $B$ denotes the fractional cost of the residual solution. 
For the $\FLMO$ setting explicitly guessing a a star $(i, S)$ would require exponential time. To circumvent this we instead guess partial information about many stars. However, we must now guess more information due to this partial knowledge. 

Specifically, we guess tuples of the form $(i_h, S_h, g_h)$ for $h \in [T]$ with $T = \lceil L/\epsilon \rceil$, where $i_h$ is a facility from one of the $T$ highest cost stars, $S_h \subseteq C$ is a set of the $L$ farthest clients assigned to $i_h$ in the optimal solution, and $g_h$ is the total cost of the full optimal star at facility $i_h$, i.e., $\bar{c}(i_h, S^*_h)$, where \( S^*_h \) denotes the full client set served by \( i_h \). We note that if there are fewer than $T$ stars in the optimum solution, the problem becomes simpler; in that case we would have guessed all the optimum's facilities. This will be clear after a description of the remaining analysis and we give a remark at the end of Stage $3$ as to how to deal with that case. Let $\OPT_{guess}$ be the cost of the guessed portion of the solution.

%To make sure this partial guess is enough to recover the same Lipschitz-like condition needed in the proof of Lemma~\ref{lem:additive-bound}, we must ensure that the clients in $S_h$ account for most of the cost of the optimal star. 

%To ensure the effectiveness of our guess, we must have $t$ be some constant that will be large enough to guarantee that the clients in $S_h$ account for most of the cost of the optimal star. This can be accomplished by guessing at least $\frac{1}{\ell} \log \frac{1}{\epsilon}$ of the highest-cost (i.e. furthest) clients from each of the highest $L$ cost stars. This is because, in the original $\CCF$ analysis, ensuring that we can satisfy constraints with high probability is reliant on ensuring that the remaining sets in our residual instance satisfy a Lipschitz condition. In particular, we must ensure that the marginal contribution of any set is bounded by $\ell\cdot B$ where $B$ denotes the fractional cost of the residual solution. For the $\FLMO$ setting additional $r$ clients for each $S_h$. All together, $S_h$ must contain the farthest $t := \frac{1}{\ell} \log \frac{1}{\epsilon} + r$ clients in the optimal star around $i_h$ in order for our the remainder of our analysis to go through.  % Recall that $\ell \in \Theta(\frac{\eps^2}{\log(r/\eps)})$, thus
%\[ t \in \Theta\left(\frac{1}{\ell}\log(1/\eps)\right) = \Theta\left(\frac{\log(r/\eps)}{\eps^2}\cdot\log(1/\eps)\right) = \Theta\left(\frac{1}{\eps} \log(r/\eps)\right).\] 

\noindent \textbf{Create residual instance.} Let $S_{pre} := \{(i_h, S_h, g_h)\}_{h \in [L]}$ denote the collection of guessed partial stars. We define $F_{pre} := \{ i_h \mid (i_h, S_h, g_h) \in S_{pre} \}$ as the set of guessed high-cost facilities, and $C_{pre} := \bigcup_{h \in [L]} S_h$ as the set of clients guessed to be served by those stars. Let $G := \min_{h \in [L]} g_h$ be the smallest guessed star cost across all tuples.

Now, to describe the residual instance with respect to these guessed components, first update the color demands $b_k$ to $b'_k := b_k - |C_k \cap C_{pre}|$. Next, restrict attention to clients in $C \setminus C_{pre}$ and define for each facility $i \in F$ a restricted family of allowable stars, denoted by $\mathcal{S}_i$, and updated star costs, denoted by $\bar{c}'$, as follows:
\begin{itemize}[leftmargin=*]
    \item For $i \in F_{pre}$, we allow only singleton stars $(i, \{j\})$ with $j \in C \setminus C_{pre}$ and $\bar{d}(i,j) \leq \min_{j' \in S_h} \bar{d}(i,j')$ (for the corresponding tuple $(i, S_h, g_h) \in S_{pre}$). For these stars, we define the updated cost as $\bar{c}'(i, \{j\}) := \bar{d}(i,j)$, since after guessing $S_{pre}$, we account $\bar{f}_i$ to open facility $i$ and the connection costs for the clients in $S_h$.
    
    \item For $i \notin F_{pre}$, we allow any star $(i, S)$ with $S \subseteq C \setminus C_{pre}$ and total cost at most $G$. For these stars, the cost remains as $\bar{c}'(i, S) := \bar{f}_i + \sum_{j \in S} \bar{d}(i,j)$.
\end{itemize}
% \begin{equation}\label{lp:flccc}\tag{$\LP$-FL-CCC}
% \begin{array}{llll}
%     & \min & \displaystyle\sum_{i \in F} \sum_{(i,S) \in \mathcal{S}_i} \bar{c}(i, S) \cdot x(i, S) \\
%     & \text{s.t.} & \displaystyle\sum_{i \in F} \sum_{\substack{(i,S) \in \mathcal{S}_i: \\ j \in S}} x(i, S) \geq z_j & \forall j \in C \setminus C_{pre} \\
%     & & \displaystyle\sum_{j \in C_k \setminus C_{pre}} z_j \geq b_k' & \forall k \in [r] \\
%     & & z_j \leq 1 & \forall j \in C \setminus C_{pre} \\
%     & & x(i, S), z_j \geq 0 & \forall i \in F, (i, S) \in \mathcal{S}_i, \forall j \in C \setminus C_{pre}
% \end{array}
% \end{equation}

% Note that despite culling some amount of stars after guessing, the number of residual stars and thereby the LP may still be exponential in size. 

\begin{equation}\tag{$\LP$-$\FLMO$ Primal \& Dual}\label{lp:flccc+dual}
\begin{minipage}{0.48\textwidth}
\textbf{Restricted Primal ($\LP$-$\FLMO$)}\\[0.5em]
$\displaystyle \min\ \sum_{i \in F} \sum_{(i,S) \in \mathcal{S}_i} \bar{c}'(i, S)\cdot x(i, S)$ \\[1em]
s.t.\\
$\displaystyle \sum_{i \in F} \sum_{\substack{(i,S) \in \mathcal{S}_i:\\ j \in S}} x(i, S) \geq z_j \quad \forall j \in C \setminus C_{pre}$ \\[0.5em]
$\displaystyle \sum_{j \in C_k \setminus C_{pre}} z_j \geq b_k' \quad \forall k \in [r]$ \\[0.5em]
$z_j \leq 1 \quad \forall j \in C \setminus C_{pre}$ \\[0.5em]
 $0 \leq x(i, S) \leq 1\quad \forall i \in F,\ (i, S) \in \mathcal{S}_i$ \\
 $0 \leq z_j \quad \forall j \in C \setminus C_{pre}$
\end{minipage}
\hfill
\begin{minipage}{0.48\textwidth}
\textbf{Dual}\\[0.5em]
$\displaystyle \max\ \sum_{k=1}^r b_k' \cdot \beta_k - \sum_{j \in C \setminus C_{pre}} \gamma_j$ \\[1em]
s.t.\\
$\displaystyle \sum_{j \in S} \alpha_j \leq \bar{c}'(i, S) \quad \forall i \in F,\ (i, S) \in \mathcal{S}_i$ \\[0.5em]
$\alpha_j - \gamma_j \leq \beta_k \quad \forall j \in C_k \setminus C_{pre},\ \forall k \in [r]$ \\[0.5em]
$\alpha_j, \beta_k, \gamma_j \geq 0 \quad \forall j \in C \setminus C_{pre}, ~k \in [r]$
\end{minipage}
\end{equation}

\subparagraph*{Stage 2: Constructing a fractional solution via the dual.}
We now solve an LP relaxation for the residual instance defined in Stage 1. This may still contain exponentially many variables. To solve this LP efficiently, we apply the ellipsoid method to its dual (see \ref{lp:flccc+dual}), which has polynomially many variables but exponentially many constraints. This requires a polynomial-time separation oracle which, given a candidate dual solution $(\alpha, \beta, \gamma)$, either certifies feasibility or returns a violated constraint. That is, it identifies residual star $(i,S) \in \mathcal{S}_i$ such that $\sum_{j \in S} \alpha_j > \bar{c}'(i,S)$.
Again, since facility costs $\bar{f}_i$ and distances $\bar{d}(i,j)$ are integral and polynomially bounded (from pre-processing in Stage 0), we can design such an oracle. We formalize this as the following lemma.

\begin{lemma}\label{lem:sep-oracle}
Assuming that all distances $\bar{d}(i,j)$ and facility costs $f_i$ are integral and polynomially bounded, there exists a polynomial-time separation oracle for the dual LP given in \ref{lp:flccc+dual}.
\end{lemma}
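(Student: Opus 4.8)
The plan is to reduce the separation problem to a collection of polynomially many "best star" subproblems, one per facility, each of which is solvable in polynomial time because the star costs are integral and polynomially bounded. Fix a candidate dual solution $(\alpha,\beta,\gamma)$ with all coordinates nonnegative; the only constraints with exponentially many instances are $\sum_{j\in S}\alpha_j \le \bar c'(i,S)$ over facilities $i$ and stars $(i,S)\in\mathcal S_i$. For a fixed facility $i$, I want to decide whether there exists an allowable star $(i,S)\in\mathcal S_i$ with $\sum_{j\in S}\alpha_j > \bar c'(i,S)$, and if so exhibit one. Equivalently, I want to compute $\max_{(i,S)\in\mathcal S_i}\big(\sum_{j\in S}\alpha_j - \bar c'(i,S)\big)$ and compare it to $0$.

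First I would treat the two cases of $\mathcal S_i$ separately, following the definition of the residual instance in Stage~1. For $i\in F_{pre}$ only singleton stars $(i,\{j\})$ with $\bar d(i,j)\le \min_{j'\in S_h}\bar d(i,j')$ are allowed, and the cost is $\bar c'(i,\{j\})=\bar d(i,j)$; here one simply iterates over all eligible clients $j\in C\setminus C_{pre}$ and checks whether $\alpha_j > \bar d(i,j)$, which is trivially polynomial. For $i\notin F_{pre}$, the allowable stars are all $S\subseteq C\setminus C_{pre}$ with $\bar f_i+\sum_{j\in S}\bar d(i,j)\le G$; the cost of $(i,S)$ is $\bar f_i+\sum_{j\in S}\bar d(i,j)$, so the objective $\sum_{j\in S}\alpha_j - \bar f_i - \sum_{j\in S}\bar d(i,j) = -\bar f_i + \sum_{j\in S}(\alpha_j-\bar d(i,j))$ decomposes additively over clients. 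Absent the budget constraint $\sum_{j\in S}\bar d(i,j)\le G-\bar f_i$ one would just include every client with $\alpha_j>\bar d(i,j)$; the budget makes this a knapsack-type problem, but since $\bar d(i,j)$ are integers in $[0,B]$ with $B=n^3$ and the budget $G-\bar f_i$ is a polynomially bounded integer, I would solve it by a standard dynamic program over budget values $0,1,\dots,G-\bar f_i$: let $\Phi_i(t)$ be the maximum of $\sum_{j\in S'}(\alpha_j-\bar d(i,j))$ over subsets $S'$ of the first few clients with total distance exactly (or at most) $t$, filled in by the usual recurrence. This runs in time $O(|C|\cdot B)$ per facility, hence polynomial overall; taking $\max_t \Phi_i(t) - \bar f_i$ gives the best violation for facility $i$, and the maximizing subset is recovered by backtracking.

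Finally I would assemble the oracle: run the above for every facility $i\in F$ (polynomially many), and also check the polynomially many constraints $\alpha_j-\gamma_j\le\beta_k$ and the nonnegativity constraints directly. If every facility yields value $\le 0$ and the other constraints hold, declare feasibility; otherwise return any facility $i$ and subset $S$ achieving a strictly positive value as the violated constraint. This is a valid polynomial-time separation oracle, which by the equivalence of separation and optimization (ellipsoid method) lets us solve the dual, and the polynomially many tight dual constraints identify a polynomial-size support of stars for the primal, as claimed in the surrounding discussion. The main obstacle is handling the per-facility budget constraint $\bar c'(i,S)\le G$ for $i\notin F_{pre}$: this is what prevents a purely greedy per-client decision and forces the dynamic program, and it is precisely where the Stage~0 pre-processing (making all $\bar d(i,j)$ and $\bar f_i$ integral and polynomially bounded) is essential — without it, the knapsack subproblem would only admit an FPTAS rather than an exact polynomial algorithm.
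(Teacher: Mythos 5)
Your proposal matches the paper's proof essentially step for step: check the polynomially many non-star constraints directly, handle $i \in F_{pre}$ by enumerating the $O(n)$ allowed singletons, and handle $i \notin F_{pre}$ as a knapsack problem solved exactly by dynamic programming over the integral, polynomially bounded distance budget $G - \bar{f}_i$. Your explicit remark that the DP must be indexed by the (integral) weights rather than the (real-valued) profits $\alpha_j - \bar{d}(i,j)$ is a useful clarification of a point the paper leaves implicit, but the argument is the same.
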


\begin{proof}
For the constraints in second and third line in the above dual, we can simply check whether they are satisfied by iterating over them. Therefore, we focus on the constraints for the stars since there are an exponential number of them: for all facilities, $i$, all stars $(i,S) \in \mathcal{S}_i$ satisfy $\sum_{j \in S} \alpha_j > \bar{c}$. We consider two cases:

\noindent \textbf{Case 1: \( i \in F_{pre} \).} Here only singleton stars of the form \( (i, \{j\}) \) are allowed, subject to the condition that \( d(i,j) \leq \min_{j' \in S_h} d(i,j') \) for the corresponding guessed tuple \( (i, S_h, g_h) \in S_{pre} \). For such facilities, the oracle simply checks whether any feasible singleton violates the constraint \( \alpha_j \geq \bar{c}(i, \{j\}) = d(i,j) \). Since only $O(n)$ such singletons need to be checked per facility, this case can be handled in linear time.

\noindent \textbf{Case 2: \( i \notin F_{pre} \).} Here, the oracle searches for a subset \( S \subseteq C \setminus C_{pre} \) such that $\sum_{j \in S} \bar{d}(i,j) \leq G - \bar{f}_i$ and where $
\sum_{j \in S} \alpha_j > \bar{c}'(i,S)$. This is precisely a knapsack problem: each client \( j \) has a profit \( \alpha_j - \bar{d}(i,j) \), a weight \( \bar{d}(i,j) \), and the knapsack budget is \( G - f_i \). We check if there is a set of clients that has profit at least $\Bar{f}_i$. Since the costs \( \bar{d}(i,j) \) and \( \bar{f}_i \) are integral and polynomially bounded (as ensured in Stage 0), this problem can be solved in polynomial time via dynamic programming.

Using the ellipsoid method with this separation oracle, we solve the dual LP optimally where only polynomially many constraints are tight. Each such tight constraint corresponds to a possibly non-zero primal variable, i.e., a residual star $(i, S)$ with nonzero weight in the dual solution. These define a polynomial-sized support, which we now use to construct a restricted primal LP. Since the number of stars in this support is polynomial, we can write down and solve the primal LP over this restricted set system explicitly, obtaining a fractional solution $(\vecx^*, \vecz^*)$. 
\end{proof}

Note that the LP may not be feasible if our guess was incorrect. In this case we can discard the guess. For a correct guess, we denote the cost of this LP solution by $\OPT_{\LP}$. Note that $\OPT_{\LP} \le \OPT - \OPT_{guess}$.

% From this point forward, we treat the polynomial-sized support of stars identified by the dual as our effective CCF set system, and apply the remaining stages of the $\CCF$ framework over this restricted instance.

\subparagraph*{Stage 2': Handling Heavy vs. Shallow Clients.}
Given the fractional solution $(\vecx^*, \vecz^*)$ to the residual LP, we now have a polynomial-sized support of stars. We begin by partitioning the clients into \emph{heavy} clients $C_{he}$ and \emph{shallow} clients $C_{sh}$ based on their fractional coverage: let $C_{he} := \{ j \in C \setminus C_{pre} \mid z_j > \tau \}$ and $C_{sh} := \{ j \in C \setminus C_{pre} \mid z_j \leq \tau \}$, where we set $\tau := (1-\sfrac{1}{e})(1-\eps)$ as in the $\CCF$ framework.

%We first fully cover the heavy clients $C_{he}$ by solving an instance of the canonical facility location problem restricted to just these clients. We do this by using a $\beta_{FL}$-approximation algorithm for facility location that operates over the standard LP formulation. Using \cite{li20131}, we can obtain a $\beta_{FL} = 1.488$ approximation. Recall, however, that the discretized distances $\bar{d}$ used in our residual LP may no longer satisfy the triangle inequality. Therefore, in this step we revert to the original metric $d$ to ensure that the assumptions of the approximation algorithm hold. The stars retained after Stage 2 form our feasible facility-client pairs, and their facility costs $f_i$ can also be reverted to the original, non-discretized distances \todo{however we update these original costs to reflect that guessed facilities are already payed for in Stage 1, that is, for $i \in F_{pre}$, let $f'_i = 0$, otherwise $f'_i = f_i$}. Let $\mathcal{S}_{he}$ denote the stars that correspond to the approximate solution obtained using the standard facility location LP. The following lemma (which is analogous to \Cref{lem:ccf-heavy} in the $\CCF$ analysis) will hold.
To cover the heavy clients, we prove the following lemma.
\iffalse
\begin{lemma}\label{lem:flccc-heavy}
The total cost of $\mathcal{S}_{he}$ is at most 
\(
\beta_{\text{FL}} \cdot \left(\frac{e}{e-1}\right) \cdot \left(\frac{1}{1 - \epsilon} \right) \cdot \OPT_{{\LP}},
\)
where $\OPT_{{\LP}}$ denotes the cost of the components of the optimal solution that we have not guessed in the optimal solution, and $\beta_{FL}$ is the approximation factor of the facility location algorithm applied to cover the heavy clients $C_{he}$.
\end{lemma}
\fi

\begin{lemma}\label{lem:flccc-heavy}
Given $(\vecx^*, \vecz^*)$, a feasible solution to the residual LP with cost $\OPT_{{\LP}}$, there is an efficient algorithm 
to cover the heavy clients $C_{he}$ with cost at most 
\(
\beta_{\text{FL}} \cdot \left(\frac{e}{e-1}\right) \cdot \left(\frac{1}{1 - \epsilon} \right) \cdot \OPT_{{\LP}},
\)
where $\beta_{FL}$ is the approximation factor of the underlying LP-based approximation algorithm for UCFL.
\end{lemma}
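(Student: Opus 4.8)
The plan is to mimic the heavy-element argument from the proof of the analogous lemma in the $\CCF$ setting (the bound on $c(\Sets_{he})$ in Section~\ref{sec:ccf-analysis}), but adapted to facility location where we must recover an honest-to-goodness UCFL instance rather than an abstract set cover instance. First I would scale the fractional solution: set $\tilde{\vecx}^* := \scaleccf \vecx^*$ and $\tilde{\vecz}^* := \min\{\scaleccf \vecz^*, \mathbf{1}\}$ (componentwise). Since every heavy client $j \in C_{he}$ has $z^*_j > \tau = (1-\sfrac{1}{e})(1-\eps)$, scaling by $\scaleccf = (\frac{e}{e-1})(\frac{1}{1-\eps})$ pushes $\tilde z^*_j$ to $1$, so $(\tilde{\vecx}^*, \tilde{\vecz}^*)$ fractionally covers all of $C_{he}$ to the full extent. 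Its cost is exactly $\scaleccf \cdot \OPT_{\LP}$.

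Next I would argue that this scaled fractional solution, restricted to the support stars over the client set $C_{he}$, can be converted into a feasible fractional solution to the \emph{natural LP relaxation of a genuine UCFL instance} on facilities $F$ and clients $C_{he}$. The subtlety specific to $\FLMO$ (and the reason this is a separate lemma rather than a line of the $\CCF$ proof) is that our stars carry the \emph{discretized} costs $\bar c'(\cdot)$, which need not satisfy the triangle inequality, whereas the black-box UCFL approximation requires a metric. I would handle this as follows: each star $(i,S)$ in the support decomposes into its facility-opening part and its per-client connection parts, so the star LP value $x(i,S)$ contributes $x(i,S)$ to the opening variable $y_i$ and to each connection variable $x_{ij}$ for $j \in S$; aggregating gives a fractional $(y, x)$ feasible for the standard UCFL LP on $(F, C_{he})$ with the \emph{original} metric $d$. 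The key point is that the cost only goes down when we revert from $\bar d$ to $\frac{\OPT}{B}\bar d \le d + \sfrac{\OPT}{B}$ — actually the cleanest route is to note $\bar c'$ upper-bounds a scaled-down version of $d$-cost, so the $d$-cost of $(y,x)$ is at most (a $1+o(1)$ factor times) $\scaleccf \cdot \OPT_{\LP}$; I would fold the $1+o(1)$ into the already-present $\eps$ slack. Then I invoke the assumed $\beta_{\mathrm{FL}}$-approximation for UCFL \emph{via its natural LP relaxation} to round $(y,x)$ into an integral facility set $\Sets_{he}$ (a collection of stars, one per opened facility, serving all of $C_{he}$) of cost at most $\beta_{\mathrm{FL}} \cdot \scaleccf \cdot \OPT_{\LP}$.

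The main obstacle I anticipate is precisely the metric-versus-discretized-cost mismatch in the paragraph above: one must be careful that (a) the facilities disallowed or zeroed-out in Stage~0 do not break feasibility of the UCFL instance on $C_{he}$ — here I'd use that every heavy client is fractionally covered by support stars whose connection costs are $\le G \le \OPT$, so no client needs a $>\OPT$ connection — and (b) that the LP-based UCFL algorithm's guarantee is stated relative to the LP optimum, which we are upper-bounding by the cost of our explicit feasible point, not relative to the integral optimum. Both are routine once set up correctly, but they are the places where the argument could silently go wrong. Everything else — the scaling computation, the $\tilde z^*_j = 1$ claim, and the final cost arithmetic — is immediate and parallels the $\CCF$ lemma verbatim. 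I would close by remarking that, exactly as in Claim~\ref{clm:ccf-lp-cost}, $\OPT_{\LP} \le \OPT - \OPT_{guess}$, so this cost is at most $\beta_{\mathrm{FL}} \cdot \scaleccf \cdot (\OPT - \OPT_{guess})$, which is the form in which it will be consumed in the final accounting.
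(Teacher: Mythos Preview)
Your proposal is correct and follows essentially the same route as the paper: convert the star-LP solution to the standard UCFL LP via $y_i = \sum_S x(i,S)$ and $x_{ij} = \sum_{S \ni j} x(i,S)$, scale by $\scaleccf$ so heavy clients are fully covered, revert to the original metric $d$ (absorbing a $1+o(1)$ loss from the discretization, exactly as the paper does), and invoke the LP-based $\beta_{\mathrm{FL}}$-rounding. Your discussion of the metric-versus-discretized-cost issue is slightly muddled in the inequality direction, but your conclusion---that the reversion costs only a $1+o(1)$ factor which folds into the $\eps$ slack---matches the paper's treatment precisely.
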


\begin{proof}
    The standard LP for UCFL is as follows where $y_i$ is an indicator for facility $i$ being opened and $x_{ij}$ is the indicator for client $j$ connecting to facility $i$:
    \begin{equation}\label{lp:facility location}\tag{$\LP$-$\textsf{FL}$}
 \begin{array}{r@{}c@{}l}
    & \min \sum_{i \in F}y_if_i + \sum_{j \in C, i \in F} x_{ij}d_{ij}  \\ 
\text{s.t.}     & \sum_{i \in F} x_{ij} \geq 1 &  \text{for all } j \in C\\ 
  &  y_i  \geq   x_{ij} & \text{for all } i \in F \text{ and } j \in C \\ 
   & x_{ij} \geq 0, y_i \geq 0  & \text{for all } i \in F \text{ and } j \in C
\end{array}
\end{equation}
    Using the fractional solution obtained by solving \ref{lp:flccc+dual}, we can obtain in a canonical manner a fractional solution for the standard LP formulation \ref{lp:facility location}. Specifically, for each client $j$ and facility $i$, $x_{ij} = \sum_{S: j \in S}x(i,S)$ and for facility $i$, $y_i = \sum_{S} x(i,S)$. Note that this does not change the cost of the LP solution. Further, scaling this converted solution by $\scaleccf$ gives a feasible solution to the standard LP relaxation with a cost of $\scaleccf \OPT_{\LP}$ to connect all the clients in $C_{he}$. We can then use the best known approximation that is based on the \ref{lp:facility location}; we have $\beta_{FL}= 1.488$ by \cite{li20131}. We note here that the rounding algorithm requires the distances form a metric. Since our $\LP$ solution is obtained using the scaled and discretized distances, the assumption is not immediately satified. However, this can be easily fixed. For using the rounding algorithm, we revert the distances to the original (but still scaled) distances. The distances that increase are the ones we rounded down to $0$. These were all at most $\sfrac{\OPT}{B} = \sfrac{\OPT}{n^3}$. Therefore, in the final rounding solution they will only cost us at most $\sfrac{\OPT}{n}$ increase. This extra cost is a $(1+o(1))\OPT$ which can be absorbed in the eventual approximation.
\end{proof}

\noindent Now, to proceed with Stages $3$ and $4$ for the shallow clients, we perform the following steps.
\begin{itemize}
    \item \textbf{Restrict the instance to shallow clients.} In our fractional solution for the primal in \ref{lp:flccc+dual}, we update the stars to only have shallow clients and remove all the heavy clients. The variables $x(i,S)$ remain unchanged. The $z$ variables are restricted to only shallow clients. We also update the covering requirements to reflect that heavy clients have already been selected: $b_k' := b_k' - |C_k \cap C_{he}|$ for each color class $k \in [r]$.
    \item \textbf{Scale the solution}. We scale the solution by $\scaleccf$. Note that the new scaled solution is a feasible solution for the primal in \ref{lp:flccc+dual} with a cost $\scaleccf \cdot \OPT_{\LP}$ and each color's residual requirement is over-covered by a factor of $\frac{e}{(e-1)(1-\eps)}$.
\end{itemize}

\subparagraph*{Stage 3: Rounding the fractional solution.}
We work with the residual instance restricted to shallow clients and scaled as mentioned in the previous stage. We now want to apply \Cref{lem:additive-bound} to round this fractional solution. The algorithm in the lemma potentially picks $L$ stars to ensure that randomized rounding satisfies the constraints with high probability. Algorithmically, we still perform the same step. For each color class $k \in [r]$, we select (up to) the top $\frac{1}{\ell} \ln(\sfrac{1}{\epsilon})$ stars \emph{from the support} of the fractional solution $\vecx^*$. The proof of \Cref{lem:additive-bound} easily extends to only picking from the support.
%Note that \Cref{lem:additive-bound} was selecting sets from the entire ground set, however since the randomized rounding only chooses stars with non-zero $x(i,S)$ value, we can %also choose the largest marginal contributors from this restricted set. Since these are polynomially bounded, this can be done in polynomial time. 
Finally, we randomly round the remaining stars with probabilities given by $\vecx^*$. Suppose the set of stars chosen in this stage is $\Sets(C_{sh})$. We can prove the following key lemma.

\begin{lemma}\label{lem:fl-shallow}
    After selecting the stars $\Sets(C_{sh})$, each constraint is satisfied with probability at least $1 - \sfrac{\epsilon}{r}$ and the expected cost of $\Sets(C_{sh})$ is bounded by $\scaleccf \OPT_{\LP} + \epsilon OPT + \OPT_{guess}$.
\end{lemma}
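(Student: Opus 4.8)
The plan is to write $\Sets(C_{sh}) = \Sets_{lip} \cup \Sets_{rr}$, where $\Sets_{lip}$ is the set of (at most $L := r\lceil \tfrac{1}{\lipc}\ln\tfrac{1}{\epsilon}\rceil$) stars chosen by the greedy Lipschitz pre-selection step inside \Cref{lem:additive-bound}, and $\Sets_{rr}$ is the set produced by independent randomized rounding of the scaled fractional solution $\scaleccf\,\vecx^*$, and then to bound coverage and cost separately. Both arguments run over the polynomial-size support of $\vecx^*$ from Stage~2, with the cost function $\bar c'$ and with the residual (monotone, submodular, weighted-coverage) functions $f'_k$ obtained by restricting the universe to the shallow clients $C_{sh}$ and discounting the clients already covered by $\Sets_{pre}$ and $\Sets_{he}$. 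As remarked in Stage~3, the proof of \Cref{lem:additive-bound} goes through unchanged when both the Lipschitz pre-selection and the random rounding are restricted to the support, using monotonicity of the $f'_k$ to round every remaining star.

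\emph{Coverage.} Let $b''_k$ denote the residual demand of color $k$ at this point (that is, $b'_k$ minus the number of guessed and heavy clients of color $k$). Splitting the residual LP constraint for color $k$ into heavy and shallow clients and using $z^*_j \le 1$ for heavy clients gives $\sum_{j \in C_k \cap C_{sh}} z^*_j \ge b''_k$. Since every shallow client has $z^*_j \le \tau = (1-\sfrac{1}{e})(1-\epsilon)$ and $\scaleccf\,\tau = 1$, scaling by $\scaleccf$ keeps the restricted solution feasible, so the $f^*$-extension of $f'_k$ at $\scaleccf\,\vecx^*$ is at least $\scaleccf\, b''_k$; by the coverage-versus-multilinear gap bound used in \Cref{clm:ccf-extensions}, $F'_k(\scaleccf\,\vecx^*) \ge (1-\sfrac{1}{e})\,\scaleccf\, b''_k = \tfrac{1}{1-\epsilon}\, b''_k$. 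Part (ii) of \Cref{lem:additive-bound} then yields $f'_k(\Sets(C_{sh})) \ge (1-\epsilon)\,F'_k(\scaleccf\,\vecx^*) \ge b''_k$ with probability at least $1-\sfrac{\epsilon}{r}$; together with the clients of $C_{pre}$ and $C_{he}$ this is exactly the statement that color $k$ is fully covered.

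\emph{Cost.} The randomly rounded part satisfies $\E[c(\Sets_{rr})] \le \sum_{(i,S)} \bar c'(i,S)\cdot \scaleccf\, x^*(i,S)$; restricting stars to shallow clients only removes clients and hence only decreases $\bar c'$, so $\E[c(\Sets_{rr})] \le \scaleccf\,\OPT_{\LP}$ (up to the $(1+o(1))$ discretization loss from Stage~0, which we absorb into the $\epsilon\,\OPT$ slack). Each star of $\Sets_{lip}$ lies in the residual support, hence costs at most $G = \min_h g_h$ if it is a full star at a non-guessed facility, and at most $\min_{j'\in S_h}\bar d(i_h,j') \le g_h/L$ if it is a singleton at a guessed facility $i_h$ (using that $S_h$ records the $L$ farthest optimal clients of the $h$-th star, so $g_h \ge \sum_{j\in S_h}\bar d(i_h,j) \ge L\min_{j'\in S_h}\bar d(i_h,j')$; if some optimal star has fewer than $L$ clients we guessed all of them and it contributes no residual star). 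Summing over the at most $L$ stars of $\Sets_{lip}$, the total is at most $\max\{L\cdot G,\ \max_h g_h\} \le \OPT_{guess}$, since $\OPT_{guess} \ge \sum_h g_h \ge \max_h g_h$ and, because Stage~1 guessed $T = \lceil L/\epsilon\rceil$ stars of optimal cost each at least $G$, also $\OPT_{guess} \ge T\cdot G \ge L\cdot G$. Adding the two contributions gives $\E[c(\Sets(C_{sh}))] \le \scaleccf\,\OPT_{\LP} + \epsilon\,\OPT + \OPT_{guess}$, as claimed.

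The step I expect to be the main obstacle is this cost accounting for $\Sets_{lip}$ in the presence of the guessed facilities $F_{pre}$: unlike in the plain $\CCF$ algorithm, the residual stars at guessed facilities are singletons whose only cost is a marginal connection cost, and bounding them requires that the guessed tuples genuinely record the $L$ farthest optimal clients $S_h$ together with the full optimal star cost $g_h$, so that $\min_{j'\in S_h}\bar d(i_h,j') \le g_h/L$. One also has to check that \Cref{lem:additive-bound} and the $f^*$-gap bound of \Cref{clm:ccf-extensions} survive the restriction to the support and the Stage~0 discretization; the latter is purely combinatorial (a property of weighted coverage functions) and is therefore immediate, while the former is exactly the support-restriction remark already made in Stage~3.
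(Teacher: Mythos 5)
Your proposal is correct and follows essentially the same route as the paper's (sketched) proof: the coverage claim is deferred to the $\CCF$ analysis of \Cref{lem:ccf-greedy-sets-bound} via \Cref{clm:ccf-extensions} and \Cref{lem:additive-bound}, the randomized-rounding cost is bounded by $\scaleccf\,\OPT_{\LP}$, and the $L$ pre-selected stars are handled by the same two-case split (singletons at guessed facilities charged against the $L$ farthest guessed clients, hence against $\OPT_{guess}$; full stars at non-guessed facilities each of cost at most $G \le \epsilon\OPT/L$). Your averaging bound $\min_{j'\in S_h}\bar d(i_h,j')\le g_h/L$ makes the paper's informal charging explicit, and your $\max\{L\cdot G,\max_h g_h\}$ bound is valid (it follows from the convex combination $\tfrac{L_1}{L}(LG)+\tfrac{L_2}{L}\max_h g_h$ with $L_1+L_2\le L$), so no gap remains.
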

\begin{proof}[Proof Sketch.]
    The proof for bounding the cost of sets from randomized rounding and the probability of constraints being satisfied is same as the proof of \Cref{lem:ccf-greedy-sets-bound} in $\CCF$ section. The only difference is bounding the cost of the $L$ stars that are picked by the algorithm in \Cref{lem:additive-bound}. To bound this, note that the guessed star can be of two types: (1) A singleton client that connects to a guessed facility (2) A complete star at a non-guessed facility. For the singleton stars that are at the guessed facility, since at each facility we have guessed the $L$ farthest clients and we are guessing at most $L/r$ clients per constraint, these can be accounted to the singletons guessed in Stage $1$; the total cost of the singleton clients is at most $\OPT_{guess}$. For the stars that are at a non-guessed facility, since we guessed $L/\epsilon$ largest star costs, these stars are each of cost at most $\epsilon \OPT/L$. Since we add at most $L$ such stars, the total cost of these can be bounded by $\epsilon \OPT$.
\end{proof}

\begin{remark}
    If the first stage guess shows that the total number of starts in the optimum solution is less than $L/\eps$ then we have guessed all facilities in the optimal solution. Therefore the only stars are singletons. 
\end{remark}

\subparagraph*{Stage 4: Greedily fixing remaining unsatisfied constraints.}
If any color class constraint remains unsatisfied after Stage 3, we fix it by greedily selecting stars until the constraint becomes satisfied. As in the original $\CCF$ framework, we can employ the greedy fix from the $\textsc{Max-Budgeted-Cover}$ heuristic. The \ref{lp:budgeted-covering} for this problem is to select a subset of stars that cover the maximum number of clients at a cost bounded by $\frac{e}{(e-1)(1-\eps)}\OPT_{\LP}$. Further, as per the remark after \Cref{lem:greedy-ccf-single}, this Greedy can be executed by only looking at the support of the LP solution. Therefore
the step can be performed in polynomial time. Similar to the analysis in $\CCF$, this greedy fix is applied to the $r^{th}$ covering class with a probability bounded by $\sfrac{\epsilon}{r}$. Similar to the proof of \Cref{lem:ccf-greedy-sets-bound}, the total expected cost to fix all constraints is bounded by $\epsilon \cdot \OPT + r c_{\max}$ where $c_{max}$ is the cost of highest cost star in the support. Each star in the support has cost at most $\eps \OPT/r$ due to the guessing in Stage $1$ (as discussed in proof of Lemma \ref{lem:fl-shallow}). Hence the expected fixing cost is $(1+O(\eps))\OPT$.

Putting together, we obtain the following result. The analysis is similar to that of $\CCF$. The running time is polynomial in $n^{O(r/\eps^2)}$.

\mainresultflmo*
% \begin{theorem}\label{thm:flmo}
%     There is randomized polynomial time $3.936$-approximation algorithm for the $\FLMO$ problem with fixed number of outlier classes.
% \end{theorem}

\subsection{Facility Location with Sum of Radii and Multiple Outliers}\label{sec:radii}

% Clustering problems are ubiquitious and different objectives have been considered in the literature.
% The Minimum Sum of Radii Clustering problem is one of them.
% In this problem we are given a finite metric space $(P,d)$ over $n$ points and an integer $k$. The goal is to find
% $k$-clusters to minimize the sum of the radii of the clusters. There is a generalization where one
% minimizes the sum of the $\gamma$ powers of the radii where $\gamma \ge 1$ is a fixed input parameter ($\gamma =1$ corresponds
% to sum of radii). These objectives have been well-studied and admit a constant factor approximation \cite{charikar01sum} 
% when $\gamma = 1$. A $(3+\eps)$-approximation has recently been obtained even with outliers \cite{Buchemetal23}.
% Several variants have been studied --- we refer the reader to \cite{Buchemetal23,friggstad2022improved,gibson2010metric,gibson2012clustering,charikar01sum} for pointers.

% 

In the Minimum Sum of Radii Clustering problem, we are given a finite metric space $(P,d)$ over $n$ points and an integer $k$. The goal is to find
$k$-clusters to minimize the sum of the radii of the clusters. There is a generalization where one
minimizes the sum of the $\gamma$ powers of the radii where $\gamma \ge 1$ is a fixed input parameter ($\gamma =1$ corresponds
to sum of radii). These objectives have been well-studied and admit a constant factor approximation \cite{charikar01sum} 
when $\gamma = 1$. A $(3+\eps)$-approximation has recently been obtained even with outliers \cite{Buchemetal23}.
Several variants have been studied --- we refer the reader to \cite{Buchemetal23,friggstad2022improved,gibson2010metric,gibson2012clustering,charikar01sum} for pointers.

In this section we consider the facility location version of this problem with multiple outliers, which we refer to as the Facility Location with Sum of Radii objective and Multiple Outliers ($\FLRMO$) problem; this formulation is independently interesting and
plays a role in the primal-dual algorithm for the $k$-clustering variants \cite{charikar01sum}.
The input consists of a set of facilities $F$, a set of clients $C = \cup_{k=1}^r C_k$, where $C_k$ denotes clients of color $k \in [r]$, and a metric space $(F\cup C, d)$. Each facility $i \in F$ has a non-negative opening cost $f_i$. For each color class $C_k$, we are given a coverage requirement $b_k$ where $1 \leq b_k \leq |C_k|$. Let $B(i, \rho)$ denote the subset of clients that lie within a ball of radius $\rho > 0$ that is centered at facility $i$ , i.e., $B(i, \rho) := \{ j \in C \mid d(i,j) \leq \rho \}$. 
The objective is to select a set of balls $\mathcal{B} = \{B(i,\rho_i) \mid i \in F' \subseteq F\}$, centered around a subset of facilities $F' \subseteq F$, such that (i) $\mathcal{B}$ satisfies the coverage requirement for each color class and (ii) the sum $\sum_{i \in F'} (f_i + \rho_i^\gamma)$, where $\gamma \geq 1$ and is a constant, is minimized. We refer to this problem as FL-Radii-Multi-Outliers. When there are no outliers \cite{charikar01sum} obtain a primal-dual based $3^\gamma$-approximation\footnote{As noted in \cite{chekuri2022covering}, Charikar and Panigrahy~\cite{charikar01sum}'s algorithm is for $\gamma = 1$, but with some standard ideas, one can adapt it to work for arbitrary $\gamma \geq 1$.}.
\cite{chekuri2022covering} uses this result as a black box to obtain an $O(3^\gamma + \log r)$-approximation. Although one cannot reduce it directly to CCF the high-level ideas behind CCF work. Here we consider the case when $r$ is a fixed constant.
Via the same approach as in \cite{chekuri2022covering}, we describe how to obtain the following result using our $\CCF$ framework.

\begin{theorem}\label{thm:flrmo}
    there is a randomized polynomial-time algorithm that given an instance of
    $\FLRMO$ with fixed $r$ and $\eps > 0$ yields a $(3^\gamma + 1)(\frac{e}{e-1})(1 + \eps)$-approximate solution.
\end{theorem}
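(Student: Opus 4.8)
The plan is to recognize $\FLRMO$ as an instance of (generalized) $\CCF$ and to run the algorithm of \Cref{sec:ccf} on it essentially unchanged, so that \Cref{thm:main-result-ccf} applies with $\beta = 3^\gamma$; since $\frac{e}{e-1}(1+\beta)(1+\eps) = \frac{e}{e-1}(1+3^\gamma)(1+\eps)$, this is exactly the claimed bound. First I would set up the set system: the universe is the client set $C$, and the collection of sets is $\{\,B(i,\rho) : i\in F,\ \rho \in \{\,d(i,j):j\in C\,\}\,\}$, where the ball $B(i,\rho)$ has cost $f_i + \rho^\gamma$. Only these $O(|F|\,|C|)$ radii are ever useful, so the set system has polynomial size and \ref{lp:fair-covering} can be written down and solved explicitly (there is no need for the separation-oracle machinery used for $\FLMO$). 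The $r$ constraints are the coverage requirements $\sum_{j \in C_k} z_j \ge b_k$, encoded by the $0/1$ matrix $A$ with $A_{kj} = \1[j \in C_k]$; this is the ColorSC special case of $\CCF$ and it correctly handles clients belonging to several color classes. Two easy facts complete the reduction: an optimal $\FLRMO$ solution induces a feasible integral solution of this instance of the same cost (each facility is used with exactly one radius, and a larger ball dominates a smaller one, so there is no loss), and conversely any integral solution can be converted to a valid $\FLRMO$ solution of no larger cost by keeping, for each facility, only the largest chosen ball.

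Next I would run Algorithm~\ref{alg:apx-ccf} on this instance, following the analysis of \Cref{sec:ccf-analysis} essentially line by line. Stage~1 guesses the $r\lceil\tfrac1\ell\ln\tfrac1\eps\rceil + r$ highest-cost balls from a fixed optimal solution (with $\ell = \tfrac{\eps^2}{2\ln(r/\eps)}$), an enumeration over $n^{O(r/\eps^2)}$ collections, and forms the residual instance by deleting all balls costlier than any guessed ball; as in \Cref{clm:ccf-lp-cost} this yields $\OPT_{\LP} \le \OPT_\l$. Stage~2 solves \ref{lp:fair-covering} on the residual instance to get $(\vecx^*,\vecz^*)$. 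In Stage~2$'$, clients with $z_j^* \ge (1-\tfrac1e)(1-\eps)$ are declared heavy; the restriction of $\scaleccf(\vecx^*,\vecz^*)$ to the heavy clients is feasible for the natural LP of the canonical (outlier-free) sum-of-$\gamma$-powers-of-radii facility location problem on those clients, so running the $3^\gamma$-approximation for that problem — which, being primal-dual, is an approximation relative to that LP — produces $\Sets_{he}$ with $c(\Sets_{he}) \le 3^\gamma \cdot \scaleccf \cdot \OPT_{\LP} \le 3^\gamma \cdot \scaleccf \cdot \OPT_\l$; here I use that the ball set system is deletion-closed, i.e. restricting the balls to any subset of clients again gives a ball system. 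For the shallow clients I scale $(\vecx^*,\vecz^*)$ by $\scaleccf$, so by \Cref{clm:ccf-extensions} the multilinear extension of each residual coverage function is at least $\tfrac{1}{1-\eps}$ times its residual requirement; \Cref{lem:additive-bound} (Stage~3) then rounds this to $\Sets_{sh}$ satisfying each color constraint with probability $\ge 1-\eps/r$ and with expected cost $\scaleccf\,\OPT_{\LP}$ plus an additive $r\lceil\tfrac1\ell\ln\tfrac1\eps\rceil c_{\max}$ term, which — exactly as in \Cref{lem:ccf-greedy-sets-bound} — is charged to $\OPT_\h$ through the Stage~1 guess. Stage~4 fixes any color constraint still unsatisfied (probability $\le \eps/r$ each) greedily via the $\textsc{Max-Budgeted-Cover}$ procedure of \Cref{lem:greedy-ccf-single}, contributing $O(\eps)\OPT$ in expectation. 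Summing exactly as in the proof of \Cref{thm:main-result-ccf} gives expected total cost $\le 2\OPT_\h + \scaleccf(3^\gamma+1)\OPT_\l + \eps\OPT \le \left(\frac{e}{e-1}\right)(3^\gamma+1)(1+\eps')\OPT$, and choosing the internal parameter small enough in terms of the target $\eps'$ yields the stated bound; the running time is $n^{O(r/\eps^2)}$, dominated by Stage~1.

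The step I expect to require the most care — the only one that is not a mechanical reinstantiation of \Cref{sec:ccf} — is the black-box ingredient used in Stage~2$'$: one must verify that the canonical (outlier-free) facility location problem with objective $\sum_{i\in F'}(f_i + \rho_i^\gamma)$ admits a $3^\gamma$-approximation \emph{relative to its natural LP relaxation}, and that this persists when the client set is restricted to an arbitrary subset (deletion-closedness). For $\gamma = 1$ this is precisely the primal-dual analysis of \textcite{charikar01sum}, whose solution cost is bounded against the dual (equivalently LP) value; for general constant $\gamma \ge 1$ one adapts the ball-growing and dual-update steps as noted in \cite{chekuri2022covering}, being careful that the factor $3^\gamma$ is measured against the LP optimum rather than the integral optimum. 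A secondary, routine point is to confirm that the rounding in Stage~3 may pick several balls around the same facility without harming feasibility of the final $\FLRMO$ solution — immediate, since larger balls subsume smaller ones — and that discarding the redundant ones only lowers the cost.
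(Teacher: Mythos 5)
Your proposal is correct and follows essentially the same route as the paper: reduce $\FLRMO$ to a (polynomial-size) $\CCF$ instance over the balls $B(i,\rho)$ with $\rho \in \{d(i,j):j\in C\}$, and invoke the framework of \Cref{sec:ccf} with $\beta=3^\gamma$, noting that deletion-closedness is only needed when covering the heavy clients. The paper's own argument is just a terse sketch of this same reduction, so your more detailed write-up (including the check that the $3^\gamma$ guarantee is relative to the natural LP and that redundant concentric balls can be discarded) is a faithful elaboration rather than a different proof.
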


We construct an instance of $\CCF$ from an instance of FL-Radii-Multi-Outliers as follows.
To construct the set system $(\Universe, \Sets)$ for our $\CCF$ instance, first set $\Universe = C$ and $\Universe_k = C_k$ for each $k \in [r]$. To define $\Sets$, we first describe for each facility $i$ a set of ``relevant'' (or feasible) radius lengths $R_i := \{ d(i,j) \mid j \in C\}$. With this, let $\Sets := \{ B(i,\rho) \mid i \in F, \rho \in R_i \}$. For each set $B(i, \rho_i) \in \Sets$, we can set the associated cost $c_{i,\rho_i} := f_i + \rho^\gamma$. The goal is to solve the following IP: %, we provide the following CCF-like IP that we must optimize: %Our goal now is to optimize the following CCF-like LP:

\begin{equation}\label{lp:mcc-multi-out}\tag{$\LP$-$\FLRMO$}
\begin{array}{cccc}
     & \min \sum_{B(i, \rho_i) \in \Sets} c_{i,\rho_i}x_{i,\rho_i} \\ 
 \text{s.t.}     & \sum_{(i, \rho_i) : j \in B(i, \rho_i)} x_{i,\rho_i} \geq z_j & \text{for all } j \in C \\ 
   &  \sum_{j \in C_k} z_j \geq   b_k & \text{for all } k \in [r]\\
   & z_j \in [0,1] & \text{for all } j \in C \\ 
    & x_{i,\rho_i} \in [0,1] & \text{for all } i \in F, \rho_i \in R_i
 \end{array}
 \end{equation}

As we noted there is a $\beta := 3^\gamma$ approximation via the LP if we wish to cover \emph{all} the clients in the universe. Note that the reason the problem is
not exactly a $\CCF$ instance is because the sets and their costs are implicitly defined by the balls and their radii. However, we mention that 
the only place we used the deletion closed property of the underlying set system is in fully covering the highly covered elements after solving the LP relaxation.
We can use the same approach here as well. The rest of the framework for $\CCF$ also applies and this yields the claimed result.

\printbibliography

\appendix 

\section{Improving Coverage for $\MSC$ Approximations} \label{sec:ext-msc}

In this section, we show how our algorithm \Cref{sec:msc} can be used to obtain a bi-criteria approximation in which coverage is improved at some expense to the cost. This result (restated below in \Cref{coll:main-result-msc}) is useful for scenarios where one wishes to achieve as close to full coverage as possible. In this setting, we are given an additional parameter $\alpha$ which helps to describe the extent of coverage required.

% least $(1-\sfrac{1}{e})$ may not be sufficient, and one wishes to achieve fuller coverage (i.e. greater than $(1-\sfrac{1}{e}$ that enhances the coverage of approximate solutions to $\MSC$ instances where there are a fixed constant number of subodular constraints.   In particular, we prove \Cref{coll:main-result-msc} (restated below) by describing an algorithm in \Cref{sec:ext-msc-alg}, we describe our algorithm, and conducting the subsequent analysis in \Cref{sec:ext-msc-analysis}. %Our analysis and the proof of , and conduct the analysis andwe Once again, we assume $r$ is a fixed constant. 

\mainresultmscgen* 

We will begin by describing the algorithm in \Cref{sec:ext-msc-alg}, and in \Cref{sec:ext-msc-analysis} we provide the subsequent analysis and prove \Cref{coll:main-result-msc}. % Note that this algorithm will not follow the framework from \Cref{sec:rf-framework}, but will instead work by repeatedly applying Algorithm \ref{alg:msc-single} (which does follow our framework) as a black-box to select elements to achieve enhanced coverage. 

\subsection{Algorithm Overview}\label{sec:ext-msc-alg}

We provide the pseudocode for our algorithm in Algorithm \ref{alg:msc-gen}, and provide more details about the various steps below. 

\begin{algorithm}[h!]
\caption{Pseudocode for Bi-criteria Approximation for $\MSC$ with improved coverage.}
\label{alg:msc-gen}
\DontPrintSemicolon
  \SetKwFunction{Define}{Define}
  \SetKwInOut{Input}{Input}\SetKwInOut{Output}{Output}
  \Input{An instance of $\MSC$ problem denoted by $\J = (N, \{ f_i, b_i \}_{i=1}^r, c)$ where $f_i$'s are normalized, parameters $\alpha \in \Nset^+$ and $\eps \in (0,1]$}
  \Output{Solution $S \subseteq N$ that satisfies \Cref{coll:main-result-msc}}
  \BlankLine
  Initialize $S := \emptyset$; rename $\J$ to $\J^{(1)}$. \label{alg:msc:init} \\ %Rename $\J$ to $\J^{(1)}$ and ensure that $f^{(1)}_i$'s are normalized. \\ \label{alg:msc:init}
  \For{$t \in [\alpha]$}{
    $S^{(t)} \gets$ result of running Algorithm \ref{alg:msc-single} on $\J^{(t)}$ using $\eps' := \min(1-\sfrac{2}{e}, \eps)$ \label{alg:msc:call-single} \\ % \singleMSC(\J^{(t)}, \eps^{(t)})$, where $\eps^{(t)} \in (0,1]$ \\ 
    $S \gets S \cup S^{(t)}$ \label{alg:msc:upd-S} \\ 
    $\J^{(t+1)} \gets$ residual instance of $\J$ with respect to $S$ (per \Cref{def:residual-msc})\label{alg:msc:upd-inst} \\% ; we can truncate each $f_i^{(t+1)}$ so that $f_i^{(t+1)}(N^{(t+1)}) = b^{(t+1)}$. \label{alg:msc:upd-inst}
    
     %per \Cref{def:residual-msc-gen}) of $\J^{(t)}$ after elements of $S^{(t)}$ are marked as selected.  \\ \label{alg:msc:upd-inst} % Ensure that the functions $f^{(t+1)}_i$'s are normalized.
   
  }
  \Return{S} \label{alg:msc:return}
\end{algorithm}

% We are given as input an arbitrary instance $\J$ of $\MSC$ as well as a parameters $\alpha \in \Nset^+$ and $\eps \in (0,1]$, which are used to describe how much coverage we require for our final solution. This algorithm accumulates a set $S$ over $\alpha$ calls to Algorithm \ref{alg:msc-single} ($\singleMSC$) that will ultimately satisfy the constraints stated in \Cref{coll:main-result-msc}. 

\paragraph{Line \ref{alg:msc:init}: Initializing the accumulator $S$.} We initialize a set $S$ to serve as the accumulator for the final solution. We also rename $\J$ to  $\J^{(1)}$ for clarity in the algorithm description and analysis. Also, like in \Cref{sec:msc} and without loss of generality, we may enforce that all functions in the instance are normalized, i.e., $f^{(1)}_i(N^{(1)}) = b^{(1)}_i$.  % We will use $\J^{(1)}$ to refer to the input instance (for clarity in the analysis) and 

\paragraph{Lines \ref{alg:msc:call-single}-\ref{alg:msc:upd-S}: Choosing elements for $S$ in $\alpha$ rounds.} We do the following procedure $\alpha$ times. In each round $t$, we use Algorithm \ref{alg:msc-single} on the instance $\J^{(t)}$ with parameter $\eps' := \min(1-\sfrac{2}{e}, \eps)$. % and for some $\eps^{(t)} \in (0,1]$. Note that we will differ our discussion about how $\eps^{(t)}$'s are to be set to the analysis, as it will be relevant to analyzing the runtime. 
The call to Algorithm \ref{alg:msc-single} will return a set of elements $S^{(t)}$ that will obtain $(1-\sfrac{1}{e}-\eps')$ coverage for functions in the instance $\J^{(t)}$, at cost $(1 + \eps')\OPT^{(t)}$, where $\OPT^{(t)}$ denotes the cost of the optimal solution to $\J^{(t)}$. After adding $S^{(t)}$ to $S$, we construct a residual instance $\J^{(t+1)}$ from $\J$ using the definition from \Cref{def:residual-msc}. We additionally normalize (or truncate) the restricted functions $f^{(t+1)}_i$ so $f^{(t+1)}_i(N^{(t+1)}) = b^{(t+1)}_i$. For $t < \alpha$, $\J^{(t+1)}$ is used to select elements to add to $S$ in the next round. 

% \begin{definition}[Residual Instance $\J^{(t+1)}$ of $\MSC$]\label{def:residual-msc-gen}

% Suppose we have an original instance $\J = \J^{(1)} = (N, \{f_i, b_i\}_{i \in [r]}, c)$. A residual instance $\J^{(t+1)} = (N^{(t+1)}, \{ f^{(t+1)}_i, b^{(t+1)}_i\}_{i=1}^r, c^{(t+1)})$, constructed from a restricted instance $\J^{(t)}$ with respect to a set $X \subseteq N^{(t)}$ and set $S \subseteq N$ of elements taken, is defined as follows
% \begin{itemize}
%     \item  $N^{(t+1)} = N^{(t)} - X$,
%     \item $f^{(t+1)}_i(A) = f^{(t)}_{i\mid X}(A)$, $\forall A \subseteq N^{(t+1))}$ and $b^{(t+1))}_i = b_i - f_i(S \cup X)$ $\forall i \in [r]$, and 
%     \item $c^{(t+1)} = c^{(t)}_{\mid N^{(t)}}$
% \end{itemize}
% % , and a set 
% %       Given an instance $\J = (N, \{f_i, b_i\}_{i \in [r]}, c)$ and a set $X \subseteq N$ of elements (which we intend to include as part of a solution for $\J$), we define a residual instance $\J^{(\cdot)} = (N^{(\cdot)}, \{ f^{(\cdot)}_i, b^{(\cdot)}_i\}_{i=1}^r, c^{(\cdot)})$ as follows: 
% %  $N^{(\cdot)} = N - X$; $f^{(\cdot))}_i(A) = f_{i\mid X}(A)$, $\forall A \subseteq N^{(\cdot))}$ and $b^{(\cdot))}_i = b_i - f_i(X)$ $ \forall i \in [r]$; and $c^{(\cdot)} = c_{\mid X}$. 
%  Recall that we may assume that $b^{(t+1)}_i \geq 0$ (since the value of $f_i$ will never exceed $b_i$ per our initial assumptions/truncation). Also, $f^{(t+1)}_i$ may be similarly truncated to ensure that $f^{(t+1)}_i(N^{(t+1)}) = b^{(t+1)}_i$.
% \end{definition} 

% that For this algorithm, we use a slightly adjusted definition of restricted instance, which we state below. 

The algorithm will output $S = \cup_{t=1}^\alpha S^{(t)}$). In the next section, we show how this obtains the desired coverage and cost bounds.

\subsection{Analysis}\label{sec:ext-msc-analysis}

Recall that in Algorithm \ref{alg:msc-single}, we only needed to construct a single cost-truncated residual instance of the input instances $\J$. Now, in Algorithm \ref{alg:msc-gen}, we are constructing $\alpha$ different residual instances of $\J$. As such, we first prove the following simple claim. 

\begin{claim}\label{clm:res-msc-gen}
    Let $\J$ be an instance of $\MSC$ whose optimal solution has cost $\OPT$. Suppose $\J'$ is the residual instance of $\J$ with respect to some $X \subseteq N$, and let $\OPT'$ denote the cost of its optimal solution. Then, $\OPT' \leq \OPT$. 
\end{claim}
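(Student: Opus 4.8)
The plan is to exhibit an explicit feasible solution to $\J'$ whose cost is at most $\OPT$; this immediately yields $\OPT' \le \OPT$. First I would fix an optimal solution $S^\star \subseteq N$ to $\J$, so that $f_i(S^\star) \ge b_i$ for every $i \in [r]$ and $c(S^\star) = \OPT$. The natural candidate solution for $\J'$ is $S^\star \setminus X$, which lies in the residual ground set $N' = N \setminus X$.

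Next I would verify that $S^\star \setminus X$ is feasible for $\J'$. By \Cref{def:residual-msc}, the $i$-th residual constraint requires $f'_i(S^\star \setminus X) = f_{i \mid X}(S^\star \setminus X) = f_i\big((S^\star \setminus X) \cup X\big) - f_i(X) \ge b'_i = b_i - f_i(X)$. Since $(S^\star \setminus X) \cup X \supseteq S^\star$, monotonicity of $f_i$ gives $f_i\big((S^\star \setminus X)\cup X\big) \ge f_i(S^\star) \ge b_i$, hence $f'_i(S^\star \setminus X) \ge b_i - f_i(X) = b'_i$, as needed. (If $b'_i \le 0$ the constraint holds trivially because $f'_i$ is non-negative; and if the algorithm has additionally truncated $f'_i$ at $b'_i$, the set of feasible solutions is unchanged, so the same conclusion applies.)

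Finally, since costs are non-negative and $S^\star \setminus X \subseteq S^\star$, we have $c(S^\star \setminus X) \le c(S^\star) = \OPT$. As $S^\star \setminus X$ is a feasible solution to $\J'$, it follows by definition of $\OPT'$ that $\OPT' \le c(S^\star \setminus X) \le \OPT$, which is the claim.

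There is essentially no serious obstacle here; the two points requiring a moment of care are (i) invoking monotonicity (not submodularity) of $f_i$ to pass from $f_i(S^\star)$ to $f_i\big((S^\star \setminus X) \cup X\big)$, and (ii) observing that the optional normalization/truncation step performed inside Algorithm \ref{alg:msc-gen} does not alter which sets are feasible, so the inequality transfers to the truncated residual instance used by the algorithm. This claim will then be applied iteratively along the $\alpha$ rounds of Algorithm \ref{alg:msc-gen} to bound each $\OPT^{(t)}$ by $\OPT$.
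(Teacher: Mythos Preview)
Your proposal is correct and follows essentially the same approach as the paper: both take $S^\star \setminus X$ as a feasible solution to $\J'$ and conclude $\OPT' \le c(S^\star \setminus X) \le \OPT$. Your version is in fact more careful than the paper's, which asserts feasibility of $S^\star \setminus X$ without explicitly invoking monotonicity or addressing the truncation step.
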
 

\begin{proof}
    Let $S^*$ denote an optimal solution to $\J$, and let $Y = S^* \setminus X$. We claim that $Y$ is a feasible solution for $\J'$, since it contains all the elements that satisfy constraints in $\J$ and thus will satisfy any remaining unsatisfied constraints for $\J'$. Since $Y$ is a feasible solution for $\J'$, then $\OPT' \leq c(Y) \leq c(S^*) = \OPT$, which proves the claim. % Thus, we know that $\OPT'$ will be at most $c'(Y )$ (which is equal to $c(Y)$). Hence, since $\OPT' \leq c(Y)$ and $Y \subseteq S^*$, then $\OPT' \leq c(Y) \leq c(S^*) = \OPT$.  
\end{proof}

We can now prove \Cref{coll:main-result-msc}.

\begin{proof}[Proof of \Cref{coll:main-result-msc}]
    We proceed via induction, and by first assuming that $\eps \leq 1 - \sfrac{2}{e}$ (later we show why this assumption is not actually necessary). More precisely, for all $\alpha \in \Nset^+$, we will prove the following claim which we denote as $P(\alpha)$: for $\alpha$ and a fixed $\eps \in (0,1-\sfrac{2}{e}]$, given an instance of $\MSC$,  Algorithm \ref{alg:msc-gen} will return a solution $S$ s.t. $\forall i \in [r]$, $f_i(S) \geq (1-e^{-\alpha} - \eps)b_i$ and $\E[c(S)] = \alpha(1 + \eps)\OPT$, where $\OPT$ denotes the cost of the optimal solution to $\J$. 
    
    Since we are assuming that $\eps \leq 1 - \sfrac{2}{e}$, we know that $\eps'$ in Line \ref{alg:msc:call-single} will be set to $\eps$. Our base case, i.e. when $\alpha = 1$, is equivalent to \Cref{thm:main-result-msc}, since Algorithm \ref{alg:msc-gen} will simply run one iteration of $\singleMSC$ before returning $S = S^{(1)}$. 
    
    Assume that for all $t$ s.t $1 \leq t < \alpha$, $P(t)$ holds. %More precisely, for some fixed $\eps_{prev} \in (0,1]$ Algorithm \ref{alg:msc-gen} will return a solution $S$ s.t. $\forall i \in [r]$, $f_i(S) \geq (1-e^{-t} - \eps_{prev})b_i$ and $\E[c(S)] = t(1 + \eps_{prev})\OPT$. Here, we use $\eps_{prev}$ to emphasize that the result of the previous round need not be for $\eps$ but can be for an intermediary parameter $\eps_{prev}$.   %s will actually be some function of the parameters $\eps^{(t)}$ chosen for each call to Algorithm \ref{alg:msc-single} in previous rounds $1$ to $\alpha-1$ of the for loop. 
    We will now show that $P(\alpha)$ holds. Consider the for-loop in Algorithm \ref{alg:msc-gen}, and in particular consider the state of the algorithm after the for loop has run for $\alpha-1$ rounds. At this point, $S = \cup_{t=1}^{\alpha-1} S^{(t)}$, and $\J^{(\alpha)}$ has been initialized as the residual instance of $\J$ with respect to this $S$. Per our induction hypothesis, we know that $S$ satisfies the claim $P(\alpha-1)$. Now we are left to show that $S \cup S^{(\alpha)}$ where $S^{(\alpha)}$ will be constructed in the final round of the for loop, satisfies the claim $P(\alpha)$. 
    
    First, we show that $S \cup S^{(\alpha)}$ achieves sufficient coverage:
    \begin{align*}
        f_i(S \cup S^{(\alpha)}) &= f_i(S) + f_i(S^{(\alpha)} \mid S) \\
        &=  f_i(S) +f_i^{(\alpha)}(S^{(\alpha)}) \\ 
        &\geq f_i(S) + (1-\sfrac{1}{e}-\eps)b^{(\alpha)}_i &&\text{(by \Cref{thm:main-result-msc})}\\
        &= f_i(S) + (1-\sfrac{1}{e}-\eps)(b_i-f_i(S)) \\
        &= (1-\sfrac{1}{e}-\eps)b_i + (\sfrac{1}{e} + \eps)f_i(S) \\
        &\geq (1-\sfrac{1}{e}-\eps)b_i + (\sfrac{1}{e} + \eps)(1-e^{1-\alpha} - \eps)b_i &&\text{(by induction hypothesis)}\\
        &= (1 - e^{-\alpha} - (\sfrac{\eps}{e} + e^{1-\alpha}\eps + \eps^2))b_i \\
        &\geq (1 - e^{\alpha} - \eps)b_i 
    \end{align*}

   % The first inequality is due to the guarantee given in \Cref{thm:main-result-msc} for a single iteration of $\singleMSC$. The second inequality comes from utilizing the induction hypothesis. 
    The last inequality follows from the fact that $\eps \geq \sfrac{\eps}{e} + e^{1-\alpha}\eps + \eps^2$ holds when $\eps \leq 1 - \sfrac{1}{e} - e^{1-\alpha}$, which is at most $1 - \sfrac{2}{e}$ when $\alpha \geq 2$. %  (and where the analysis is equivalent to the above). Since $\eps' \leq \eps$, 

    %will hold when $\eps_{prev}$ and $\eps^{(\alpha)}$ are set appropriately, i.e., when they are set s.t. \[\eps \geq \sfrac{\eps_{prev}}{e} + e^{1-\alpha}\eps^{(\alpha)} + \eps_{prev}\eps^{(\alpha)}\] and $\eps_{prev},\eps^{(\alpha)} < \eps$ (such a setting is possible when $\eps \leq 1 - \sfrac{1}{\eps}$ and $\alpha \geq 2$). See \Cref{rem:eps} for a more detailed explanation of how to set $\eps^{(t)}$ and $\eps_{prev}$ for intermediary rounds to ensure that by the final round, we have coverage and cost wrt to $\eps$ from the problem input. % More detailed backtracking will illuminate how to set each $\eps^{(t)}$ in each round to achieve the desired coverage. % acheive coverage with respect to the original $\eps$. coverageorder to ensure that  $  It is important to note that $\eps'$ will also be affected by depend on previous choices of $\eps^{(t)}$'s, and thus one must be careful in choosing $\eps^{(t)}$ at each round.

    % $\eps' \in O(\eps)$. The first equality holds since $f_i^{(\alpha)}$ is the marginal value of $f_i$ given that elements of $S$ are chosen (note that this is because $\J^{(\alpha)}$ can not only be thought of as the residual instance of $\J^{(\alpha-1)}$ once $S^{(\alpha-1)}$ is selected, but also as the residual instance of $\J$ once $S$ is selected). From there, we utilize \Cref{def:residual}, \Cref{coll:msc}, and our induction hypothesis to achieve the final result. 

    Next, we show that desired cost bounds are also maintained. Let $\OPT$ be the optimal cost of $\J$, and $\OPT^{(t)}$ denote the optimal cost of $\J^{(t)}$. Per \Cref{clm:res-msc-gen}, we know that $\OPT^{(t-1)} \geq \OPT^{(t)}$. Now, per our induction hypothesis we know that $\E[c(S)] \leq (\alpha-1)(1+\eps)\OPT$. Similarly, by applying \Cref{thm:main-result-msc} to the final residual instance $\J^{(\alpha)}$, we know that $\E[c^{(\alpha)}(S^{(\alpha)})] \leq (1 + \eps)\OPT^{(\alpha)}$ (where $\OPT^{(\alpha)}$ denotes the cost of the optimal solution for residual instance $\J^{(\alpha)}$). Thus, we have
    \begin{align*}
        \E[c(S \cup S^{(\alpha)})] &= \E[c(S)] + \E[c^{(\alpha)}(S^{(\alpha)})] \\
        &\leq (\alpha-1)(1+\eps)\OPT + (1+\eps)\OPT^{(\alpha)} \\
        &\leq \alpha(1 + \eps)\OPT 
    \end{align*}

    This shows that the desired coverage and cost bounds are achieved when $\eps \leq 1 - \sfrac{2}{e}$. We now discuss the case where $\eps > 1 - \sfrac{2}{e}$. Notice that in Line \ref{alg:msc:call-single}, we run Algorithm \ref{alg:msc-single} using $\eps' = \min(1-\sfrac{2}{e}, \eps)$. In other words, if $\eps$ is too large, we run the algorithm with a smaller $\eps'$, in which case the above analysis goes through. In doing so, our algorithm will return a set $S$ where $\forall i \in [r]$, $f_i(S) \geq (1 - e^{-\alpha} - \eps')b_i \leq (1 - e^{-\alpha} - \eps)b_i$ and $\E[c(S)] \leq \alpha(1 + \eps')\OPT \leq \alpha(1 + \eps)\OPT$, and thus still satisfies the claim.

\end{proof}

\end{document}